\newcommand{\pathToCommon}{.}
\newcommand{\pathToCommonFigs}{.}
\newcommand{\R}{\mathbb{R}}
\newcommand{\Z}{\mathbb{Z}}
\newcommand{\ep}{\varepsilon}
\renewcommand{\Pr}[1]{\mathbf{P} \left( {#1} \right)}
\newcommand{\E}[1]{\mathbf{E} \left( {#1} \right)}
\renewcommand{\O}[1]{O \left( {#1} \right) }
\renewcommand{\P}{\boldsymbol{\mathcal{P}}}
\newcommand{\Q}{\boldsymbol{\mathcal{Q}}}
\newcommand{\F}{\boldsymbol{\mathcal{F}}}
\newcommand{\mcZ}{\mathcal{Z}}
\newcommand{\hvphi}{{\hat{\varphi}}}
\newcommand{\reliable}[2]{ {R^{#1}(#2)} }
\newcommand{\reliablestat}[3]{ {R^{#2}_{#1}(#3)} }
\newcommand{\interpretable}[2]{ {I^{#1}(#2)} }
\newcommand{\interpretablestat}[3]{ {I^{#2}_{#1}(#3)} }
\newcommand{\powerfunc}[2]{ {K^{#2}_{#1}} }
\newcommand{\MIC}{\textnormal{MIC}}
\newcommand{\popMIC}{{\mbox{MIC}_*}}
\newcommand{\MICestE}{{\mbox{MIC}_e}}
\newcommand{\MICestD}{{\mbox{MIC}_d}}
\newcommand{\proofwaypoint}[1]{ {\em {#1}:} }
\newtheorem{thm}{Theorem}[section]
\newtheorem*{thm*}{Theorem}
\newtheorem{prop}[thm]{Proposition}
\newtheorem{lemma}[thm]{Lemma}
\newtheorem{cor}[thm]{Corollary}
\newtheorem{rmk}[thm]{Remark}
\theoremstyle{definition}
\newtheorem{definition}[thm]{Definition}
\title{Theoretical Foundations of Equitability\\and the Maximal Information Coefficient \footnote{This manuscript is subsumed by \cite{reshef2015equitability} and \cite{reshef2015estimating}. Please cite those papers instead.}}
\author{
Yakir A. Reshef\thanks{School of Engineering and Applied Sciences, Harvard University. \href{mailto:yakir@seas.harvard.edu}{\nolinkurl{yakir@seas.harvard.edu}}.}
\and
David N. Reshef\thanks{Department of Electrical Engineering and Computer Science, Massachusetts Institute of Technology. \href{mailto:dnreshef@mit.edu}{\nolinkurl{dnreshef@mit.edu}}.}
\and
Pardis C. Sabeti\thanks{Department of Organismic and Evolutionary Biology, Harvard University. \href{psabeti@oeb.harvard.edu}{\nolinkurl{psabeti@oeb.harvard.edu}}.}
\and
Michael Mitzenmacher\thanks{School of Engineering and Applied Sciences, Harvard University. \href{michaelm@eecs.harvard.edu}{\nolinkurl{michaelm@eecs.harvard.edu}}.}
}
\begin{document}

\maketitle
\begin{abstract}
The maximal information coefficient ($\MIC$) is a tool for finding the strongest pairwise relationships in a data set with many variables \cite{MINE}. $\MIC$ is useful because it gives similar scores to equally noisy relationships of different types. This property, called {\em equitability}, is important for analyzing high-dimensional data sets.

Here we formalize the theory behind both equitability and $\MIC$ in the language of estimation theory. This formalization has a number of advantages. First, it allows us to show that equitability is a generalization of power against statistical independence. Second, it allows us to compute and discuss the population value of $\MIC$, which we call $\popMIC$. In doing so we generalize and strengthen the mathematical results proven in \cite{MINE} and clarify the relationship between $\MIC$ and mutual information. Introducing $\popMIC$ also enables us to reason about the properties of $\MIC$ more abstractly: for instance, we show that $\popMIC$ is continuous and that there is a sense in which it is a canonical ``smoothing" of mutual information. We also prove an alternate, equivalent characterization of $\popMIC$ that we use to state new estimators of it as well as an algorithm for explicitly computing it when the joint probability density function of a pair of random variables is known. Our hope is that this paper provides a richer theoretical foundation for $\MIC$ and equitability going forward.

This paper will be accompanied by a forthcoming companion paper that performs extensive empirical analysis and comparison to other methods and discusses the practical aspects of both equitability and the use of $\MIC$ and its related statistics.
\end{abstract}

\section{Introduction}

Suppose we have a data set with hundreds or thousands of variables, and we wish to find the strongest pairwise associations in that data set. The number of pairs of will be in the hundreds of thousands, or even millions, and so manually examining each pairwise scatter plot is out of the question. In such a context, one commonly taken approach is to compute some statistic on each of these pairs of variables, to rank the variable pairs from highest- to lowest-scoring, and then to examine only the top of the resulting list.

The results of this approach depend heavily on the chosen statistic. In particular, suppose the statistic is a measure of dependence, meaning that its population value is non-zero score exactly in cases of statistical dependence. Even with such a guarantee, the magnitude of the this non-zero score may depend heavily on the type of dependence in question, thereby skewing the top of the list toward certain types of relationships over others. For instance, a statistic may give non-zero scores to both linear and sinusoidal relationships; however, if the scores of the linear relationships are systematically higher, then using this statistic to rank variable pairs in a large data set will cause the many linear relationships in the data set to crowd out any potential sinusoidal relationships from the top of the list. This means that the human examining the top of the list will effectively never see the sinusoidal relationships.

This shortcoming is not as concerning in a hypothesis testing framework: if what we sought were a comprehensive list of all the non-trivial associations in a data set, then all we would care about would be that the sinusoidal relationships are {\em detected} with sufficient power so that we could reject the null hypothesis. Many excellent methods exist that allow one to test for independence in this way in various settings \cite{szekely2009brownian, renyi1959measures, bryc2002maxcorr, lopez2013randomized, gretton2005measuring, Kraskov, heller2013consistent}. However, often in data exploration the goal is to identify a relatively small set of strongest associations within a dataset as opposed to finding as many non-trivial associations as possible, which often are too many to sift through. What is needed then is a measure of dependence whose values, in addition to allowing us to {\em identify} significant relationships (i.e. reject a null hypothesis of independence), also allow us to measure the {\em strength} of relationships (i.e. estimate an effect size).

With the goal of addressing this need, we introduced in ~\cite{MINE} a notion called {\em equitability}: an equitable measure of dependence is one that assigns similar scores to relationships with equal noise levels, regardless of relationship type. This notion is notably imprecise -- it does not specify, for example, which relationship types are covered nor what is meant by ``noise" or ``similar". However, we noted that in the case of functional relationships, one reasonable definition of equitability might be that the value of the statistic reflect the coefficient of determination ($R^2$) of the data with respect to the regression function with as weak a dependence as possible on the function in question. Additionally, though characterizing noise in the case of superpositions of several functional relationships is difficult, it seems reasonable to require that the statistic give a perfect score when the functional relationships being superimposed are noiseless. We then introduced a statistic, the maximal information coefficient ($\MIC$), that behaves more equitably on functional relationships than the state of the art and also has the desired behavior on superpositions of functional relationships, given sufficient sample size.

Although $\MIC$ has enjoyed widespread use in a variety of disciplines \cite{koren2012host, maurice2013xenobiotics, qu2012bovine, bindea2013cluepedia, malod2012characterizing, das2012genome, anderson2012ranking, lin2012maximal, riccadonna2012dtw, sagl2012ubiquitous}, the original paper on equitability and $\MIC$ has generated much discussion, both published and otherwise, including some concerns and confusions. Perhaps the most frequent concern that we have heard is the desire for a richer and more formal theoretical framework for equitability, and this is the main issue we address in this paper.  In particular, we provide a formal definition for equitability that is sufficiently general to allow us to state in one unified language several of the variants of the original concept that have arisen.  We use this language to discuss the result of Kinney and Atwal about the impossibility of perfect equitability in some settings \cite{kinney2014equitability}, explaining its limitations based on the permissive underlying noise model that it requires, and on the strong assumption of perfect equitability that it makes. (See Section~\ref{subsubsec:kinney}.) We also use the formal definition of equitability to clarify the relationship between equitability and statistical power by proving an equivalent characterization of equitability in terms of power against a certain set of null hypotheses. Specifically, we show that whereas typical measures of dependence are analyzed in terms of the power of their corresponding tests to distinguish statistical independence from non-trivial associations, an equitable statistic is one that yields tests that can distinguish finely between relationships of different strengths that may {\em both} be non-trivial. We then explain how this relates to raised concerns about the power of $\MIC$ \cite{simon2012comment}. (See Section~\ref{sec:equitAndPower}.)

Following our treatment of equitability, we show that $\MIC$ can be viewed as a consistent estimator of a population quantity, which we call $\popMIC$, and give a closed-form expression for it. This has several benefits. First, the consistency of $\MIC$ as an estimator of $\popMIC$, together with properties of $\popMIC$ that are easy to prove given the closed-form expression, trivially subsumes and generalizes many of the theorems proven in \cite{MINE} about the properties of $\MIC$. Second, it clarifies that the parameter choices in $\MIC$ are not fundamental to the definition of the estimand ($\popMIC$) but rather simply control the bias and variance of the estimator ($\MIC$). Third, separating finite-sample effects from properties of the estimand $\popMIC$ allows us to rigorously discuss the theoretical relationship between $\popMIC$ and mutual information. And finally, since power is a property of the test corresponding to a statistic at finite sample sizes and not of the population value of the statistic, this re-orientation allows us to ask whether there exist estimates of $\popMIC$ other than $\MIC$ that retain the relative equitability of $\MIC$ but also result in better power against statistical independence. It turns out that there do, as we shall soon discuss.

Having a closed-form expression the population value of $\MIC$ (i.e. $\popMIC$) also allows us to reason about it more abstractly, and this is the goal to which we devote the remainder of the paper. We first show that, considered as a function of probability density functions, $\popMIC$ is continuous. This further clarifies the relationship with mutual information by allowing us to view $\popMIC$ as a ``minimally smoothed" version of mutual information that is uniformly continuous. (In contrast, mutual information alone is not continuous in this sense.)

Our theory also yields an equivalent characterization of $\popMIC$ that allows us to develop better estimators of it. The expression for $\popMIC$ given at the beginning of this paper, which is analogous to the expression for $\MIC$, defines it as the supremum of a matrix called the {\em characteristic matrix}. We show here that $\popMIC$ can instead be viewed as the supremum only of the {\em boundary} of that matrix. This is theoretically interesting, but it is also practically important, because computing elements of this boundary is easier than computing elements of the original matrix. In particular, our equivalent characterization of $\popMIC$ leads to the following advances.

\begin{itemize}
\item A new consistent estimator of $\popMIC$, which we call $\MICestE$, and an exact, efficient algorithm for computing it. (The algorithm introduced in \cite{MINE} for computing $\MIC$ is only a heuristic.)
\item An approximation algorithm for computing the $\popMIC$ of a given probability density function. Previously only heuristic algorithms were known \cite{MINE, lee2013resolution}. Having such an algorithm enables the evaluation of different estimators of $\popMIC$ as well as the evaluation of properties of $\popMIC$ in the infinite-data limit.
\item An estimator of $\popMIC$, which we call $\MICestD$, that proceeds by using a consistent density estimator to estimate the probability density function that gave rise to the observed samples, and then applying the above algorithm to compute the true $\popMIC$ of the resulting probability density function. This approach may prove more accurate in cases where we can encode into our density estimator some prior knowledge about the types of distributions we expect.
\end{itemize}

This paper will be accompanied by a forthcoming companion paper that performs extensive empirical analysis of several methods, including comparisons of $\MIC$ and related statistics. Among other things, that paper show that the first of the two estimators introduced here, $\MICestE$, yields a significant improvement in terms of equitability, power, bias/variance properties, and runtime over the original statistic introduced in~\cite{MINE}. The companion paper also compares both $\MICestE$ as well as the original statistic from~\cite{MINE} to existing methods along these same criteria and discusses when equability is a useful desideratum for data exploration in practice. Thus, questions concerning the performance of $\MIC$ (as well as other estimates of $\popMIC$) compared to other methods at finite sample sizes are deferred to the companion paper, while this paper focuses on theoretical issues. Our hope is that these two papers together will lay a richer theoretical foundation on which others can build to improve our knowledge of equitability and $\popMIC$, and to advance our understanding of when equitability and estimation of $\popMIC$ are useful in practice.

\section{Equitability}
\label{sec:equitability}
Equitability has been described informally as the ability of a statistic to ``give similar scores to equally noisy relationships of different types" \cite{MINE}. Here we provide the formalism necessary to discuss this notion more rigorously and define equitability using the language of estimation theory. Although what we are ultimately interested in is the equitability of a statistic, we first define equitability and discuss variations on the definition in the setting of random variables. Only then do we adapt our definitions to incorporate the uncertainty that comes with working with finite samples rather than random variables.

\subsection{Overview}
\label{subsec:equitOverview}
Before we formally define equitability in full generality, we first give a semi-formal overview of how we will do so, as well as a brief discussion of the benefits of our approach.

In \cite{MINE}, we asked to what extent evaluating a statistic like $\MIC$ on a sample from a noisy functional relationship with joint distribution $\mcZ$ tells us about the coefficient of determination ($R^2$) of that relationship\footnote{Recall that for a pair of jointly distributed random variables $(X, Y)$ with regression function $f$, $R^2$ is the squared Pearson correlation coefficient between $f(X)$ and $Y$}. However, this setup can be generalized as follows. We have a statistic $\hvphi$ (e.g. $\MIC$) that detects any deviation from statistical independence, a set $\Q$ of distinguished {\em standard relationships} on which we are able to define what we mean by noise (e.g. noisy functional relationships), and a {\em property of interest} $\Phi : \Q \rightarrow [0,1]$ that quantifies the noise in those relationships (e.g. $R^2$). We now ask: to what extent will evaluating $\hvphi$ on a sample from some joint distribution $\mcZ \in \Q$ tell us about $\Phi(\mcZ)$?

How will we quantify this? Let us step back for a moment and suppose that finite sample effects are not an issue: we will consider the population value $\varphi$ of $\hvphi$ and discuss its desired behavior on distributions $\mcZ$. In this setting, we will want $\varphi$ to have the following two properties.
\begin{enumerate}
\item It is a measure of dependence. That is, $\varphi(\mcZ) = 0$ if and only if $\mcZ$ exhibits statistical independence.
\item For every fixed number $y$ there exists a small interval $A$ such that $\varphi(\mcZ) = y$ implies $\Phi(\mcZ) \in A$. In other words, the set $\Phi \left( \varphi^{-1} \left( \{y\} \right) \right)$ is small.
\end{enumerate}
Assuming that $\varphi$ satisfies the first criterion, we will define its equitability on $\Q$ with respect to $\Phi$ as the extent to which it satisfies the second. A stronger version of this definition, which we will call {\em perfect equitability}, adds the requirement that the interval $A$ be of size $0$, i.e. that $\Phi(\mcZ)$ be exactly recoverable from $\varphi(\mcZ)$ regardless of the particular identity of $\mcZ \in \Q$. Note, however, that this is strictly a special case of our definition, and in general when we discuss equitability we are explicitly acknowledging the fact that this may not be the case.

The notion of equitability we just described for $\varphi$ will then have a natural extension to a statistic $\hvphi$. The extension will proceed in the same way that one might define a confidence interval of an estimator: for a fixed number $y$, instead of considering the distributions $\mcZ$ for which $\varphi(\mcZ) = y$ exactly, we will consider the distributions $\mcZ$ for which $y$ is a likely result of computing $\hvphi(z_1, \ldots, z_n)$, where $z_1, z_2, ..., z_n \overset{iid}{\sim} \mcZ$.

In Section~\ref{sec:equitAndPower} we will use this formalization of equitability to give an alternate, equivalent definition in terms of statistical power against a certain set of null hypotheses.

Though \cite{MINE} focused primarily on various models of noisy functional relationships with $R^2$ as the property of interest, the appropriate definitions of $\Q$ and $\Phi$ may change from application to application. For instance, as we have noted previously, when $\Q$ is the set of all superpositions of {\em noiseless} functional relationships and $\Phi \equiv 1$, then the population $\MIC$ (i.e. $\popMIC$) is perfectly equitable. More generally, instead of functional relationships one may be interested in relationships supported on one-manifolds, with added noise. Or perhaps instead of $R^2$ one may decide to focus simply on the magnitude of the added noise, or on the mutual information between the sampled y-values and the corresponding de-noised y-values. In each case the overarching goal should be to have $\Q$ be as large as possible without making it impossible to define an interesting $\Phi$ or making it impossible to find a measure of dependence that achieves good equitability on $\Q$ with respect to this $\Phi$. For this reason, we keep our exposition on equitability generic, and use noisy functional relationships and $R^2$ only as as a motivating example.

Keeping our exposition generic also allows us to address variations on the concept of equitability that have been introduced by others. For example, we are able to state in a formal, unified language the relationship of the work of Kinney and Atwal \cite{kinney2014equitability} to our previous work on $\MIC$. In particular, we explain why their negative result about the impossibility of achieving perfect equitability is of limited scope due to its focus on perfect equitability and to the setting of $\Q$ that it requires. (See Section~\ref{subsubsec:kinney} for this discussion.)

As a matter of record, we wish to clarify at this point that the key motivation given for Kinney and Atwal's work, namely that our original paper \cite{MINE} stated that $\MIC$ was perfectly equitable, is incorrect. Specifically, they write ``The key claim made by Reshef et al. in arguing for the use of MIC as a dependence measure has two parts. First, MIC is said to satisfy not just the heuristic notion of equitability, but also the mathematical criterion of $R^2$-equitability...", with the latter term referring to perfect equitability~\cite{kinney2014equitability}. However, such a claim was never made in \cite{MINE}. Rather, that paper informally defined equitability as an approximate notion and compared the equitability of $\MIC$, mutual information estimation, and other schemes empirically, concluding that $\MIC$ is the most equitable statistic in a variety of settings. In other words, one method can be more equitable than another, even if neither method is perfectly equitable. We intend for the formal definitions we present in this section to lead to a clearer picture of the relationships among these concepts and among the results published about them.

\subsection{Preliminaries: interpretability and reliability}
Let $\P$ be the set of distributions over $\R^2$, and let $\varphi : \P \rightarrow [0,1]$ be a mapping such that for $\mcZ \in \P$ describing a pair of jointly distributed random variables, $\varphi(\mcZ) = 0$ if and only if $X$ and $Y$ are statistically independent. Such a map $\varphi$ is called a {\em measure of dependence}.

Now let $\Q = \{\mcZ_\theta : \theta \in \Theta \} \subset \P$ be some subset of $\P$ indexed by a parameter $\theta \in \Theta$, and let $\Phi : \Q \rightarrow [0,1]$ be some property that is defined on $\Q$ but may not be defined on all of $\P$. We ask the following question: to what extent does knowing $\varphi(\mcZ)$ for some $\mcZ \in \Q$ tell us about the value of $\Phi(\mcZ)$? We will refer to the members of $\Q$ as {\em standard relationships}, and to $\Phi$ as the {\em property of interest}.

Conventionally, noisy functional relationships have been used as standard relationships, and the corresponding property of interest has been $R^2$ with respect to the regression function. However, as noted above we might imagine different scenarios. For this reason, we will make our exposition as generic as possible and refer back to the setting of noisy functional relationships as a motivating example.

\begin{figure}[h]
	\centering
	\begin{tabular}[c]{cc}
        \includegraphics[clip=true, trim = 0in 0in 0in 0in, width=0.4\textwidth]{\pathToCommonFigs/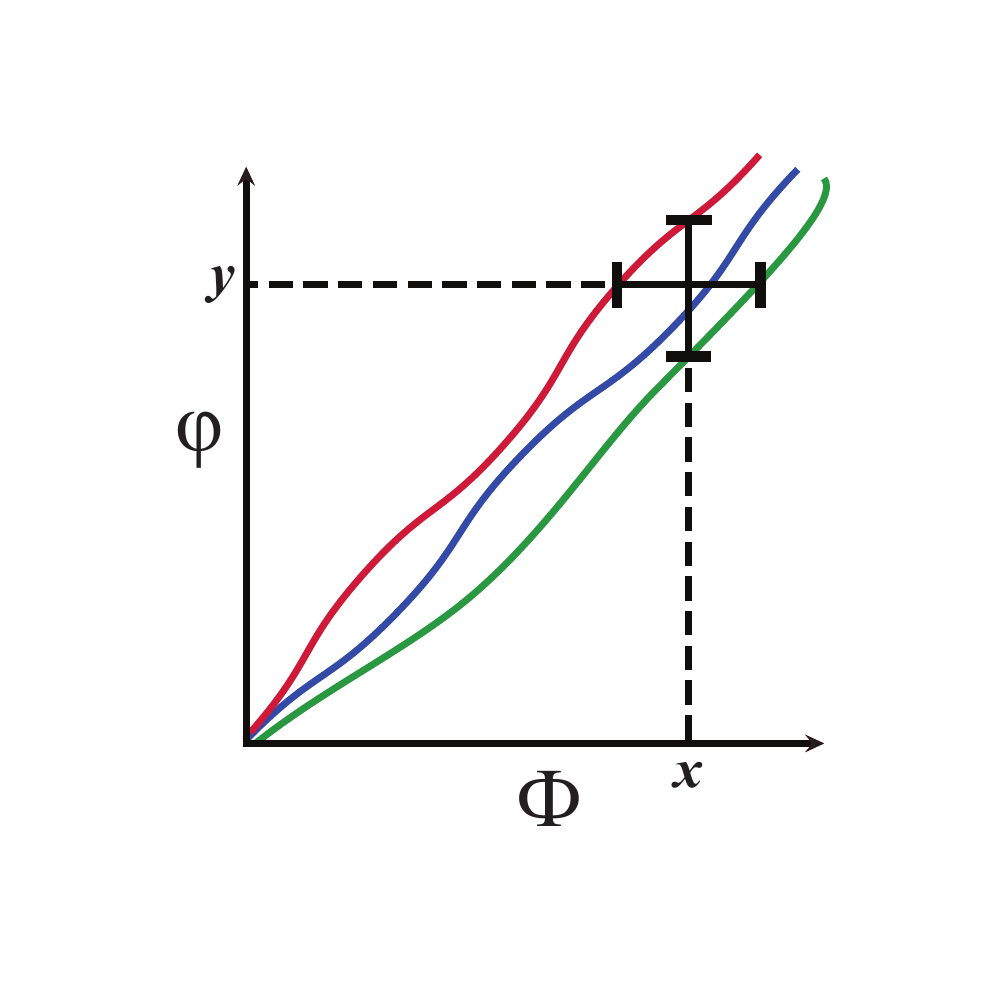}
                  &
        \includegraphics[clip=true, trim = 0in 0in 0in 0in, width=0.4\textwidth]{\pathToCommonFigs/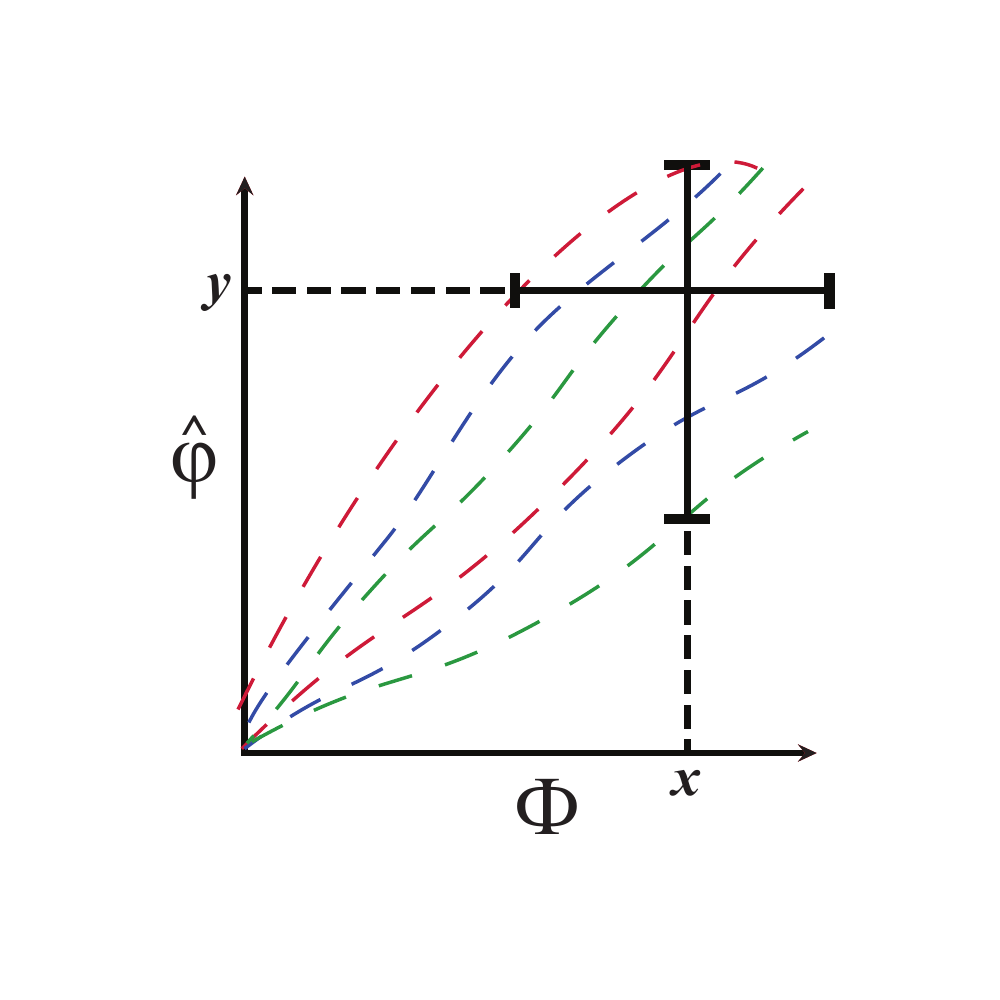} \\
       (a) & (b) \\
   \end{tabular}
    
  \caption{An schematic illustration of reliable and interpretable intervals. In both figure parts, $\Q$ is a union of three different models corresponding to the three different colors. (a) The relationship between $\varphi$ and $\Phi$ on distributions in $\Q$ in the infinite-data limit. The indicated vertical interval is the reliable interval $\reliable{\varphi}{x}$, and the indicated horizontal interval is the interpretable interval $\interpretable{\varphi}{y}$.  (b) The relationship between some estimator $\hvphi$ of $\varphi$ and $\Phi$ on $\Q$ at a finite sample size. The colored dashed lines indicate the $\alpha/2$ and $1-\alpha/2$ percentiles of the sampling distribution of $\hvphi$ for each model, at various values of $\Phi$. The indicated vertical interval is the reliable interval $\reliablestat{\alpha}{\hvphi}{x}$, and the indicated horizontal interval is the interpretable interval $\interpretablestat{\alpha}{\hvphi}{y}$.} \label{fig:reliabilityInterpretability}
\end{figure}

Regardless of our choice of $\varphi$, $\Q$, and $\Phi$, there are two straightforward ways to measure how similar $\varphi$ is to $\Phi$ on $\Q$. The first such way is to restrict our attention only to distributions $\mcZ$ with $\Phi(\mcZ) = x$, and then to ask how much $\varphi(\mcZ)$ can vary subject to that constraint.

\begin{definition}
Let $\varphi : \P \rightarrow [0,1]$ be a measure of dependence, and let $x \in [0,1]$. The smallest closed interval containing the set $\varphi(\Phi^{-1}(\{x\}))$ is called the {\em reliable interval} of $\varphi$ at $x$ and is denoted by $\reliable{\varphi}{x}$. $\varphi$ is a {\em $\gamma$-reliable} proxy for $\Phi$ on $\Q$ at $x$ if and only if the diameter of $\reliable{\varphi}{x}$ is at most $1/\gamma$.
\end{definition}
Equivalently, $\varphi$ is a $\gamma$-reliable proxy for $\Phi$ on $\Q$ at $x$ if and only if there exists an interval $A$ of size $1/\gamma$ such that $\Phi(\mcZ) = x$ implies that $\varphi(\mcZ) \in A$. In other words, if we restrict our attention to distributions $\mcZ$ such that $\Phi(\mcZ) = x$ we are guaranteed that $\varphi$ applied to those distributions will produce values that are close to each other. (See Figure~\ref{fig:reliabilityInterpretability}a for an illustration.) In the context of noisy functional relationships and $R^2$, this corresponds to saying that relationships with the same $R^2$ will not score too differently.

The second way of measuring how closely $\varphi$ matches $\Phi$ on $\Q$ is to talk about how much $\Phi(\mcZ)$ can vary when we consider only distributions $\mcZ$ with $\varphi(\mcZ) = y$.
\begin{definition}
Let $\varphi : \P \rightarrow [0,1]$ be a measure of dependence, and let $y \in [0,1]$. The smallest closed interval containing the set $\Phi( \varphi^{-1}( \{ y \}))$ is called the {\em interpretable interval} of $\varphi$ at $y$ and is denoted by $\interpretable{\varphi}{y}$. $\varphi$ is a {\em $\gamma$-interpretable} proxy for $\Phi$ on $\Q$ at $y$ if and only if the diameter of $\interpretable{\varphi}{y}$ is at most $1/\gamma$.
\end{definition}
Equivalently, $\varphi$ is a $\gamma$-interpretable proxy for $\Phi$ on $\Q$ at $y$ if and only if there exists an interval $A$ of size $1/\gamma$ such that $\varphi(\mcZ) = y$ implies that $\Phi(\mcZ) \in A$ for all $\mcZ \in \Q$. In other words, if all we know about a distribution $\mcZ$ is that $\varphi(\mcZ) = y$, then we are able to guess what $\Phi(\mcZ)$ is pretty accurately. (See Figure~\ref{fig:reliabilityInterpretability}a for an illustration.) In the context of noisy functional relationships and $R^2$, this corresponds to the fact that evaluating $\varphi$ on a relationship will give us good upper and lower bounds on the noise-level of that relationship as measured by $R^2$.

When $\Phi$ and $\Q$ are clear we will omit them and describe $\varphi$ simply as $\gamma$-reliable (resp. interpretable) at $x$ (resp. $y$).

Once we have specified what we mean by ``reliable" and ``interpretable", it is straightforward to define ``reliability" and ``interpretability".

\begin{definition}
The {\em reliability} (resp. {\em interpretability}) of $\varphi$ at $x$ (resp. $y$) is $1/d$, where $d$ is the diameter of $\reliable{\varphi}{x}$ (resp. $\interpretable{\varphi}{y}$). If $d=0$, the reliability (resp. interpretability) of $\varphi$ is $\infty$ and $\varphi$ is called {\em perfectly reliable} (resp. {\em interpretable}).
\end{definition}
We will occasionally refer to the more general notions of reliability/interpretability as ``approximate" to distinguish them from the perfect case.

One can imagine many different ways to quantify the overall interpretability and reliability of a measure of dependence. For instance, we have
\begin{definition}
A measure of dependence is {\em worst-case} $\gamma$-reliable (resp. interpretable) if it is $\gamma$-reliable (resp. interpretable) at all $x$ (resp. $y$) $\in [0,1]$.

A measure of dependence is {\em average-case} $\gamma$-reliable (resp. interpretable) if its reliability (resp. interpretability), averaged over all $x$ (resp. $y$) $\in [0,1]$, is at least $\gamma$.
\end{definition}

More generally, one could imagine defining a prior over all the distributions in $\Q$ to reflect one's belief about the importance of various types of relationships in the world, and then using that to measure overall reliability and interpretability. We do not pursue this here; instead, we focus only on worst-case reliability and interpretability.

Let us give two simple examples of the use of this new terminology. First, the Linfoot correlation coefficient~\cite{linfoot1957informational}, defined as $1 - 2^{-2I}$ where $I$ is mutual information, is a worst-case perfectly interpretable and perfectly reliable proxy for $\rho^2$, the squared Pearson correlation coefficient $\rho^2$, on the set $\Q$ of bivariate normal random variables. Additionally, Theorem 6 of~\cite{szekely2009brownian} implies that distance correlation is a perfectly interpretable and perfectly reliable proxy for $| \rho |$ on the same set $\Q$. In the first example, the given measure of dependence simply equals $\rho^2$ when it is restricted to $\Q$, which is why the reliability and interpretability are perfect. In the second example, the distance correlation does not equal $|\rho|$ but rather is a deterministic function of it, which is sufficient.

\subsection{Defining equitability}

\subsubsection{Equitability in the sense of \cite{MINE}}
As we have suggested above, in the language of reliability and interpretability, the informal notion of equitability described in \cite{MINE} amounts to a requirement that a measure of dependence be a highly interpretable proxy for some property of interest $\Phi$ that is suitably defined to reflect ``strength'', and over as large a model $\Q$ as possible.

We have discussed the fact that the particular choice of $\Phi$ and $\Q$ may vary from problem to problem, as might the way in which the equitability is measured (average-case versus worst-case). Let us define the models $\Q$ considered in \cite{MINE}. We begin by stating precisely what we mean by the term ``noisy functional relationship".

\begin{definition}
A random variable distributed over $\R^2$ is called a {\em noisy functional relationship} if and only if it can be written in the form $(X + \ep, f(X) + \ep')$ where $f : [0,1] \rightarrow \R$, $X$ is a random variable distributed over $[0,1]$, and $\ep$ and $\ep'$ are random variables. We denote the set of all noisy functional relationships by $\F$.
\end{definition}
As we will soon discuss, there are varying views about whether constraints should be placed on $\ep$ and $\ep'$, ranging from setting them to be Gaussians independent of each other and of $X$ all the way to allowing them to be arbitrary random variables that are not necessarily independent of $X$. For this reason, we do not place any constraints on them in the above definition.

With the concept of noisy functional relationships defined, equitability on a set of functional relationships simply amounts to the use of $R^2$ as the property of interest.
\begin{definition}[Equitability on functional relationships in the sense of Reshef et al.]
Let $\Q \subset \F$ be a set of noisy functional relationships. A measure of dependence is worst-case (resp. average-case) {\em $\gamma$-equitable} on $\Q$ if it is a worst-case (resp. average case) $\gamma$-interpretable proxy for $R^2$ on $\Q$.
\end{definition}
In this paper, we will often use ``equitability" with no qualifier to mean worst-case equitability.

Given a set $F$ of functions from $[0,1]$ to $\R$, \cite{MINE} defined a few different subsets of $\F$. The simplest is
\[ \Q_F^{Y,U} = \{ (X, f(X) + \ep_b) : b \in \R_{\geq 0}, f \in F \} \]
Where the letter U in $\Q_F^{Y,U}$ indicates that $X$ is uniform\footnote{In \cite{MINE} $X$ was not actually random. Instead, values of $X$ were chosen in $[0,1]$ to produce $n$ evenly spaced x-values. However, for theoretical clarity we opt here to treat $X$ as a random variable.} over $[0,1]$, and $\ep_b$ is uniform over $[-b,b]$ and is independent of $X$. Of course, one can add noise in the first coordinate as well, producing
\[ \Q_F^{XY,U} = \{ (X + \ep_a, f(X) + \ep_b) : a,b \in \R_{\geq 0}, f \in F \} \]
where $\ep_a$ is defined analogously to $\ep_b$. In both of the above cases, we can also modify $X$ such that, rather than being uniformly distributed over $[0,1]$, it is distributed in such a way that $(X, f(X))$ is uniformly distributed over the graph of $f$. This gives the last two models, $\Q_F^{Y,G}$ and $\Q_F^{XY,G}$, that are used in \cite{MINE}.

The reason that \cite{MINE} defined four different models models was simple: since it is often difficult to say exactly which model (if any) is actually followed by real data, we would ideally like to see good equitability on as many different such models as possible. Given the lack of a neat description of how real data behave, we aim for robustness.

Nevertheless, each of these models is somewhat narrow, and we can easily imagine others: for instance, we might define $\ep_a$ and $\ep_b$ to be Gaussian, we might allow them to depend on each other, or we might consider adding noise only to the first coordinate. Each of these modifications deserves attention.

\begin{rmk}
In the remainder of this paper, we will use the terms ``equitability" and ``interpretability" differently, but the difference is merely notional and not formal: equitability is a type of interpretability that we get when our goal is that $\Phi$ reflect the strength of our relationships.
\end{rmk}

\subsubsection{Kinney and Atwal's impossibility result}
\label{subsubsec:kinney}

Now that we have a sufficiently general language in which to discuss equitability, let us turn to the recent impossibility result of Kinney and Atwal~\cite{kinney2014equitability}. Kinney and Atwal write the following.
\begin{quotation}
[W]e prove that the definition of equitability proposed by Reshef et al. (ed: \cite{MINE}) is, in fact, impossible for any (nontrivial) dependence measure to satisfy.
\end{quotation}

However, this result actually has two severe limitations to its scope. To understand these issues, let us state the result in the language developed above: it amounts to showing that no non-trivial measure of dependence can be perfectly equitable (i.e. a perfect worse-case interpretable proxy for $R^2$) on $\Q_K$, where
\[ \Q_K = \left\{ (X, f(X) + \eta) \mbox{ } \big| \mbox{ } f : [0,1] \rightarrow [0,1], (\eta \perp X) | f(X)\right\} \]
with $\eta$ representing a random variable that is conditionally independent of $X$ given $f(X)$. This model describes functional relationships with noise in the second coordinate only, where that noise can depend arbitrarily on the value of $f(X)$ (i.e. it can be heteroscedastic) but must be otherwise independent of $X$.

The first limitation of this result is that the argument depends crucially on the fact that the noise term $\eta$ can depend arbitrarily on the value of $f(X)$. In particular, its mean need not be 0 but rather may change depending on $f(X)$. As pointed out in~\cite{Murrell2014comment}, selecting such a large model $\Q_K$ leads to identifiability issues such as allowing one to obtain the relationship $f(X) = X^2$ as a noisy version of $f(X) = X$. The more permissive (i.e. large) a model is, the easier it is to prove an impossibility result for it, and $\Q_K$ is indeed quite large: in particular, it is not contained in any of the models $\Q_F$ defined above. This would be necessary in order for impossibility on $Q_K$ to translate into impossibility for one of these other models. Thus, Kinney and Atwal's result does not apply to the models $\Q_F$ defined in \cite{MINE}.

The second limitation of Kinney and Atwal's result is that it only addresses {\em perfect} equitability rather than the more general, approximate notion with which we are primarily concerned. As we discussed in Section~\ref{subsec:equitOverview}, the claim that the definition of equitability given in \cite{MINE} was one of perfect equitability rather than approximate equitability is incorrect. More generally however, though a perfectly equitable proxy for $R^2$ may indeed be difficult or even impossible to achieve for many large models $\Q$ including some of the models $\Q_F$ defined above, such impossibility would make {\em approximate} equitability no less desirable a property. The question thus remains how equitable various measures are, both provably and empirically. To borrow an analogy from computer science, the fact that a problem is proven to be NP-complete does not mean that we that we do not want efficient algorithms for the problem; we simply may have to settle for heuristic solutions, or solutions with some provable approximation guarantees. Similarly, there is merit in searching for measures of dependence that appear to be sufficiently equitable proxies for $R^2$ in practice.

For more on this discussion, see the technical comment~\cite{reshef2014comment} published by the authors of this paper about~\cite{kinney2014equitability}.

\subsection{Equitability of a statistic}
Until now we have only discussed the properties of a measure of dependence $\varphi$ considered as a function of random variables. However, it is trivial to define a perfectly reliable and interpretable proxy for any $\Phi$ on any $\Q$: simply define $\varphi$ to equal $\Phi$ on $\Q$ and an arbitrary measure of dependence on $\P \backslash \Q$. Of course, this is not the point. Rather, the idea is to define a function $\varphi$ that is amenable to efficient estimation, and to use the notions of interpretability and reliability defined above in order to separate the loss in performance that a given estimator of $\varphi$ incurs from finite sample effects from the loss in performance caused by the choice of the estimand $\varphi$ itself.

\begin{figure}[h!]
	\centering
    \includegraphics[clip=true, trim = 0in 4in 0in 0in, height=0.4\textheight]{\pathToCommonFigs/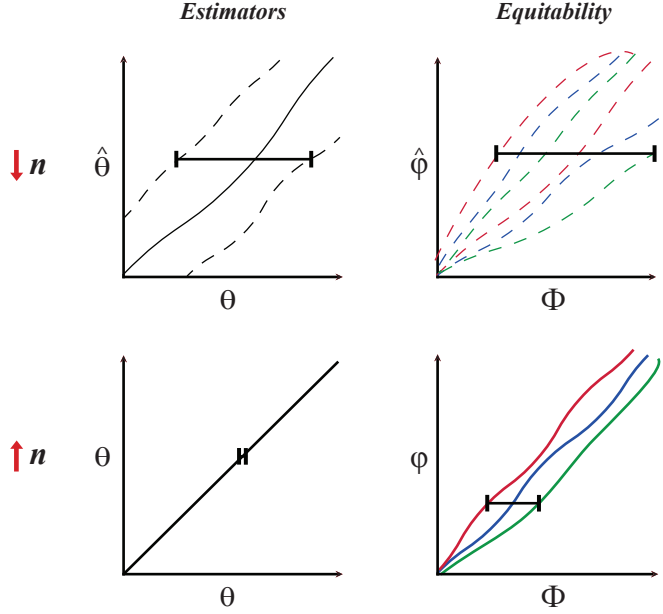}
  \caption{The analogy between interpretable intervals and confidence intervals. The left-hand column depicts a scenario in which $\hvphi$ is estimating a parameter $\theta$. As sample size increases, the width of the confidence intervals of $\hvphi$ will tend to zero because each value of $\theta$ corresponds to exactly one population value of $\hvphi$. The right-hand column depicts a scenario in which $\hvphi$ is being used as an estimate of $\Phi$, but $\Phi$ does not completely determine the population value of $\hvphi$: the red, blue, and green curves represent distinct sets of distributions in $\Q$ whose members can have identical values of $\Phi$. For instance, they might correspond to different function types. This is the setting in which we are operating, and the intervals plotted on the right are called interpretable intervals. Interpretable intervals can be large either because of finite sample effects (as in the conventional estimation case) or because of the lack of interpretability of the population value of the statistic (shown in the bottom-right picture).} \label{fig:analogy}
\end{figure}

However, to reason about this distinction, we do need a way to directly evaluate the reliability and interpretability of a statistic at a given sample size. To do so, we will adapt our above definitions from the ``infinite-data limit" by analogy using the theory of estimation and confidence intervals. Specifically, in estimation theory, confidence intervals can be defined in terms of the sets of likely values of a statistic at each value of the parameter. In the same way, we will define a {\em reliable interval} to be a set of likely values of $\hvphi$ given a certain value of $\Phi$, and then define the {\em interpretable interval} in terms of the values of $\Phi$ whose reliable intervals contain a given value of $\hvphi$. This analogy is depicted in Figure~\ref{fig:analogy} and Table~\ref{tab:analogy}.

\begin{rmk}
The analogy between an equitable statistic and an estimator with small confidence intervals can be made even more explicit as follows: ordinarily, the best way to obtain information about $\Phi$ would be to estimate it directly. However, if we do so we are not guaranteed that the statistic we use will detect any deviation from statistical independence when used on distributions not in $\Q$. Thus, our problem is akin to that of seeking the best possible estimator $\hvphi$ of $\Phi$ on $\Q$ subject to the constraint that the population value $\varphi$ equal 0 if and only if the distribution in question exhibits statistical independence. The difference is that we only care about the confidence intervals of the estimator and not about its bias, since we are principally interested in ranking relationships according to $\Phi$ rather than recovering the exact value of $\Phi$.
\end{rmk}

\begin{table}[h]
\centering
	\begin{tabular}{p{0.5in}p{2in}p{3.0in}}
	&
	\textbf{Estimating $\theta$ (confidence)} &
	\textbf{``Estimating" $\Phi$ (interpretability)} \\ \hline \\
	Model &
	One value of $\varphi$ for each value of $\theta$ &
	Multiple values of $\varphi$ for each value of the $\Phi$ \vspace{1\bigskipamount} \\ \hline \\
	 
	Error &
	Confidence intervals wide due to finite sample effects &
	Interpretable intervals wide due to finite sample effects, as well as infinite sample relationship between $\varphi$ and $\Phi$ \vspace{1\bigskipamount} \\ \hline \\
	
	Tests &
	Small confidence intervals give power at rejecting $H_0 : \theta < \theta_0$   &
	Small interpretable intervals give power at rejecting $H_0 : \Phi < \Phi_0$ \vspace{1\bigskipamount}
\end{tabular}
\caption{The analogy between confidence intervals in the setting of estimating a parameter $\theta$ that completely parametrizes a model, and interpretable intervals when viewed as confidence intervals for $\hvphi$ as an ``estimate" of $\Phi$.} \label{tab:analogy}
\end{table}

We first define the reliability of a statistic. Previously, reliability meant that if we know $\Phi(\mcZ)$ then we can place $\varphi(\mcZ)$ in a small interval. To obtain the analogous definition for a statistic, we simply relax the requirement that $\varphi(\mcZ)$ be in a small interval to the requirement that $\hvphi(D)$ be in a small interval with high probability when $D$ is a sample from $\mcZ$. This is equivalent to simply considering $\hvphi$ as an estimator of $\Phi$ rather than of $\varphi$ and requiring that its sampling distribution have its probability mass concentrated in a small area.

\begin{definition}
Let $\hvphi : \R^{2n} \rightarrow [0,1]$ be a statistic, let $x, \alpha \in [0,1]$. The $\alpha$-reliable interval\footnote{This is simply the union of the central intervals of the sampling distribution of $\hvphi$ taken over all distributions $\mcZ \in \Phi^{-1}(\{x\})$. } of $\hvphi$ at $x$, denoted by $\reliablestat{\alpha}{\hvphi}{x}$, is the smallest closed interval $A$ with the property that, for all $\mcZ \in \Q$ with $\Phi(\mcZ) = x$,
\[ \Pr{\hvphi(D) < \min A} < \alpha \]
and
\[ \Pr{\hvphi(D) > \max A} < \alpha \]
where $D$ is a sample of size $n$ from $\mcZ$.

The statistic $\hvphi$ is a {\em $\gamma$-reliable proxy} for $\Phi$ on $\Q$ at $x$ with probability $1-2\alpha$ if and only if the diameter of $\reliablestat{\alpha}{\hvphi}{x}$ is at most $1/\gamma$.
\end{definition}
(See Figure~\ref{fig:reliabilityInterpretability}b for an illustration.) Looking once more at the example of noisy functional relationships with $R^2$ as the property of interest, this corresponds to the requirement that there exist an interval $A$ such that, for any functional relationship $\mcZ$ with an $R^2$ of $x$, $\hvphi(D)$ falls within $A$ with high probability when $D$ is a sample from $\mcZ$.

Once reliability is suitably defined, the definition of interpretability is simple to translate into one for a statistic. Here we again make our definition by considering $\hvphi$ as an estimator of $\Phi$ and looking at its confidence intervals. The key is that while we generally think of a confidence interval of a consistent estimator becomes large only due to finite sample effects, the so-called interpretable interval can become large either because of finite sample effects or because the function $\varphi$ to which $\hvphi$ converges is itself not very interpretable.
\begin{definition}
Let $\hvphi : \R^{2n} \rightarrow [0,1]$ be a statistic, and let $y, \alpha \in [0,1]$. The $\alpha$-interpretable interval of $\hvphi$ at $y$, denoted by $\interpretablestat{\alpha}{\hvphi}{y}$, is the smallest closed interval containing the set
\[ \left\{ x \in [0,1] : y \in \reliablestat{\alpha}{\hvphi}{x} \right\} \]

The statistic $\hvphi$ is a {\em $\gamma$-interpretable proxy} for $\Phi$ on $\Q$ at $y$ with confidence $1-2\alpha$ if and only if the diameter of $\interpretablestat{\alpha}{\hvphi}{y}$ is at most $1/\gamma$.
\end{definition}
(See Figure~\ref{fig:reliabilityInterpretability}b for an illustration.)

\begin{rmk}
Note that our definitions do not require that $\hvphi$ converge to $\Phi$ in any sense; we are not trying to construct a measure of dependence that also estimates $\Phi$ exactly. Rather, we are willing to tolerate some discrepancy between $\hvphi$ and $\Phi$ in order to preserve the fact that $\hvphi$ acts as a measure of dependence when applied to samples from distributions not in $\Q$. This is the essential compromise behind the idea of equitability. Why is it worthwhile to make? Because on the one hand, if we are interested in ranking relationships then having only a measure of dependence with no guarantees about how noise affects its score will not do; but on the other hand, we want a statistic that is robust enough that we will not completely miss relationships that do not fall in this set.
\end{rmk}

Analogous definitions can be made for average-case and worst-case reliability/equitability, and for equitability on functional relationships.

\subsection{Discussion}
As the definitions given above imply, an equitable statistic is different from other measures of dependence in that its main intended use is not testing for independence, but rather measurement of effect size. The idea is to have a statistic that has the robustness of a measure of dependence but that also, via its relationship to $\Phi$, gives values that have a clear, if approximate, interpretation and can therefore be used to rank relationships.

There is a tension inherent in the concept of equitability that arises from the attempt to reconcile the robustness of a measure of dependence with the utility of a measure of effect size. This tension leads to two important concessions to pragmatism.
\begin{enumerate}
\item The set $\Q$ is not the set $\P$ of all distributions but rather some strict subset of it.
\item Despite the fact that we evaluate $\hvphi$ as an estimator of $\Phi$, we have not required that $\hvphi$ converge to $\Phi$ in any sense, and we explicitly allow for the possibility that it may not. Rather, we are willing to tolerate some discrepancy between the population value of $\hvphi$ and $\Phi$ in order to preserve the fact that $\hvphi$ acts as a measure of dependence when applied to samples from distributions not in $\Q$.
\end{enumerate}
The first of these compromises necessitates the second. For if we could set $\Q$ to be the set of all distributions and still define a property of interest $\Phi$ that captured what we mean by a ``strong" relationship, then we truly would simply seek an estimator for $\Phi$ and be done. Unfortunately we cannot do this; the concepts of ``noise" and what it means to be a ``strong" relationship can become elusive when we enlarge $\Q$ too much. However, this does not mean that we should give up on seeking a statistic that somehow performs reasonably at ranking relationships. Therefore, while define exactly what we would like to have (i.e, $\Phi$) whenever we can (i.e., on some $\Q \subsetneq \P$), we still demand that our statistic act as a measure of dependence on relationships not in $\Q$. This second requirement may hurt our ability to estimate $\Q$, but when we are exploring data sets with real relationships whose form we cannot fully anticipate or model, the robustness it gives can be worth the price of relaxing the requirement that $\hvphi$ converge to $\Phi$ to a requirement that it merely approximate $\Phi$. This is our second compromise.

In this section, we largely focused on setting $\Q$ to be some subset of the set $\F$ of noisy functional relationships, as this has been the subject of most of the empirical work on the equitability of $\MIC$ and other measures of dependence. However, it is important to keep in mind that $\Q$ should ideally be larger than this. For instance, as we discussed previously, in \cite{MINE} the equitability of $\MIC$ is discussed not just in the case of noisy functional relationships but also in the case of superpositions of functional relationships.

As the compromises discussed above make clear, equitability sits in between the traditional hypothesis-testing paradigm of measures of dependence on the one hand and the paradigm of measuring effect size on the other. However, equitability can actually be framed entirely in terms of hypothesis tests. This is the topic of our next section.

\section{Equitability as a generalization of power against independence}
\label{sec:equitAndPower}
Having defined equitability in terms of estimation theory, we will now show that we can equivalently think of it in terms of power against a certain family of null hypotheses. This result re-casts equitability as a generalization of power against statistical independence and gives a second formal definition of equitability that is easily quantifiable using traditional power analysis.

\subsection{Overview}
Our proof is based on the idea behind a standard construction of confidence intervals via inversion of statistical tests. In particular, equitability of a statistic $\hvphi$ with respect to a property of interest $\Phi$ on a model $\Q$ will be shown to be equivalent to power against the collection of null hypotheses of the form $\{ H_0^\alpha : \Phi(\mcZ) \leq \alpha, \mcZ \in \Q \}$ corresponding to different values $\alpha$ of $\Phi$. Thus, if $\Phi$ is such that $\Phi(\mcZ) = 0$ if and only if $\mcZ$ exhibits statistical independence, then equitability with respect to $\Phi$ is a strictly stronger requirement than power against statistical independence.

As a concrete example, let us again return to the case in which $\Q$ is a set of noisy functional relationships and the property of interest is $R^2$. Here, a conventional power analysis would consider, say, the right-tailed test based on the statistic $\hvphi$ and evaluate its type II error at rejecting the null hypothesis of $R^2 = 0$, i.e. statistical independence. In contrast, we will show that for $\hvphi$ to be equitable, it must yield right-tailed tests with high power against null hypotheses of the form $R^2 \leq a$ for {\em any} $a \geq 0$. This is difficult: each of these new null hypotheses can be composite since $\Q$ can contain relationships of many different types (e.g. a noisy linear relationship, a noisy sinusoidal relationship, and a noisy parabolic relationship). Whereas all of these relationships may have reduced to a single null hypothesis of statistical independence in the case of $R^2 = 0$, they yield composite null hypotheses once we allow $R^2$ to be non-zero.

\subsection{Definitions and proof of the result}
As before, let $\P$ be the set of distributions over $\R^2$, and let $\varphi : \P \rightarrow [0,1]$ be a measure of dependence estimated by some statistic $\hvphi : \R^{2n} \rightarrow [0,1]$. Let $\Q \subset \P$ be some model of interest and let $\Phi : \Q \rightarrow [0,1]$ be a property of interest.

Now, given some $x_0, x, c \in [0,1]$, let $T^{x_0}(x, c)$ be the right-tailed test based on $\hvphi$ with critical value $c$, null hypothesis $H_0 : \Phi(\mcZ) = x_0$, and alternative hypothesis $H_1 : \Phi(\mcZ) = x$. The set $S^{x_0}(x) = \{ T^{x_0}(x, c) : c \in [0,1] \}$ is the set of possible right-tailed tests based on $\hvphi$ that are available to us for distinguishing $H_1$ from $H_0$. We will distinguish a one of these tests in particular, namely the optimal one subject to a constraint on type I error: let $T_\alpha^{x_0}(x)$ be the test $T^{x_0}(x,c) \in S^{x_0}(x)$ with $c$ chosen to be as small as possible subject to the constraint that the type I error of the resulting test be at most $\alpha$. We are now ready to define the measure of power that we will use to show the equivalence with equitability.

\begin{definition}
Fix $\alpha, x_0 \in [0,1]$. For any given $x \in [0,1]$, let $\powerfunc{\alpha}{x_0}(x)$ be the power of $T_\alpha^{x_0}(x)$. We call the function $\powerfunc{\alpha}{x_0} : [0,1] \rightarrow [0,1]$ the {\em power function} associated to $\hvphi$ at $x_0$ with significance $\alpha$ with respect to $\Phi$.
\end{definition}

When $\Phi(\mcZ) = 0$ if and only if $\mcZ$ represents statistical independence, then the power function $\powerfunc{\alpha}{0}$ gives the power of right-tailed tests based on $\hvphi$ at distinguishing statistical independence from various non-zero values of $\Phi$ with significance $\alpha$. For instance, if $\Q$ is the set of bivariate normal distributions and $\Phi$ is ordinary correlation $\rho$, then $\powerfunc{\alpha}{0}(x)$ simply gives us the power of the right-tailed test based on $\hvphi$ at distinguishing the alternative hypothesis of $\rho = x$ from the null hypothesis of $\rho = 0$. As an additional example, in the cases discussed above where $\Q$ is some set of functional relationships and $\Phi$ is $R^2$, the power function $\powerfunc{\alpha}{0}(x)$ associated to $\hvphi$ equals the power of the right-tailed test based on $\hvphi$ that distinguishes the alternative hypothesis of $R^2 = x$ from the null hypothesis of $R^2 = 0$, i.e., independence, with type I error $\alpha$.

Nevertheless, as we observe here, the set of power functions at values of $x_0$ besides 0 contains much more information that just the power of right-tailed tests based on $\hvphi$ against the null hypothesis of $\Phi = 0$. We can recover the interpretability of a statistic at every $y \in [0,1]$ by considering its power functions at values of $x_0$ beyond 0. This is the main result of this section. It is analogous to the standard relationship between the size of the confidence intervals of an estimator and the power of their corresponding right-tailed tests.

\begin{rmk}
In this setup our null and alternative hypotheses, since they are based on $\Phi$ and not on a parametrization of $\Q$ that uniquely specifies distributions, may be composite: $\mcZ$ can be one of several distributions with $\Phi(\mcZ) = x_0$ or $\Phi(\mcZ) = x$ respectively. This composite nature of our null hypotheses is bound up in the reason we need interpretability and reliability in the first place: if the set $\Q$ were so small that each value of $\Phi$ defined only one distribution then we would likely not be in a setting where we needed an agnostic approach to detecting strong relationships. We could just estimate $\Phi$ directly.
\end{rmk}

Before we prove the main result of this section, the connection between power and interpretabiilty, we must first define what aspect of power will be reflected in the interpretability of $\hvphi$.
\begin{definition}
The {\em uncertain set} of a power function $\powerfunc{\alpha}{x_0}$ is the set $\{x \geq x_0 : \powerfunc{\alpha}{x_0}(x) < 1-\alpha \}$.
\end{definition}

We will now prove the main proposition of this section, which is essentially that uncertain sets are interpretable intervals and vice versa. In what follows, since our statistic $\hvphi$ is fixed we use $\reliablestat{\alpha}{}{x}$ to denote $\reliablestat{\alpha}{\hvphi}{x}$, and $\interpretablestat{\alpha}{}{y}$ to denote $\interpretablestat{\alpha}{\hvphi}{y}$. We also use the function $| \cdot |$ to denote the diameter of a subset of $[0,1]$.

\begin{prop} \label{prop:equitabilityAndPower}
Fix $0 < \alpha < 1/2$ and $d > 0$, and suppose $\hvphi$ is a statistic with the property that $\max \reliablestat{\alpha}{}{x}$ is a continuous, increasing function of $x$. The following two statements hold.

\begin{enumerate}
\item If $\left| \interpretablestat{\alpha}{}{y} \right| = d$, then the uncertain set of $\powerfunc{\alpha}{x_0}$ has diameter $d$ for $x_0 = \inf \{ x : y \in \reliablestat{\alpha}{}{x} \}$.

\item If the uncertain set of $\powerfunc{\alpha}{x_0}$ has diameter $d$, then $\left| \interpretablestat{\alpha}{}{y} \right| = d$ for $y = \max \reliablestat{\alpha}{}{x_0}$.
\end{enumerate}
\end{prop}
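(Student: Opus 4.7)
Write $L(x) = \min \reliablestat{\alpha}{}{x}$ and $U(x) = \max \reliablestat{\alpha}{}{x}$; by hypothesis $U$ is continuous and strictly increasing. The plan is to show that, under the identification $y = U(x_0)$ (equivalently $x_0 = U^{-1}(y)$, a bijection by the hypothesis), both the interpretable interval $\interpretablestat{\alpha}{}{y}$ and the uncertain set of $\powerfunc{\alpha}{x_0}$ reduce to essentially the same subset of $[0,1]$; both parts of the proposition will then follow by substituting the corresponding direction of this bijection.

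First I would identify the critical value of $T_\alpha^{x_0}(x)$. Since $U(x_0)$ is, by definition of the reliable interval, the smallest threshold $c$ for which $\sup_{\mcZ:\Phi(\mcZ)=x_0}\Pr{\hvphi(D) > c} < \alpha$, the critical value equals $U(x_0)$, giving $\powerfunc{\alpha}{x_0}(x) = \inf_{\mcZ:\Phi(\mcZ)=x}\Pr{\hvphi(D) > U(x_0)}$. Next I would show that $\powerfunc{\alpha}{x_0}(x) \geq 1 - \alpha$ iff $L(x) \geq U(x_0)$. For the forward direction, $U(x_0) < L(x)$ gives $\Pr{\hvphi(D) \leq U(x_0)} \leq \Pr{\hvphi(D) < L(x)} < \alpha$ for every $\mcZ$ with $\Phi(\mcZ) = x$, by definition of $L(x)$, so the power is at least $1-\alpha$. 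For the reverse, if $U(x_0) > L(x)$, minimality of $\reliablestat{\alpha}{}{x}$ produces some $\mcZ$ with $\Phi(\mcZ) = x$ that violates the strict tail bound at threshold $U(x_0)$, so the power drops below $1 - \alpha$. Hence the uncertain set equals (up to its boundary) $\{x \geq x_0 : L(x) < U(x_0)\}$.

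I would then rewrite the interpretable interval using the identification $y = U(x_0)$. By definition $\interpretablestat{\alpha}{}{y}$ is the smallest closed interval containing $\{x : L(x) \leq y \leq U(x)\}$, and continuity and strict monotonicity of $U$ make $U(x) \geq y$ equivalent to $x \geq x_0$, so this set is $\{x \geq x_0 : L(x) \leq y\}$. Comparing with the uncertain set $\{x \geq x_0 : L(x) < y\}$: the two differ only by replacing strict with weak inequality on $L$, both contain (or approach) $x_0$ from the right (using $L(x_0) \leq U(x_0) = y$), and they share the same supremum generically. Hence their (closed) diameters coincide, which, after substituting the bijection between $y$ and $x_0$ in the direction required by each statement, gives both parts of the proposition.

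The main obstacle I anticipate is the boundary bookkeeping around the inequality on $L$: whether $\sup\{x \geq x_0 : L(x) < y\}$ truly equals $\sup\{x \geq x_0 : L(x) \leq y\}$. In full generality these can differ, e.g.\ if $L$ is constant equal to $y$ on a nontrivial interval; I expect the full argument either to deduce enough regularity on $L$ from the hypothesis on $U$ (e.g.\ using that $U$'s strict monotonicity propagates to the lower endpoint), or to reconcile the discrepancy via the strict/weak-inequality asymmetry between the reliable-interval definition ($< \alpha$) and the type I error constraint ($\leq \alpha$). Once that edge case is handled, the remainder of the proof is essentially reading off two descriptions of the same set of $x$-values.
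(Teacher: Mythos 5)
Your proposal follows essentially the same route as the paper's proof: identify the critical value of $T^{x_0}_\alpha(x)$ as $\max \reliablestat{\alpha}{}{x_0}$, translate the condition ``power $\geq 1-\alpha$'' into a statement about where $y$ sits relative to $\reliablestat{\alpha}{}{x}$, and match the endpoints of the uncertain set with those of the interpretable interval. The boundary issue you flag is resolved in the paper exactly as you anticipate --- by comparing suprema rather than the sets themselves (taking $x$ arbitrarily close to $x_0+d$ from below for one inequality and $x > x_0+d$ strictly for the other) and by using the continuity and monotonicity of $\max\reliablestat{\alpha}{}{\cdot}$ to conclude that $y = \max\reliablestat{\alpha}{}{x_0}$ in part 1 --- so there is no genuine gap.
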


An illustration of this proposition and its proof is shown in Figure~\ref{fig:equitabilityAndPower}.
\begin{figure}[h!]
	\centering
    \includegraphics[clip=true, trim = 0in 0in 0in 0in, height=0.5\textheight]{\pathToCommonFigs/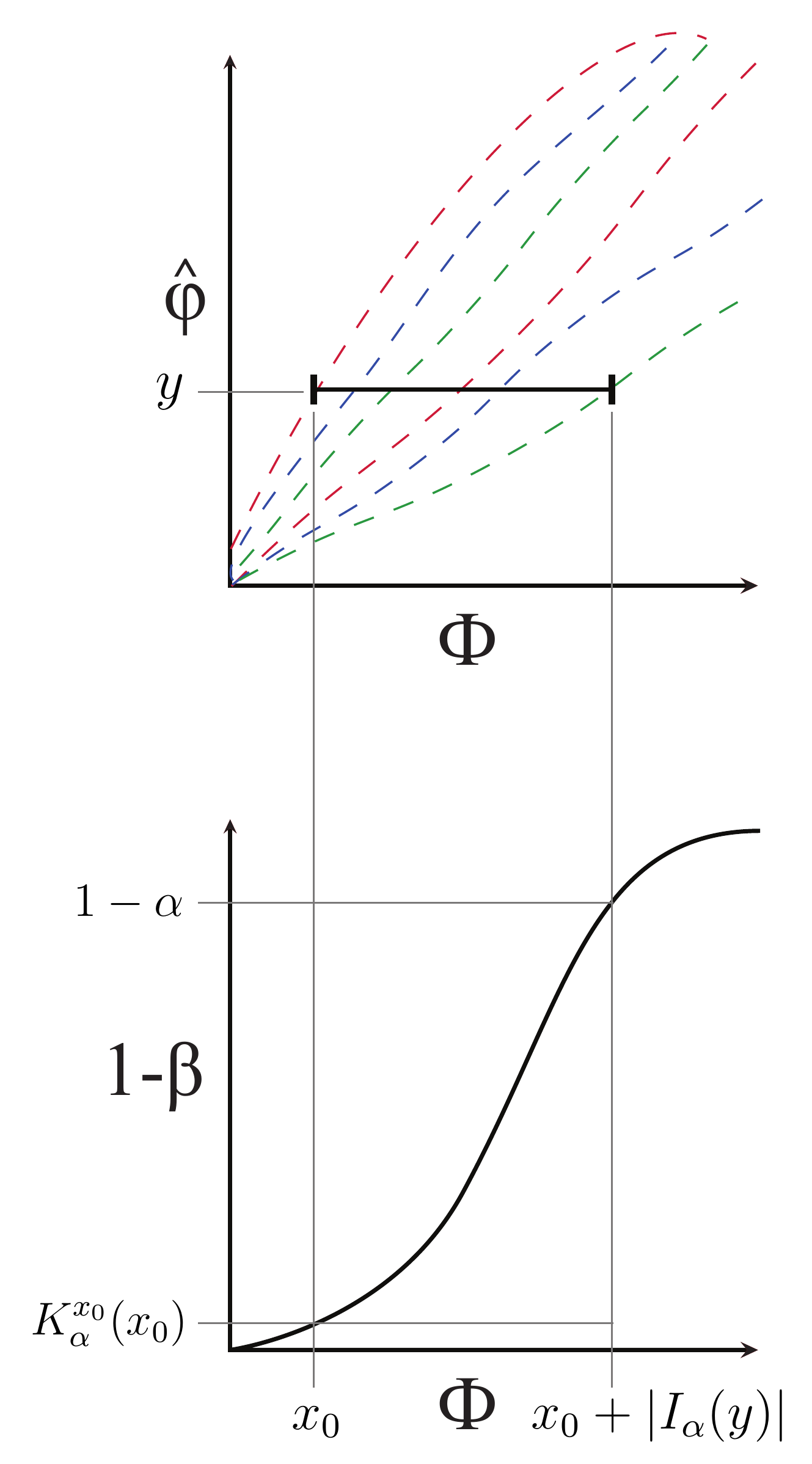}
  \caption{The relationship between equitability and power against independence, as in Proposition~\ref{prop:equitabilityAndPower}. The top plot is the same as the one in Figure~\ref{fig:reliabilityInterpretability}b, with the indicated interval denoting the interpretable interval $\interpretablestat{\alpha}{}{y}$. The bottom plot is a plot of the power function $\powerfunc{\alpha}{x_0}(x)$, with the y-axis indicating statistical power. The key to the proof of the proposition is to notice that the width of the interpretable interval describes the distance from $x_0$ to the point at which the power function reaches $1-\alpha$, and this is exactly the width of the uncertain set of the power function. (Notice that because the null and alternative hypotheses are composite, $\powerfunc{\alpha}{x_0}(x_0)$ need not equal $\alpha$; in general it may be lower.)}\label{fig:equitabilityAndPower}
\end{figure}

\begin{proof}
Let $T^{x_0}_\alpha(x)$ denote the statistical test corresponding to $\powerfunc{\alpha}{x_0}(x)$. We first determine: what is the critical value of $T^{x_0}_\alpha(x)$? By definition, it is the smallest critical value that gives a type I error of at most $\alpha$. In other words, it is the supremum, over all $\mcZ \in \Q$ with $\Phi(\mcZ) = x_0$, of the $1-\alpha$-percentile of the sampling distribution of $\hvphi$ when applied to $\mcZ$. But this is simply $\max \reliablestat{\alpha}{}{x_0}$.

We now prove the proposition by proving each of the two statements separately.

\proofwaypoint{Proof of the first statement} Let $U$ be the uncertain set of $\powerfunc{\alpha}{x_0}$. Since $\alpha < 1/2$, we know that $\powerfunc{\alpha}{x_0}(x_0) \leq \alpha < 1-\alpha$, and so $\inf U = x_0$. It therefore suffices to show that $\sup U = x_0 + d$.

We first show that $\sup U \geq x_0 + d$: since $\left| \interpretablestat{\alpha}{}{y} \right| = d$, we know that we can find $x$ arbitrarily close to $x_0 + d$ from below such that $y \in \reliablestat{\alpha}{}{x}$. But this means that there exists some $\mcZ \in \Q$ with $\Phi(\mcZ) = x$ such that if $D_x$ is a sample of size $n$ from $\mcZ$ then
\[ \Pr{\hvphi(D_x) < y} \geq \alpha \]
i.e.,
\[ \powerfunc{\alpha}{x_0}(x) = \Pr{\hvphi(D_x) \geq y} < 1- \alpha \]
and so $x \in U$.

We next show that $\sup U \leq x_0 + d$. To do so, we will need the following fact: since $\max \reliablestat{\alpha}{}{\cdot}$ is continuous, the set $S = \{ x : y \in \reliablestat{\alpha}{}{x} \}$ is closed and because it's bounded this means that $x_0 = \inf S$ is actually a member of $S$. In other words, $y \in \reliablestat{\alpha}{}{x_0}$. It is easy to similarly show, using the continuity and invertibility of $\max \reliablestat{\alpha}{}{x_0}$, that in fact $y = \max \reliablestat{\alpha}{}{x_0}$.

To show that $\sup U \leq x_0 + d$, we now observe that since $\left| \interpretablestat{\alpha}{}{y} \right| = d$, we know that $y \notin \reliablestat{\alpha}{}{x}$ for all $x > x_0 + d$. This is either because $y > \max \reliablestat{\alpha}{}{x}$ or because $y < \min \reliablestat{\alpha}{}{x}$. However, since $y \in \reliablestat{\alpha}{}{x_0}$ and $\max \reliablestat{\alpha}{}{\cdot}$ is an increasing function, no $x > x_0$ can have $y > \max \reliablestat{\alpha}{}{x}$. Thus the only option remaining is that $y < \min \reliablestat{\alpha}{}{x}$, which gives that if $D_x$ is a sample of size $n$ from any $\mcZ \in \Q$ with $\Phi(\mcZ) = x > x_0 + d$, we will have
\[ \Pr{\hvphi(D_x) < y} < \alpha \]
But, as we've shown, the critical value of the test in question is $\max \reliablestat{\alpha}{}{x_0}$, which equals $y$. We therefore have that
\[ \powerfunc{\alpha}{x_0}(x) = \Pr{\hvphi(D_x) \geq y} \geq 1- \alpha \]
which implies that $x$ is not contained in $U$, as desired.

\proofwaypoint{Proof of the second statement} We again let $U$ denote the uncertain set of $\powerfunc{\alpha}{x_0}$. What are the infimum and supremum of $U$? To answer this, we note once again that $\alpha < 1/2$ implies that $\inf U = x_0$ and moreover, since $|U| = d$, we also have that $\sup U = x_0 + d$.

To prove our claim, we will establish that $\min \interpretablestat{\alpha}{}{y} = x_0$ and that $\max \interpretablestat{\alpha}{}{y} = x_0 + d$. The fact that $\min \interpretablestat{\alpha}{}{y} = x_0$ follows easily from $y = \max \reliablestat{\alpha}{}{x_0}$ and the fact that $\max \reliablestat{\alpha}{}{\cdot}$ is an increasing function. It is therefore left only to show the latter claim.

To establish that $\max \interpretablestat{\alpha}{}{y} = x_0 + d$, let us first show that $\max \interpretablestat{\alpha}{}{y} \geq x_0 + d$. We know that since $\sup U = x_0 + d$, we can find $x$ arbitrarily close to $x_0 + d$ from below such that $\powerfunc{\alpha}{x_0}(x) < 1-\alpha$. Again, since the critical value of the test in question is $y$, this means that there exists some $\mcZ \in \Q$ with $\Phi(\mcZ) = x$ such that if $D_x$ is a sample of size $n$ from $\mcZ$ then
\[ \powerfunc{\alpha}{x_0}(x) = \Pr{\hvphi(D_x) \geq y} < 1- \alpha \]
i.e.,
\[ \Pr{\hvphi(D) < y} \geq \alpha \]
and so $y \in \reliablestat{\alpha}{}{x}$. This means that $x \in \interpretablestat{\alpha}{}{y}$.

To show that $\max \interpretablestat{\alpha}{}{y} \leq x_0 + d$, we observe that for $x > x_0 + d$ we must have $\powerfunc{\alpha}{x_0}(x) \geq 1-\alpha$. Since the critical value of the test $T^{x_0}_\alpha(x)$ is $y$, this implies that if $D_x$ is a sample of size $n$ from any $\mcZ \in \Q$ with $\Phi(\mcZ) = x$, then
\[ \powerfunc{\alpha}{x_0}(x) = \Pr{\hvphi(D_x) \geq y} \geq 1 - \alpha \]
i.e.,
\[ \Pr{\hvphi(D) < y} < \alpha \]
In other words, $y \notin \reliablestat{\alpha}{}{x}$ for any $x > x_0 + d$, as desired.
\end{proof}

\subsection{Discussion}
What does the above result tell us about equitability? The first consequence of it is the following formal definition of equitability/interpretability in terms of statistical power, which we present without proof.

\begin{thm}
Fix a set $\Q \subset \P$, and a function $\Phi : \Q \rightarrow [0,1]$. Let $\hvphi$ be a statistic with the property that $\max \reliablestat{\alpha}{}{x}$ is a continuous increasing function of $x$, and fix some $\alpha \in [0, 1/2]$ and some $d > 0$. Then the following are equivalent:
\begin{enumerate}	
\item $\hvphi$ is a worst-case $1/d$-interpretable proxy for $\Phi$ with confidence $1-2\alpha$.
\item For every $x_0, x_1 \in [0,1]$ satisfying $x_1 - x_0 \geq d$, there exists a right-tailed test based on $\hvphi$ that can distinguish between $H_0 : \Phi(\mcZ) = x_0$ and $H_1 : \Phi(\mcZ) = x_1 $ with type I error at most $\alpha$ and power at least $1-\alpha$.
\end{enumerate}
\end{thm}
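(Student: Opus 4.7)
The plan is to derive this theorem as an essentially immediate consequence of Proposition~\ref{prop:equitabilityAndPower}, which sets up a tight correspondence between the diameter of an interpretable interval at $y$ and the diameter of the uncertain set of the power function at a suitable $x_0$. Under the continuity-and-monotonicity hypothesis on $\max \reliablestat{\alpha}{}{\cdot}$, each $x_0 \in [0,1]$ has a distinguished partner $y = \max \reliablestat{\alpha}{}{x_0}$, and conversely each $y$ in the range of this function corresponds to such an $x_0$. This pairing is the bridge between the two statements.

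For the implication $(1) \Rightarrow (2)$, I would fix $x_0, x_1 \in [0,1]$ with $x_1 - x_0 \geq d$, set $y = \max \reliablestat{\alpha}{}{x_0}$, and apply part 2 of the proposition. That part tells us the uncertain set of $\powerfunc{\alpha}{x_0}$ has diameter equal to $\left| \interpretablestat{\alpha}{}{y} \right|$, which is at most $d$ by worst-case $1/d$-interpretability. Since $\alpha < 1/2$ forces the infimum of the uncertain set to be $x_0$, its supremum is at most $x_0 + d \leq x_1$, so $x_1$ lies outside the uncertain set and $\powerfunc{\alpha}{x_0}(x_1) \geq 1 - \alpha$. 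The test $T^{x_0}_\alpha(x_1)$ then has type I error at most $\alpha$ by construction and power at least $1 - \alpha$, giving the required test.

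For the implication $(2) \Rightarrow (1)$, I would fix an arbitrary $y \in [0,1]$ and aim to bound $\left| \interpretablestat{\alpha}{}{y} \right|$ by $d$. Using continuity and monotonicity, I can either find $x_0$ with $y = \max \reliablestat{\alpha}{}{x_0}$ or observe that $y$ lies outside the range of this function, in which case $\{x : y \in \reliablestat{\alpha}{}{x}\}$ is either empty or a single point and the interpretable interval trivially satisfies the bound. In the generic case, hypothesis (2) applied to $x_0$ and every $x_1 \geq x_0 + d$ yields $\powerfunc{\alpha}{x_0}(x_1) \geq 1 - \alpha$, so the uncertain set of $\powerfunc{\alpha}{x_0}$ is contained in $[x_0, x_0 + d]$ and has diameter at most $d$. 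Part 2 of the proposition then gives $\left| \interpretablestat{\alpha}{}{y} \right| \leq d$, establishing the worst-case interpretability bound.

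The main obstacle I anticipate is the bookkeeping needed to promote the Proposition's equality statements to the inequality versions used here, together with the boundary cases when $y$ lies outside the image of $\max \reliablestat{\alpha}{}{\cdot}$ or when $\alpha$ is at an endpoint of $[0, 1/2]$. None of these is deep: the Proposition's correspondence is manifestly monotone (smaller interpretable intervals pair with smaller uncertain sets), so the inequality versions follow by applying the equality version with the exact diameter on each side; the boundary cases can be handled by the same continuity arguments already used inside the proof of the Proposition.
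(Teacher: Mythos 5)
Your derivation is exactly the intended one: the paper states this theorem \emph{without proof} as an immediate consequence of Proposition~\ref{prop:equitabilityAndPower}, and your use of the pairing $y = \max \reliablestat{\alpha}{}{x_0}$ (which is a bijection onto the relevant range precisely because of the continuity/monotonicity hypothesis) to transfer diameter bounds between interpretable intervals and uncertain sets is the correspondence the Proposition is designed to supply. The only quibble is notational: for $(1)\Rightarrow(2)$ the literal statement you need is part~1 of the Proposition (interpretable interval $\to$ uncertain set), not part~2, but since the two parts are inverse to each other under the pairing this is exactly the bookkeeping you already flag and does not affect correctness.
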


This definition shows what the concept of equitability/interpretability is fundamentally about: being able to distinguish not just signal ($\Phi > 0$) from no signal ($\Phi = 0$) but also stronger signal ($\Phi = x_1$) from weaker signal ($\Phi = x_0$). This is the essence of the difference between equitability/interpretability and power against statistical independence.

The definition also shows that equitability and intepretability --- to the extent they can be achieved --- subsume power against independence. To see this, suppose again that $\Phi(\mcZ) = 0$ exactly when $\mcZ$ exhibits statistical independence. By setting $x_0 = 0$ in the definition, we obtain the following corollary.
\begin{cor}
Fix a set $\Q \subset \P$, a function $\Phi : \Q \rightarrow [0,1]$ such that $\Phi(\mcZ) = 0$ iff $\mcZ$ exhibits statistical independence, and some $\alpha \in [0,1/2]$. Let $\hvphi$ be a worst-case $1/d$-interpretable proxy for $\Phi$ with confidence $1-2\alpha$, and assume that $\max \reliablestat{\alpha}{}{\cdot}$ is a continuous increasing function. The power of the right-tailed test based on $\hvphi$ at distinguishing $H_1 : \Phi(\mcZ) = d$ from statistical independence with type I error at most $\alpha$ is at least $1 - \alpha$.
\end{cor}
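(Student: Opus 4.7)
The plan is to derive the corollary as a direct specialization of the preceding theorem, using the hypothesis on $\Phi$ to translate the null hypothesis $H_0 : \Phi(\mcZ) = 0$ into the hypothesis of statistical independence. No new machinery should be required beyond invoking the equivalence already established.

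First I would note that the hypotheses of the theorem are all in place: we are given that $\hvphi$ is a worst-case $1/d$-interpretable proxy for $\Phi$ with confidence $1-2\alpha$, and that $\max \reliablestat{\alpha}{}{\cdot}$ is a continuous increasing function. By the theorem, this is equivalent to saying that for every pair $x_0, x_1 \in [0,1]$ with $x_1 - x_0 \geq d$, there exists a right-tailed test based on $\hvphi$ distinguishing $H_0 : \Phi(\mcZ) = x_0$ from $H_1 : \Phi(\mcZ) = x_1$ with type I error at most $\alpha$ and power at least $1-\alpha$.

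Next I would apply this equivalence with the specific choice $x_0 = 0$ and $x_1 = d$. This choice satisfies $x_1 - x_0 = d \geq d$, so the theorem guarantees a right-tailed test based on $\hvphi$ distinguishing $H_0 : \Phi(\mcZ) = 0$ from $H_1 : \Phi(\mcZ) = d$ with type I error at most $\alpha$ and power at least $1 - \alpha$. Finally, I would invoke the assumption that $\Phi(\mcZ) = 0$ if and only if $\mcZ$ exhibits statistical independence to conclude that the null hypothesis $H_0 : \Phi(\mcZ) = 0$ is literally the hypothesis of statistical independence, completing the proof.

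There is no real obstacle here: the entire content of the corollary is packaged inside the theorem, and the only substantive step is observing that setting $x_0 = 0$ converts the composite null hypothesis $\Phi(\mcZ) = 0$ into independence via the stated property of $\Phi$. If anything needs to be checked carefully, it would be only the vacuous case $d > 1$, which falls outside $[0,1]$ and can be excluded implicitly by requiring $d \in (0,1]$ so that $x_1 = d$ is a legitimate parameter value.
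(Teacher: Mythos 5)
Your proposal is correct and matches the paper's own derivation exactly: the paper obtains this corollary by setting $x_0 = 0$ (hence $x_1 = d$) in the preceding theorem and using the hypothesis that $\Phi(\mcZ) = 0$ characterizes statistical independence. Your added remark about excluding the vacuous case $d > 1$ is a reasonable minor refinement but not needed for the substance.
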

In other words, equitability/interpretability implies power against independence. However, equitability/interpretability is actually a stronger requirement: as the theorem shows, to be interpretable a statistic must yield a right-tailed test that is well-powered not only to detect deviations from independence ($\Phi(\mcZ) = 0$) but also from any fixed level of ``noisiness" (e.g., $\Phi(\mcZ) = 0.3$). This indeed makes sense when a data set contains an overwhelming number of relationships that exhibit, say $\Phi(\mcZ) = 0.3$ and that we would like to ignore because they are not as interesting as the small number of relationships with $\Phi(\mcZ) = 0.8$.

It is our hope that by formalizing the relationship between equitability and power against independence, our equivalence result will clarify the differences between these two properties, thereby addressing some of the concerns raised about the power of $\MIC$ against statistical independence (\cite{simon2012comment} and~\cite{gorfine2012comment}). We of course do agree that power against independence is a very important goal that is often the right one, and if all other things are equal more power is certainly always better. To this end, we have worked to greatly enhance $\MIC$'s power, both through better choice of parameters and through use of the estimators introduced later in this paper, to the point where it is often competitive with the state of the art. (The results of this work are forthcoming in the companion paper.) However, we also think that limiting one's analysis of $\MIC$ to power against statistical independence alone is not the right way to think about its utility.

For example, in~\cite{simon2012comment}, Simon and Tibshirani write ``The `equitability' property of MIC is not very useful, if it has low power". However, as the result described above shows, the question is ``power against what?". If one is interested only in power against statistical independence (e.g. $R^2 = 0$, in the setting of functional relationships), then choosing a statistic based solely on this property is the correct way to proceed. However, when the relationships in a dataset that exhibit non-trivial statistical dependence number in the hundreds of thousands, it often becomes necessary to be more stringent in deciding which of them to manually examine. As our result in this section shows, this can be thought of as defining one's null hypothesis to be $R^2 \leq \alpha$ for some $\alpha > 0$. In such a case, the statistic is not being used to identify {\em any} instance of dependence, but rather to identify any instance of dependence {\em of a certain minimal strength}. In other words, when used on relationships in $\Q$, an equitable statistic is a measure of effect size rather than a statistical test, and as with other measures of effect size, analyzing its power against only one null hypothesis (that of statistical independence alone) is therefore inappropriate.

Of course, when the relationships being sought in a dataset are expected to be very noisy, the above paradigm does not make sense and it is quite reasonable to ignore equitability and seek a statistic that maximizes power specifically against statistical independence. This issue, along with a broader discussion of when equitability is an appropriate desideratum, is discussed in more detail in the upcoming companion paper. From a theoretical standpoint, our result here simply formalizes the notion that these concepts, while distinct, are related, and shows that the former --- to the extent that it can be achieved --- implies the latter.

\section{$\MIC$ and the MINE statistics as consistent estimators}
\label{sec:consistency}
$\MIC$ is defined as the maximal element of a matrix called the characteristic matrix. However, both of these quantities are defined in~\cite{MINE} as statistics rather than as properties of distributions that can then be estimated from samples. Here we define the quantities that these two statistics turn out to estimate, and we prove that they do so. Thinking about these statistic as consistent estimators and then analyzing their behavior in the infinite-data limit subsumes and strengthens several previous results about $\MIC$, gives a better interpretation of the parameters in the definition of $\MIC$, clarifies the relationship of $\MIC$ to other measures of dependence, especially mutual information, and allows us to introduce new, better estimators that have improved performance.

In this section, we focus on introducing the population value of $\MIC$ (which we will call $\popMIC$) and proving that $\MIC$ is a consistent estimator of it, and then give a discussion of some immediate consequences of this approach. Subsequent sections of the paper are devoted to analyzing $\popMIC$ and stating new estimators of it.

\subsection{Definitions}
We begin by defining the characteristic matrix as a property of the distribution of two jointly distributed random variables $(X,Y)$ rather than as a statistic. In the sequel we will use $G(k,\ell)$ to denote, for positive integers $k$ and $\ell$, the set of all $k$-by-$\ell$ grids (possibly with empty rows/columns).

\begin{definition}
\label{def:charmatrix}
Let $(X,Y)$ be jointly distributed random variables on $[0,1] \times [0,1]$. For a grid $G$, let $(X,Y)|_G = (\mbox{col}_G(X), \mbox{row}_G(Y))$ where $\mbox{col}_G(X)$ is the column of $G$ containing $X$ and $\mbox{row}_G(Y)$ is analogously defined. Let
\[ I^*((X, Y), k, \ell) = \max_{G \in G(k,\ell)} I((X,Y)|_G) \]
where $I(X,Y)$ represents the mutual information of $X$ and $Y$.
The {\em population characteristic matrix} of $(X,Y)$, denoted by $M(X,Y)$, is defined by
\[ M(X,Y)_{k,\ell} = \frac{I^*((X,Y),k,\ell)}{\log \min \{k, \ell\}} \]
for $k, \ell > 1$.
\end{definition}
Note that in the above definition, $I$ refers to mutual information (see, e.g., \cite{csiszar2004information} and \cite{csiszar2008axiomatic}), not to an interpretable interval as in the previous sections.

The characteristic matrix is so named because in \cite{MINE} it was hypothesized that this matrix has a characteristic shape for different relationship types, such that different properties of this matrix may correspond to different properties of relationships. One such property was the maximal value of the matrix. This is called the maximal information coefficient (MIC), and is defined below.

\begin{definition}
Let $(X,Y)$ be jointly distributed random variables on $[0,1] \times [0,1]$. The {\em population maximal information coefficient} ($\popMIC$) of $(X, Y)$ is defined by
\[ \popMIC(X, Y) = \sup M(X,Y) \]
\end{definition}

We now define the corresponding statistics introduced in \cite{MINE}.

\begin{rmk}
In the rest of this paper, we will sometimes have a sample $D$ from the distribution of $(X,Y)$ rather than the distribution itself. We will abuse notation by using $D$ to refer both to the set of points that is the sample, as well as to the uniform distribution over those points. In the latter case, it will then make sense to talk about $I^*(D, k, \ell)$, as we are about to do below.
\end{rmk}

\begin{definition}
Let $D \subset [0,1] \times [0,1]$ be a set of ordered pairs. Given a function $B : \Z^+ \rightarrow \Z^+$, we define the {\em sample characteristic matrix} of $D$ to be
\[ \hat{M}_B(D)_{k,\ell} = \left\{
        \begin{array}{ll}
            \frac{I^*(D, k, \ell)}{ \log \min \{k, \ell \}} & \quad k\ell \leq B(|D|) \\
            0 & \quad k\ell > B(|D|)
        \end{array}
    \right. \]
\end{definition}

\begin{definition}
Let $D \subset [0,1] \times [0,1]$ be a set of ordered pairs, and let $B : \Z^+ \rightarrow \Z^+$. We define
\[ \MIC_B(D) = \max \hat{M}_B(D) \]
\end{definition}

In \cite{MINE}, other characteristic matrix properties were introduced as well (e.g. maximum asymmetry score [MAS], maximum edge value [MEV], etc.). These can be analogously presented as functions of random variables together with a corresponding statistic for each property.

\subsection{The main consistency result}
We now show that the statistic $\MIC$ defined above is in fact a consistent estimator of $\popMIC$. This is a consequence of the following more general result, which will be the main theorem of this section. In the theorem statement below, we let $m^\infty$ be the space of infinite matrices equipped with the supremum norm, and we let $r_i : m^\infty \rightarrow m^\infty$ denote the projection
\[ r_i(A)_{k,\ell} = \left\{
        \begin{array}{ll}
            A_{k,\ell} & \quad k\ell \leq i \\
            0 & \quad k\ell > i
        \end{array}
    \right. \]
\begin{thm*}
Let $f : m^\infty \rightarrow \R$ be uniformly continuous, and assume that $f \circ r_i \rightarrow f$ pointwise. Then for every random variable $(X,Y)$ supported on $[0,1] \times [0,1]$, the statistic $f(\hat{M}_B(\cdot))$ is a consistent estimator of $f(M(X,Y))$ provided $\omega(1) < B(n) \leq O(n^{1-\ep})$ for $\ep > 0$.
\end{thm*}

Since the supremum of a matrix is uniformly continuous as a function on $m^\infty$ and can be realized as the limit of maxima of larger and larger segments of the matrix, this theorem gives us the following corollary.
\begin{cor}
$\MIC_B$ is a consistent estimator of $\popMIC$ provided $\omega(1) < B(n) \leq O(n^{1-\ep})$ for $\ep > 0$.
\end{cor}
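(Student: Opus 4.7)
The plan is to prove the general theorem; the stated corollary then follows at once, because $f = \sup : m^\infty \to \mathbb{R}$ is $1$-Lipschitz in the sup norm and (on the bounded part of $m^\infty$ into which the relevant matrices fall, since the entries of $M(X,Y)$ and $\hat{M}_B(D_n)$ lie in $[0,1]$) satisfies $\sup \circ r_i \to \sup$ pointwise. Writing $D_n$ for a sample of size $n$ from $(X,Y)$, the starting point is the triangle inequality
\begin{equation*}
|f(\hat{M}_B(D_n)) - f(M(X,Y))| \;\leq\; \underbrace{|f(\hat{M}_B(D_n)) - f(r_{B(n)}(M(X,Y)))|}_{\text{stochastic}} \;+\; \underbrace{|f(r_{B(n)}(M(X,Y))) - f(M(X,Y))|}_{\text{deterministic}}.
\end{equation*}

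The deterministic term vanishes as $n \to \infty$ by applying the hypothesis $f \circ r_i \to f$ pointwise to the fixed matrix $M(X,Y)$ along the sequence $i = B(n)$, which tends to infinity because $B(n) = \omega(1)$. For the stochastic term I would invoke uniform continuity of $f$ to reduce the problem to showing $\|\hat{M}_B(D_n) - r_{B(n)}(M(X,Y))\|_\infty \to 0$ in probability. Since both matrices vanish on entries with $k\ell > B(n)$, this in turn reduces to the single statement
\begin{equation*}
\sup_{k\ell \leq B(n)} \frac{|I^*(D_n, k, \ell) - I^*((X,Y), k, \ell)|}{\log\min\{k,\ell\}} \xrightarrow{P} 0.
\end{equation*}

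Establishing this uniform convergence is the technical heart of the proof and the step I expect to be the main obstacle. My plan is to combine two ingredients. First, the family of axis-aligned rectangles in $[0,1]^2$ has VC dimension $4$, so a single Vapnik--Chervonenkis uniform-convergence bound supplies a vanishing envelope $\delta_n$ such that, with probability tending to $1$, the empirical and population probabilities of every axis-aligned rectangle $R$ agree to within $\delta_n$ simultaneously. Since every cell of every grid $G \in G(k,\ell)$ is such a rectangle, this controls, uniformly in $G$ and in $(k,\ell)$, the cell-by-cell distance between the discretized joint distributions induced by $D_n$ and by $(X,Y)$. Second, mutual information on the $k\ell$-dimensional simplex is continuous with an entropy-style modulus of continuity; combining this with the $\log\min\{k,\ell\}$ normalization translates the cell-probability estimate into a bound on each individual $|I(D_n|_G) - I((X,Y)|_G)|/\log\min\{k,\ell\}$, which can then be supremized over $G$ and over $k\ell \leq B(n)$.

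The delicate point, which is where the polynomial growth condition on $B$ is used in earnest, is that a naive entropy-continuity bound of the form $|I(\hat{p}) - I(p)| \leq O(k\ell \cdot \delta_n \cdot \log(1/\delta_n))$ is too weak to tolerate $B(n)$ as large as $n^{1-\epsilon}$. The remedy is to replace the crude $k\ell \cdot \delta_n$ factor with a sharper $L^1$ deviation estimate of the form $\|\hat{p} - p\|_1 = O_P(\sqrt{k\ell/n})$ (an Antos--Kontoyiannis-style bound for the empirical joint distribution on $k\ell$ atoms), which together with the $\log\min\{k,\ell\}$ normalization absorbs the $k\ell$ growth whenever $B(n) \leq O(n^{1-\epsilon})$. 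A final union bound over the $O(B(n)^2)$ admissible index pairs $(k,\ell)$ then completes the argument.
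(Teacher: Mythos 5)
Your high-level skeleton matches the paper's: the same triangle-inequality split into a deterministic projection-error term (killed by $f\circ r_i\to f$ pointwise and $B(n)=\omega(1)$) and a stochastic term (killed by uniform continuity of $\sup$ once one shows $\|\hat M_B(D_n)-r_{B(n)}(M)\|_\infty\to 0$ in probability), and you correctly locate the technical heart in the uniform convergence of the normalized $I^*$ terms. The gap is in how you propose to establish that uniform convergence. The quantity $I^*(D_n,k,\ell)$ is a supremum over the \emph{uncountable} family $G(k,\ell)$ of grids, and your final ``union bound over the $O(B(n)^2)$ admissible index pairs $(k,\ell)$'' only handles the finitely many index pairs, not the continuum of grids within each pair. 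Your two probabilistic ingredients do not close this hole: the VC bound over axis-aligned rectangles is genuinely uniform over all grids, but it only yields cell-wise control and hence the crude $k\ell\cdot\delta_n$ bound on $\|\hat p_G-p_G\|_1$ that you yourself correctly reject; the sharper Antos--Kontoyiannis-type bound $\|\hat p-p\|_1=O_P(\sqrt{k\ell/n})$ is a \emph{fixed-partition} statement and gives no uniformity over $G\in G(k,\ell)$. As written, neither ingredient nor their combination produces a bound on $\sup_{G\in G(k,\ell)}|I(D_n|_G)-I((X,Y)|_G)|$ that is simultaneously uniform in $G$ and sharp enough to survive $k\ell$ as large as $n^{1-\epsilon}$. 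This is precisely the failure mode the paper flags in its intuition section (the supremum of infinitely many consistent estimators need not be consistent).

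The paper's device for restoring uniformity over grids is a ``master grid'': it fixes a single fine equipartition $\Gamma$ with $k\ell\, n^{\epsilon/2}$ cells, controls the multiplicative deviations of the empirical cell masses of $\Gamma$ by a Chernoff bound plus a union bound over the cells of $\Gamma$ alone, and then proves a deterministic rounding lemma showing that an \emph{arbitrary} $k$-by-$\ell$ grid $G$ can be replaced by a sub-grid of $\Gamma$ at a cost of $O(H_b(\delta)+\delta)$ in mutual information, where $\delta$ is the probability mass of the $\Gamma$-cells straddled by the lines of $G$; this makes the resulting bound uniform over all $G$ because every bound is expressed in terms of the single random object $D_n|_\Gamma$. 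To repair your argument you would need either this rounding device or a genuinely grid-uniform deviation inequality, e.g.\ a VC or chaining bound over the class of \emph{unions of cells} of $k$-by-$\ell$ grids (whose complexity grows like $k\ell$, giving $\sup_G\|\hat p_G-p_G\|_1 \lesssim \sqrt{k\ell\log n/n}$, which does tolerate $B(n)\le O(n^{1-\epsilon})$); your current write-up contains neither. A smaller point: once the key lemma is stated only as an $O_P$ statement per $(k,\ell)$, you should verify that the failure probabilities decay fast enough for the union bound over the polynomially many index pairs, which is why the paper works with exponentially small tail bounds.
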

It is easily verified that analogous corollaries also hold for the statistics $\widehat{\mbox{MAS}}$ and $\widehat{\mbox{MEV}}$ defined in \cite{MINE} (and referred to there simply as MAS and MEV). Interestingly, it is unclear a priori whether such a result exists for MCN, since that statistic is not a uniformly continuous function of the sample characteristic matrix.

Before we prove this theorem, we will first give some intuition for why it should hold, and also for why it is non-trivial to prove. We then present the general strategy for the proof before giving the proof itself.

\subsubsection{Intuition}
Fix a random variable $(X,Y)$ and let $D$ be a sample of size $n$ from its distribution. It is known that, for a fixed grid $G$, $I(D|_G)$ is a consistent estimator of $I((X,Y)|_G)$ \cite{roulston1999estimating}. We might therefore expect $I^*(D, k, \ell)$ to be a consistent estimator of $I^*((X, Y), k, \ell)$ as well. And if $I^*(D, k, \ell)$ is a consistent estimator of $I^*((X, Y), k, \ell)$, then we might expect the maximum of the sample characteristic matrix (which just consists of normalized $I^*$ terms) to be a consistent estimator of the supremum of the true characteristic matrix.

These intuitions turn out to be true, but there are two reasons they are non-trivial to prove. First, consistency for $I^*$ does not follow from abstract considerations since the maximum of an infinite set of estimators is not necessarily a consistent estimator of the supremum of the estimands\footnote{If $\hvphi_1, \ldots, \hvphi_k$ is a finite set of estimators, then a union bound shows that the random variable $(\hvphi_1(D), \ldots, \hvphi_k(D))$ converges in probability to $(\varphi_1, \ldots, \varphi_k)$ with respect to the supremum metric. The continuous mapping theorem then gives the desired result. However, if the set of estimators is infinite, the union bound cannot be employed. And indeed, if we let $\varphi_1 = \cdots = \varphi_k = 0$, let $D$ represent a sample of size $n$, and suppose that $\hvphi_i(D)$ has all of its probability mass on the point $i/n$, then each $\hvphi_i$ is consistent but their supremum is always infinite.}. Second, consistency of $I^*$ alone does not suffice to show that the maximum of the sample characteristic matrix converges to $\popMIC$. In particular, if $B(n)$ grows too quickly, and the convergence of $I^*(D, k, \ell)$ to $I^*((X, Y), k, \ell)$ is slow, inflated values of $\MIC$ can result. To see this, notice that if $B(n) = \infty$ then $\MIC = 1$ always, even though each individual entry of the sample characteristic matrix converges to its true value eventually.

The technical heart of the proof is overcoming these obstacles by using the dependences between the quantities $I(D|_G)$ for different grids $G$ to not only show the consistency of $I^*(D, k, \ell)$ but then to quantify how quickly $I^*(D, k, \ell)$ actually converges to $I^*((X, Y), k, \ell)$.

\subsubsection{Proof strategy}
We will prove the theorem by a sequence of lemmas that build on each other to bound the bias of $I^*(D, k, \ell)$. The general strategy is to capture the dependencies between different $k$-by-$\ell$ grids $G$ by considering a ``master grid" $\Gamma$ that contains many more than $k\ell$ cells. Given this master grid, we first bound the difference between $I(D|_G)$ and $I((X,Y)|_G)$ only for sub-grids $G$ of $\Gamma$. The bound will be in terms of the difference between $D|_\Gamma$ and $(X,Y)|_\Gamma$. We then show that this bound can be extended without too much loss to all $k$-by-$\ell$ grids. This will give us what we seek, because then the differences between $I(D|_G)$ and $I((X,Y)|_G)$ will be uniformly bounded for all grids $G$ in terms of the same random variable: $D|_\Gamma$. Once this is done, standard arguments will give us the consistency we seek.

\subsubsection{The proof}
The proof of this result will sometimes require technical facts about entropy and mutual information that are self-contained and unrelated to the central idea behind our argument. These lemmas are consolidated in Appendix~\ref{appendix:technical}.

We begin by using one of these technical lemmas to prove a bound on the difference between $I(D|_G)$ and $I((X,Y)|_G)$ that is uniform over all grids $G$ that are sub-grids of a much denser grid $\Gamma$. The common structure imposed by $\Gamma$ will allow us to capture the dependence between the quantities $\left| I(D|_G) - I((X,Y)|_G) \right|$ for different grids $G$.

\begin{lemma}
\label{lem:boundMIOnSubgrids}
Let $\Pi = (\Pi_X, \Pi_Y)$ and $\Psi = (\Psi_X, \Psi_Y)$ be random variables distributed over the cells of a grid $\Gamma$, and let $(\pi_{i,j})$ and $(\psi_{i,j})$ be their respective distributions. Define
\[ \ep_{i,j} = \frac{\psi_{i,j} - \pi_{i, j}}{\pi_{i,j}} \]
Let $G$ be a sub-grid of $\Gamma$ with $B$ cells. Then, for every $0 < a < 1$ there exists some $A > 0$ such that
\[ \left| I(\Psi|_G) - I(\Pi|_G) \right| \leq \left( \log B \right) A \sum_{i,j} |\ep_{i,j}| \]
when $|\ep_{i,j}| \leq 1 - a$ for all $i,j$.
\end{lemma}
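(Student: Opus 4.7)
The strategy will be to decompose the mutual information difference into entropy differences and to establish a Fannes-type continuity bound on each summand, using the assumption $|\ep_{i,j}| \le 1-a$ to keep constants finite.

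First, I would write $I(\Pi|_G) = H(\Pi_X|_G) + H(\Pi_Y|_G) - H(\Pi|_G)$ and similarly for $\Psi$, so that by the triangle inequality it suffices to bound three entropy differences, each between coarsenings of $\pi$ and $\psi$ with respect to a sub-partition of $G$ (the joint cells, the rows, and the columns of $G$), and each between distributions supported on at most $B$ outcomes. Thus it will be enough to prove a bound of the form $|H(Q) - H(P)| \le A(\log B)\sum_{i,j}|\ep_{i,j}|$ for any common refinement of $P$ and $Q$ by $\Gamma$ whose cells number at most $B$.

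Next I would establish this bound directly. Writing $p_c = \sum_{(i,j)\in c}\pi_{i,j}$ and $q_c = p_c(1+\delta_c)$ with $\delta_c = (\sum_{(i,j)\in c}\pi_{i,j}\ep_{i,j})/p_c$, convexity gives $|\delta_c|\le \max_{(i,j)\in c}|\ep_{i,j}|\le 1-a$. I would then use the algebraic identity
\[
p_c \log p_c - q_c \log q_c \;=\; -p_c\delta_c \log p_c \;-\; p_c(1+\delta_c)\log(1+\delta_c)
\]
and bound the two summands separately. The second is controlled by $|\log(1+x)|\le |x|/a$ on $|x|\le 1-a$ together with $1+\delta_c \le 2-a$, yielding $\sum_c p_c(1+\delta_c)|\log(1+\delta_c)| \le \tfrac{2-a}{a}\sum_c p_c|\delta_c| \le \tfrac{2-a}{a}\sum_{i,j}|\ep_{i,j}|$. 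For the first summand I would expand $|p_c\delta_c|\le \sum_{(i,j)\in c}\pi_{i,j}|\ep_{i,j}|$ and then, for each $(i,j)$ in cell $c$, use the crucial monotonicity $\pi_{i,j}\le p_c$ to upgrade $\pi_{i,j}|\log p_c|\le \pi_{i,j}|\log \pi_{i,j}|\le 1/e$ (since $x\mapsto -x\log x$ peaks at $1/e$ on $[0,1]$); summing gives $\sum_c|p_c\delta_c\log p_c| \le (1/e)\sum_{i,j}|\ep_{i,j}|$. Combining the two summands produces a per-entropy bound $A_0\sum_{i,j}|\ep_{i,j}|$ with $A_0 = 1/e + (2-a)/a$. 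Adding the three entropy contributions for $\Pi_X$, $\Pi_Y$ and the joint, and then absorbing $1/\log 2 \ge 1/\log B$ into the constant (valid for $B\ge 2$), yields the stated inequality with some $A$ depending only on $a$.

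The main obstacle is controlling the factor $|\log p_c|$, which can blow up when $p_c$ is tiny; a naive mean-value estimate for $x\mapsto -x\log x$ fails precisely for this reason. The remedy, and the reason the lemma is stated with respect to a common refinement $\Gamma$, is to re-express $p_c\delta_c$ as $\sum_{(i,j)\in c}\pi_{i,j}\ep_{i,j}$ over the finer grid and exploit $\pi_{i,j}\le p_c$ to replace the unbounded $|\log p_c|$ by the uniformly bounded $-\pi_{i,j}\log\pi_{i,j}\le 1/e$. The boundedness hypothesis $|\ep_{i,j}|\le 1-a$ plays a separate but equally essential role in taming $|\log(1+\delta_c)|$ in the second summand, and is the source of the $a$-dependence of the constant $A$.
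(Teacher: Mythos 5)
Your proof is correct, and while it shares the paper's top-level decomposition $I = H(X)+H(Y)-H(X,Y)$ followed by a triangle inequality over three entropy differences, the core entropy-perturbation bound is established by a genuinely different route. The paper delegates that step to Lemma~\ref{lem:boundEntropy}, which Taylor-expands $H(Q)-H(P)$ to second order with an $\O{e_i^3}$ remainder (following Roulston) and obtains the $\log B$ factor from the bound $-p_i \log p_i \leq \log B$ applied to the coarse cells. You instead use the exact identity $p_c\log p_c - q_c\log q_c = -p_c\delta_c\log p_c - p_c(1+\delta_c)\log(1+\delta_c)$ and control each piece with elementary inequalities, pushing the problematic $|\log p_c|$ down to the refinement $\Gamma$ via $\pi_{i,j} \leq p_c$ so that $-\pi_{i,j}\log\pi_{i,j} \leq 1/e$ is uniformly bounded. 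This buys you two things: the argument is fully non-asymptotic (no unquantified remainder term), and it actually proves the stronger bound $A_0(a)\sum_{i,j}|\ep_{i,j}|$ with no $\log B$ factor at all, which you then weaken to match the statement; the paper's version buys only alignment with the cited literature. The minor edge cases ($B=1$, where both sides vanish, and cells with $\pi_{i,j}=0$, where $\ep_{i,j}$ is undefined in the lemma statement itself) are no worse in your argument than in the paper's.
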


\begin{proof}
Let $P = \Pi|_G$ and $Q = \Psi|_G$ be the random variables induced by $\Pi$ and $\Psi$ respectively on the cells of $G$. Using the fact that $I(X,Y) = H(X) + H(Y) - H(X,Y)$, we write
\[ \left| I(Q) - I(P) \right| \leq \left| H(Q_X) - H(P_X) \right| + \left| H(Q_Y) - H(P_Y) \right| + \left| H(Q) - H(P) \right| \]
where $Q_X$ and $P_X$ denote the marginal distributions on the columns of $G$ and $Q_Y$ and $P_Y$ denote the marginal distributions on the rows. We can bound each of the above terms using a Taylor expansion argument given in Lemma~\ref{lem:boundEntropy}, whose proof is found in the appendix. Doing so gives
\[ (\ln B) \left( \sum_i \O{|\ep_{i,*}|} + \sum_j \O{|\ep_{*,j}|} + \sum_{i,j} \O{|\ep_{i,j}|} \right) \]
where
\[ \ep_{i,*} = \frac{\sum_j (\psi_{i,j} - \pi_{i,j})}{\sum_j \pi_{i,j}} \]
and $\ep_{*,j}$ is defined analogously.

To obtain the result, we observe that
\[ \left| \ep_{i,*} \right|
	= \left| \sum_j \frac{\pi_{i,j}}{\sum_j \pi_{i,j}} \ep_{i,j} \right|
	\leq \sum_j \frac{\pi_{i,j}}{\sum_j \pi_{i,j}} \left| \ep_{i,j} \right|
	\leq \sum_j \left| \ep_{i,j} \right|
\]
since $\pi_{i,j} / \sum_j \pi_{i,j} \leq 1$, and the analogous bound holds for $\left| \ep_{*,j} \right|$.
\end{proof}

We now extend Lemma~\ref{lem:boundMIOnSubgrids} to all grids with $B$ cells rather than just those that are sub-grids of the master grid $\Gamma$. It is useful at this point to recall that, given a distribution $(X,Y)$, an {\em equipartition} of $(X,Y)$ is a grid $G$ such that all the rows of $(X,Y)|_G$ have the same probability mass, and all the columns do as well.

\begin{lemma}
\label{lem:boundMI}
Let $\Pi = (\Pi_X, \Pi_Y)$ and $\Psi = (\Psi_X, \Psi_Y)$ be random variables distributed over $[0,1] \times [0,1]$, and let $\Gamma$ be a grid. Define $\ep_{i,j}$ on $\Pi|_\Gamma$ and $\Psi|_\Gamma$ as in Lemma~\ref{lem:boundMIOnSubgrids}. Let $G$ be any $k$-by-$\ell$ grid, and let $\delta$ (resp. $d$) represent the total probability mass of $\Pi|_\Gamma$ (resp. $\Psi|_\Gamma$) falling in cells of $\Gamma$ that are not contained in individual cells of $G$. We have that
\[ \left| I(\Psi|_G) - I(\Pi|_G) \right|
\leq (\log (4k\ell))\sum_{i,j} \O{ \left| \ep_{i,j} \right| }
	+ 2 \left( H_b(\delta) + H_b(d) + \delta + d \right)  \]
provided that the $|\ep_{i,j}|$ are bounded away from 1 and that $d, \delta \leq 1/2$.
\end{lemma}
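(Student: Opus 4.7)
The plan is to reduce Lemma~\ref{lem:boundMI} to Lemma~\ref{lem:boundMIOnSubgrids} by replacing the arbitrary $k$-by-$\ell$ grid $G$ with a nearby sub-grid $G'$ of the master grid $\Gamma$, applying Lemma~\ref{lem:boundMIOnSubgrids} to $G'$, and absorbing the approximation error into the $\delta$ and $d$ terms.

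First I would construct $G'$ by snapping each column-cut of $G$ to the largest column-cut of $\Gamma$ lying below it, and doing the same for row-cuts. This yields a sub-grid $G'$ of $\Gamma$ with at most $k$ non-empty columns and $\ell$ non-empty rows. A short verification establishes two key geometric facts: (i) every good $\Gamma$-cell contained in the $G$-cell with row/column indices $(i,j)$ is also contained in the $G'$-cell with those same indices; and (ii) any $G$-cell whose corresponding $G'$-cell is empty receives all of its $\Pi$- and $\Psi$-mass exclusively from bad $\Gamma$-cells. Identifying cells of $G$ and $G'$ by their row/column labels, the symmetric difference between any $G$-cell and its $G'$-counterpart is covered by bad $\Gamma$-cells, which yields
\[ \|\Pi|_G - \Pi|_{G'}\|_1 \leq 2\delta \quad \text{and} \quad \|\Psi|_G - \Psi|_{G'}\|_1 \leq 2d. \]

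Next I would apply Lemma~\ref{lem:boundMIOnSubgrids} directly to the sub-grid $G'$ of $\Gamma$ to bound $|I(\Psi|_{G'}) - I(\Pi|_{G'})|$ by the target expression involving $\sum_{i,j} |\ep_{i,j}|$; the factor of $4$ inside the $\log(4k\ell)$ absorbs auxiliary constants arising from the collapsing-cut case. Separately, I would apply a Fannes-style continuity bound for mutual information (invoked once on the joint distribution and once on each marginal of $\Pi|_G$ vs.\ $\Pi|_{G'}$, and similarly for $\Psi|_G$ vs.\ $\Psi|_{G'}$) to obtain
\[ |I(\Pi|_G) - I(\Pi|_{G'})| \leq 2(H_b(\delta) + \delta), \qquad |I(\Psi|_G) - I(\Psi|_{G'})| \leq 2(H_b(d) + d). \]
The stated bound then follows from the triangle inequality $|I(\Pi|_G) - I(\Psi|_G)| \leq |I(\Pi|_G) - I(\Pi|_{G'})| + |I(\Pi|_{G'}) - I(\Psi|_{G'})| + |I(\Psi|_{G'}) - I(\Psi|_G)|$.

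The main technical hurdle is producing the entropy-continuity estimate above with the clean dependence $H_b(\delta) + \delta$ rather than the weaker $H_b(\delta) + \delta \log(k\ell)$ that falls out of a naive Fannes-type inequality applied to distributions on $k\ell$ atoms. The cleanest route seems to be a conditioning argument: letting $E$ be the event that a sample falls in a good $\Gamma$-cell, so that $\Pr(E) = 1 - \delta$, property~(i) above guarantees that $\Pi|_G$ and $\Pi|_{G'}$ agree in distribution conditional on $E$. Substituting this into the identity $H(Z) = H(E) + H(Z \mid E) - H(E \mid Z)$ applied separately to the joint distribution and the two marginals---and using $H(E \mid Z) \leq H(E) = H_b(\delta)$ together with a refined bound on the $E^c$-conditional contribution that exploits the fact that each bad $\Gamma$-cell meets only a controlled number of $G$-cells---should yield the required estimate after careful bookkeeping.
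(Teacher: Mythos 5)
Your overall architecture is the same as the paper's: replace $G$ by a nearby sub-grid $G'$ of $\Gamma$, apply Lemma~\ref{lem:boundMIOnSubgrids} to $G'$ to get the $(\log(4k\ell))\sum_{i,j}|\ep_{i,j}|$ term, and charge the discrepancy between $G$ and $G'$ to the $2(H_b(\delta)+\delta) + 2(H_b(d)+d)$ terms via the triangle inequality. The construction of $G'$ differs slightly (you snap each cut of $G$ to a cut of $\Gamma$; the paper inserts the two $\Gamma$-lines surrounding each dissonant line and then deletes the dissonant line, which is why its $G'$ has up to $4k\ell$ cells and the $\log(4k\ell)$ appears naturally), but either construction could in principle be made to work.

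The genuine gap is in the step you yourself flag as the main hurdle: establishing $|I(\Pi|_G) - I(\Pi|_{G'})| \leq 2(H_b(\delta) + \delta)$ with no $\log(k\ell)$ factor. The conditioning argument you sketch does not close it. Writing $Z = \Pi|_G$, $Z' = \Pi|_{G'}$ and conditioning on the good-cell event $E$, the identity $H(Z) = H(E) + H(Z\mid E) - H(E\mid Z)$ leaves you with the term $\delta\,\bigl|H(Z \mid E^c) - H(Z' \mid E^c)\bigr|$, and conditional on landing in a bad $\Gamma$-cell the point can still lie in any of up to $k\ell$ cells of $G$ (the bad $\Gamma$-cells need not be few, nor concentrated in few $G$-cells), so this term is only bounded by $\delta\log(k\ell)$ in general. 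Moreover, the premise you invoke --- that each bad $\Gamma$-cell meets only a controlled number of $G$-cells --- is false for arbitrary $G$ and $\Gamma$: a single cell of $\Gamma$ can be crossed by many lines of $G$, and even a bound of four cells per bad $\Gamma$-cell would not control $H(Z\mid E^c)$, since the bad cells can be scattered across $\Theta(k\ell)$ distinct $G$-cells. This is exactly the Fannes phenomenon; an $L^1$ bound of $2\delta$ between $\Pi|_G$ and $\Pi|_{G'}$ cannot by itself yield a dimension-free entropy bound. The paper avoids this by never comparing $G$ and $G'$ ``in bulk'': it passes between them one line at a time, and Lemma~\ref{lem:boundMIOfCloseGrids_Subgrids} shows (via $I(X,Y) = H(Y) - H(Y\mid X)$) that merging two adjacent columns of masses $\nu$ and $\pi - \nu$ changes the mutual information by at most $\pi H_b(\nu/\pi)$ --- a \emph{local} quantity with no $\log k$ or $\log\ell$. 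These local costs are then aggregated by the concavity-type Lemma~\ref{lem:boundWeightedAverageOfEntropy}, $\sum_i w_i H_b(u_i/w_i) \leq H_b\bigl(\sum_i u_i\bigr)$, which is what produces $H_b(\delta) + \delta$. To repair your proof you would need to replace the Fannes/conditioning step with such a line-by-line merging argument (for instance, passing from $G$ up to the common refinement $G \vee G'$ and back down to $G'$).
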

\begin{proof}
In the proof below, we use the convention that for any two grids $G$ and $G'$ and any distribution $\mcZ$, the expression $\Delta^{\mcZ}(G, G')$ denotes $| I(\mcZ|_G) - I(\mcZ|_{G'})|$. In addition, we refer to any horizontal or vertical line in $G$ that is not in $\Gamma$ as a {\em dissonant} line of $G$.

Consider the grid $G'$ obtained by adding to $G$ the two lines in $\Gamma$ that surround each dissonant line of $G$ and then removing all the dissonant lines of $G$. This grid $G'$ is clearly a sub-grid of $\Gamma$. And in Lemma~\ref{lem:boundMIOfCloseGrids}, whose proof we defer to the appendix, we do some careful accounting to show that $G'$ has the property that
\[ \Delta^\Psi(G, G') \leq 2 \left( H_b(d) + d \right) \]
and
\[ \Delta^{\Phi}(G', G) \leq 2 \left( H_b(\delta) + \delta \right) \]
We can then bound $|I(\Psi|_G) - I(\Phi|_G)|$ using the triangle inequality by comparing it with
\[ \Delta^\Psi(G, G')
+ \left| I \left( \Psi|_{G'} \right) - I \left( \Phi|_{G'} \right) \right|
+ \Delta^{\Phi}(G', G) \]
and bounding the middle term using Lemma~\ref{lem:boundMIOnSubgrids}.
\end{proof}

We now use the fact that the variables $\ep_{i,j}$ defined in Lemma~\ref{lem:boundMIOnSubgrids} are small with high probability to give a concrete bound on the bias of $I(D|_G)$ that is uniform over all $k$-by-$\ell$ grids $G$ and that holds with high probability.

\begin{lemma}
\label{lem:unionBoundOverGrids}
Let $(X,Y)$ be a continuous random variable, and let $D_n$ represent a random sample of size $n$ from the distribution of $(X,Y)$. For any $\alpha \geq 0$, any $\ep > 0$, and any integers $k, \ell > 1$, we have that for all $n$
\[ \left| I(D_n|_G) - I((X,Y)|_G) \right|
	\leq \O{ \frac{\log (4 k \ell)}{C(n)^\alpha} } + \O{ \frac{1}{n^{\ep/8}} } \]
for every $k$-by-$\ell$ grid $G$ with probability at least $1 - C(n) e^{-\Omega(n/C(n)^{1+2\alpha})}$, where $C(n) = k\ell n^{\ep/2}$.
\end{lemma}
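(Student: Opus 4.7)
The plan is to apply Lemma~\ref{lem:boundMI} to $\Pi = (X,Y)$ and $\Psi = D_n$ (regarded as its empirical distribution), with a master grid $\Gamma$ tuned to the sample size. Specifically, I would take $\Gamma$ to be the equipartition of $(X,Y)$ with $kn^{\ep/4}$ columns and $\ell n^{\ep/4}$ rows, so that $\Gamma$ has a total of $C(n) = k\ell n^{\ep/2}$ cells, each column carries $X$-mass $1/(kn^{\ep/4})$, and each row carries $Y$-mass $1/(\ell n^{\ep/4})$. The two error terms produced by Lemma~\ref{lem:boundMI}---namely $(\log(4k\ell))\sum_{i,j} O(|\ep_{i,j}|)$, the sum of relative deviations on $\Gamma$-cells, and $2(H_b(\delta)+H_b(d)+\delta+d)$, the contribution from cells straddling dividers of $G$---should correspond respectively to the $O(\log(4k\ell)/C(n)^\alpha)$ and $O(n^{-\ep/8})$ terms in the statement.

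For the dissonant-mass term, the equipartition structure of $\Gamma$ supplies the needed uniformity in $G$: any $k$-by-$\ell$ grid $G$ has only $k-1$ vertical and $\ell-1$ horizontal dividers, so they can cross at most $k-1$ columns and $\ell-1$ rows of $\Gamma$; summing their equipartition masses gives $\delta \leq 2 n^{-\ep/4}$. A Hoeffding bound on the empirical counts in the dissonant rows and columns shows $d = O(n^{-\ep/4})$ with overwhelming probability. Since $H_b(t) = O(t \log(1/t))$, this forces $H_b(\delta) + H_b(d) + \delta + d = O(n^{-\ep/4} \log n) = O(n^{-\ep/8})$ for $n$ sufficiently large, yielding the second error term.

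For the relative-deviation term, I would apply the multiplicative Chernoff bound cell-by-cell: on a cell of $\Gamma$ with true mass $p$ of order $1/C(n)$, the event $|\ep_{i,j}| \leq 1/C(n)^\alpha$ fails with probability at most $2\exp(-\Omega(np/C(n)^{2\alpha})) = 2\exp(-\Omega(n/C(n)^{1+2\alpha}))$. A union bound across the $C(n)$ cells produces the stated probability $1 - C(n)e^{-\Omega(n/C(n)^{1+2\alpha})}$, and the resulting uniform control on the $\ep_{i,j}$, fed back into Lemma~\ref{lem:boundMI}, delivers the first error term.

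The main obstacle I anticipate is that some cells of $\Gamma$ may carry true mass far below the ``typical'' $1/C(n)$, on which the Chernoff bound fails to give a useful control on the relative deviation $\ep_{i,j}$. The fix is to partition the $\Gamma$-cells into a \emph{heavy} set (mass above some threshold of order $1/C(n)$), on which Chernoff and the union bound apply as above, and a \emph{light} set whose aggregate mass is absorbed into the dissonant contribution $\delta$ (and its empirical analogue $d$) at only a constant-factor cost. Once this decomposition is in place, the two contributions combine into the stated bound, holding uniformly over all $k$-by-$\ell$ grids $G$ with the advertised probability.
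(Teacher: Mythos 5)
Your overall architecture matches the paper's: same master equipartition $\Gamma$ with $C(n) = k\ell n^{\ep/2}$ cells, same invocation of Lemma~\ref{lem:boundMI} with $\Pi = (X,Y)$ and $\Psi = D_n$, and the same bound $\delta \leq 2n^{-\ep/4}$ from counting how many rows/columns of $\Gamma$ the dividers of $G$ can touch. (One small divergence: the paper does not run a separate Hoeffding bound for $d$; it deduces $d \leq 2\delta$ deterministically from the already-established event $\psi_{i,j} \leq 2\pi_{i,j}$, which keeps the probability accounting in one place.)

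The genuine gap is in the relative-deviation step. You impose the \emph{uniform} per-cell threshold $|\ep_{i,j}| \leq C(n)^{-\alpha}$, which indeed matches the advertised failure probability, but it does not deliver the first error term: Lemma~\ref{lem:boundMI} charges you $(\log(4k\ell))\sum_{i,j} O(|\ep_{i,j}|)$, and summing your threshold over the $C(n)$ cells gives $\sum_{i,j}|\ep_{i,j}| \leq C(n)^{1-\alpha}$, which is enormous rather than the required $O(C(n)^{-\alpha})$. The paper's resolution is a \emph{cell-dependent} threshold: it applies the multiplicative Chernoff bound with $\delta_{i,j} = \sqrt{\pi_{i,j}}/C(n)^{1/2+\alpha}$, and then exploits concavity (equivalently Cauchy--Schwarz) to get $\sum_{i,j}\sqrt{\pi_{i,j}} \leq \sqrt{C(n)}$, whence $\sum_{i,j}|\ep_{i,j}| \leq \sqrt{C(n)}/C(n)^{1/2+\alpha} = C(n)^{-\alpha}$. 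Without some such $\pi_{i,j}$-adapted threshold your argument cannot close, and you cannot simply shrink the uniform threshold to $C(n)^{-1-\alpha}$ either, since that would degrade the Chernoff exponent to $\Omega(n/C(n)^{3+2\alpha})$ and break the stated probability. Your heavy/light decomposition is a sensible instinct for the low-mass cells (which the paper's own argument treats rather breezily), but as written the light cells cannot simply be "absorbed into $\delta$": in Lemma~\ref{lem:boundMI}, $\delta$ is specifically the mass in cells of $\Gamma$ straddling lines of $G$, so you would need to reprove that lemma with an extra exceptional-mass bucket rather than reuse it off the shelf.
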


\begin{proof}
Fix $n$, and let $\Gamma$ be an equipartition of the support of $(X,Y)$ into $kn^{\ep/4}$ rows and $\ell n^{\ep/4}$ columns. $C(n)$ is now the number of cells in $\Gamma$. Lemma~\ref{lem:boundMI}, with $\Pi = (X,Y)$ and $\Psi = D$, shows that $\left| I(D|_G) - I((X,Y)|_G) \right|$ is at most
\[ \left( \log (4k\ell) \right) \sum_i \O{|\ep_{i,j}|}
	+ 2 \left( H_b(\delta) + H_b(d) + \delta + d \right) \]
provided the $\ep_{i,j}$ have absolute value bounded away from 1, and provided that $d, \delta \leq 1/2$.

The remainder of the proof proceeds as follows. We first show that the $\ep_{i,j}$ are small with high probability. This will both show that the lemma's requirement on the $\ep_{i,j}$ holds and allow us to bound the sum in the inequality above. We will then use our bound on the $\ep_{i,j}$ to bound $d$ in terms of $\delta$. Finally, we will bound $\delta$ using the fact that the number of rows and columns in $\Gamma$ increases with $n$. This will give us that $d, \delta \leq 1/2$ and allow us to bound the rest of the terms in the expression above.

\proofwaypoint{Bounding the $\ep_{i,j}$}
We bound the $\ep_{i,j}$ using a multiplicative Chernoff bound. Let $\pi_{i,j}$ and $\psi_{i,j}$ represent the probability mass functions of $(X,Y)|_\Gamma$ and $D|_\Gamma$ respectively. We write
\begin{eqnarray*}
\Pr{ \left| \ep_{i,j} \right| \geq \delta} &=& \Pr{\pi_{i,j}(1 - \delta) \leq \psi_{i,j} \leq \pi_{i,j}(1 + \delta)} \\
	&\leq& e^{-\Omega(n\pi_{i,j} \delta^2)}
\end{eqnarray*}
since $\psi_{i,j}$ is a sum of $n$ i.i.d Bernoulli random variables and $\E{\psi_{i,j}} = n\pi_{i,j}$. (See, e.g., \cite{mitzenmacher2005probability}.) Setting $\delta = \sqrt{\pi_{i,j}}/C(n)^{1/2 + \alpha}$ then gives
\[ \Pr{|\ep_{i,j}| \geq \frac{\sqrt{\pi_{i,j}}}{C(n)^{1/2 + \alpha}}} \leq e^{-\Omega(n/C(n)^{1+2\alpha})}\]
A union bound over the pairs $(i,j)$ then gives that, with the desired probability, the above bound on $|\ep_{i,j}|$ holds for all $i,j$.

\proofwaypoint{Bounding $\sum \O{ \left| \ep_{i,j} \right| }$}
The above bound on the $\ep_{i,j}$ implies that
\begin{eqnarray*}
\sum_i \O{|\ep_{i,j}|} &\leq& \O{\frac{1}{C(n)^{1/2 + \alpha}} } \sum_{i,j} \sqrt{\pi_{i,j}} \\
	&\leq& \O{ \frac{1}{C(n)^{1/2 + \alpha}} } \sqrt{C(n)} \\
	&\leq& \O{ \frac{1}{C(n)^{\alpha}} }
\end{eqnarray*}
where the second line follows from the fact that the function $\sum \sqrt{\pi_{i,j}}$ is symmetric and concave and therefore, when restricted to the hyperplane $\sum \pi_{i,j }= 1$, must achieve its maximum when $\pi_{i,j} = 1/C(n)$ for all $i,j$.

\proofwaypoint{Bounding $d$ in terms of $\delta$}
We use our bound on the $\ep_{i,j}$ to bound $d$. We do so by observing that it implies
\[
\psi_{i,j} \leq \pi_{i,j} \left( 1 + \frac{\sqrt{\pi_{i,j}}}{C(n)^{1/2 + \alpha}} \right)
	= \pi_{i,j} + \frac{\pi_{i,j}^{3/2}}{C(n)^{1/2 + \alpha}}
	\leq \pi_{i,j} + \frac{\pi_{i,j}}{C(n)^{1/2 + \alpha}}
	\leq 2\pi_{i,j} \]
since $\pi_{i,j} \leq 1$ and $C(n) \geq 1$.

The connection to $d$ comes from the fact that for any column $j$ of $\Gamma$, this means that
\[ \psi_{*,j} = \sum_i \psi_{i,j} \leq 2 \sum_i \pi_{i,j} = 2 \pi_{*,j} \]
This also applies to the sums across rows. Since $d$ is a sum of terms of the form $\psi_{*,j}$ and $\psi_{i,*}$ for $j$ in some index set $J$ and $i$ in an index set $I$, and $\delta$ is a sum of terms of the form $\pi_{*,j}$ and $\pi_{i,*}$ with the same index sets, we therefore get that $d \leq 2 \delta$.

\proofwaypoint{Bounding $\delta$ and obtaining the result}
To bound $\delta$, we observe that because $G$ has at most $\ell - 1$ vertical lines and $k -1$ horizontal lines, we have
\[ \delta \leq \frac{\ell}{\ell n^{\ep/ 4}} + \frac{k}{k n^{\ep/4}} \leq \frac{2}{n^{\ep/4}} \]

This bound on $\delta$ allows us to bound the terms involving $d$ and $\delta$ by
\[ H_b(\delta) + H_b(d) + \delta + d \leq H_b\left( \frac{2}{n^{\ep/4}} \right) + H_b\left( \frac{4}{n^{\ep/4}} \right) + \O{\frac{1}{n^{\ep/4}}} \]
Finally, we observe that around 0, $H_b(x) = O(x^a)$ for every $0 \leq a < 1$. Taking $a = 1/2$ gives the result.
\end{proof}

Our final lemma shows that as long as $B(n)$ doesn't grow too fast, the bound from the previous lemma yields a uniform bound on the entire sample characteristic matrix. This is done by specifying an error threshold for which Lemma~\ref{lem:unionBoundOverGrids} gives us a bound that holds with high probability, and then invoking a union bound.

\begin{lemma}
\label{lem:boundCharMatrixEntries}
Fix a continuous random variable $(X,Y)$, and let $D_n$ be $n$ iid samples from it. Let
\[ \hat{M}(D_n)_{k, \ell} = \frac{I^*(D_n, k, \ell)}{\log \min \{k, \ell \} } \]
Then for every $B(n) = \O{n^{1-\ep}}$, there exists an $\alpha > 0$ such that for sufficiently large $n$,
\[
\left| \hat{M}(D_n)_{k, \ell} - M_{k, \ell} \right|
	\leq \O{ \frac{\log n}{n^{\ep \alpha/2}} }
		+ \O{ \frac{1}{n^{\ep / 8}} }
\]
holds for all $k\ell \leq B(n)$ with probability $P(n) = 1 - o(1)$, where $M_{k,\ell}$ is the $k, \ell$-th entry of the characteristic matrix of $(X,Y)$.
\end{lemma}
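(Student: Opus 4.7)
The plan is to reduce the lemma directly to Lemma~\ref{lem:unionBoundOverGrids} via two observations. First, since $I^*$ is defined as a supremum over the same family of $k$-by-$\ell$ grids for both the empirical and the true distribution, we have the standard inequality
\[ \left| I^*(D_n, k, \ell) - I^*((X,Y), k, \ell) \right| \leq \sup_G \left| I(D_n|_G) - I((X,Y)|_G) \right|, \]
where the supremum is over all $k$-by-$\ell$ grids. Dividing by $\log \min \{k, \ell\} \geq \log 2$ turns a uniform bound on the right-hand side into a bound on $|\hat{M}(D_n)_{k,\ell} - M_{k,\ell}|$ of the same order. Thus the entire task reduces to controlling the supremum above simultaneously over all $(k,\ell)$ with $k\ell \leq B(n)$.

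For a fixed $(k,\ell)$, Lemma~\ref{lem:unionBoundOverGrids} already supplies a uniform bound over grids of the form $O(\log(4k\ell)/C(n)^{\alpha}) + O(1/n^{\epsilon/8})$ that holds with probability at least $1 - C(n)e^{-\Omega(n/C(n)^{1+2\alpha})}$, where $C(n) = k\ell \cdot n^{\epsilon/2}$. Because $k,\ell \geq 2$ we have $C(n) \geq n^{\epsilon/2}$, so $C(n)^{\alpha} \geq n^{\epsilon\alpha/2}$; and because $k\ell \leq B(n) = O(n^{1-\epsilon})$ we have $\log(4k\ell) = O(\log n)$. Substituting these into the bound yields exactly the $O(\log n / n^{\epsilon\alpha/2}) + O(1/n^{\epsilon/8})$ estimate claimed in the lemma.

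It remains to show that the same event holds simultaneously for all admissible $(k,\ell)$ with high probability, and this is where the main (and only) obstacle lies: we must choose $\alpha$ small enough so that the failure probability from Lemma~\ref{lem:unionBoundOverGrids} decays faster than any polynomial in $n$, so that a union bound over the $O(B(n) \log B(n)) = O(n^{1-\epsilon}\log n)$ pairs $(k,\ell)$ with $k\ell \leq B(n)$ still leaves probability $1 - o(1)$. Using $C(n) \leq B(n) n^{\epsilon/2} = O(n^{1-\epsilon/2})$, the exponent in the failure probability is
\[ \frac{n}{C(n)^{1+2\alpha}} \geq \Omega\!\left( n^{\,1 - (1-\epsilon/2)(1+2\alpha)} \right), \]
and one checks that $(1-\epsilon/2)(1+2\alpha) < 1$ precisely when $\alpha < \epsilon/(2(2-\epsilon))$. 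Picking any such $\alpha > 0$ makes the per-pair failure probability decay as $\exp(-n^{\delta})$ for some $\delta > 0$, which dominates the polynomial factor $O(n^{1-\epsilon}\log n)$ from the union bound. This yields the claimed overall probability $P(n) = 1 - o(1)$ and completes the proof.

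The only delicate point is the choice of $\alpha$: it must be small enough to make the exponential probability bound swallow the union-bound factor, yet it is then the same $\alpha$ that appears in the final rate $O(\log n / n^{\epsilon\alpha/2})$. Everything else is bookkeeping driven by the hypothesis $B(n) = O(n^{1-\epsilon})$, which is exactly what makes both $C(n)$ small enough for the concentration exponent to remain polynomial in $n$ and the number of matrix entries small enough to tolerate a union bound.
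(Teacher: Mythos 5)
Your proposal is correct and follows essentially the same route as the paper: reduce to Lemma~\ref{lem:unionBoundOverGrids} for each fixed $(k,\ell)$, use $n^{\ep/2} \leq C(n) \leq O(n^{1-\ep/2})$ to convert the per-pair bound and failure probability into the stated rates, and finish with a union bound over the polynomially many pairs; your derived constraint $\alpha < \ep/(2(2-\ep))$ is exactly the paper's $\alpha < \ep/(4-2\ep)$. The only difference is that you make explicit the (correct) step passing from a uniform bound over grids to a bound on the suprema $I^*$ and then on the normalized matrix entries, which the paper leaves implicit.
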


\begin{proof}
We show that any $0 < \alpha < \ep/(4 - 2\ep)$ suffices.

Fix $k, \ell$. Lemma~\ref{lem:unionBoundOverGrids} implies that with high probability the difference $|\hat{M}(D_n)_{k, \ell} - M_{k,\ell}|$ is at most
\begin{eqnarray*}
\O{ \frac{\log(4k\ell)}{C(n)^\alpha} } + \O{ \frac{1}{n^{\ep / 8}} }
	&\leq& \O{ \frac{\log(n)}{C(n)^\alpha} } + \O{ \frac{1}{n^{\ep / 8}} } \\
&\leq& \O{ \frac{\log n}{n^{\alpha \ep/2}} } + \O{ \frac{1}{n^{\ep / 8}} }
\end{eqnarray*}
where the first inequality comes from $k\ell \leq B(n)$ and second is because $C(n) = k\ell n^{\ep/2} \geq n^{\ep/2}$. The lemma further states that the probability this holds is at least
\[ 1 - C(n)e^{-\Omega(n/C(n)^{1+2\alpha})} \geq 1 - \O{n}e^{-\Omega(n^a)} \]
for some positive $a$. This is because $C(n) \leq B(n)n^{\ep/2} \leq \O{n^{1 - \ep/2}}$ for large $n$, and so our choice of $\alpha$ ensures that $C(n)^{1 + 2\alpha} = \O{n^{1-a}}$ for some $a > 0$.

We can then perform a union bound over all pairs $k\ell \leq B(n)$: since the number of such pairs can be bounded by a  polynomial in $n$, we have that the desired condition is satisfied for all $k\ell \leq B(n)$ with probability approaching 1.
\end{proof}

We are now ready to prove our main result. Recall that $m^\infty$ is the space of infinite matrices equipped with the supremum norm, and the projection $r_i : m^\infty \rightarrow m^\infty$ is defined by
\[ r_i(A)_{k,\ell} = \left\{
        \begin{array}{ll}
            A_{k,\ell} & \quad k\ell \leq i \\
            0 & \quad k\ell > i
        \end{array}
    \right. \]

\begin{thm}
\label{thm:estimator}
Let $f : m^\infty \rightarrow \R$ be uniformly continuous, and assume that $f \circ r_i \rightarrow f$ pointwise. Then for every random variable $(X,Y)$ supported on $[0,1] \times [0,1]$, the statistic $f(\hat{M}_B(\cdot))$ is a consistent estimator of $f(M(X,Y))$ provided $\omega(1) < B(n) \leq O(n^{1-\ep})$ for $\ep > 0$.
\end{thm}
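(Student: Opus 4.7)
The plan is to use a triangle-inequality decomposition that separates the finite-sample fluctuation on the truncated part of the matrix from the deterministic error incurred by the truncation itself. Writing $M = M(X,Y)$ for brevity and using the supremum norm on $m^\infty$, we have
\[ \bigl| f(\hat{M}_B(D_n)) - f(M) \bigr| \le \bigl| f(\hat{M}_B(D_n)) - f(r_{B(n)}(M)) \bigr| + \bigl| f(r_{B(n)}(M)) - f(M) \bigr|, \]
and I would bound the two terms separately.

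For the deterministic (truncation) term, since the hypothesis $\omega(1) < B(n)$ forces $B(n) \to \infty$, the assumed pointwise convergence $f \circ r_i \to f$ applied at the fixed matrix $M$ immediately gives $\bigl| f(r_{B(n)}(M)) - f(M) \bigr| \to 0$.

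For the random (finite-sample) term, observe that $\hat{M}_B(D_n)$ and $r_{B(n)}(M)$ both vanish outside the index set $\{(k,\ell) : k\ell \le B(n)\}$, so
\[ \bigl\| \hat{M}_B(D_n) - r_{B(n)}(M) \bigr\|_\infty = \max_{k\ell \le B(n)} \bigl| \hat{M}(D_n)_{k,\ell} - M_{k,\ell} \bigr|. \]
Because $B(n) \le O(n^{1-\ep})$, Lemma \ref{lem:boundCharMatrixEntries} applies and shows that this maximum is $O(\log n / n^{\ep\alpha/2}) + O(1/n^{\ep/8}) = o(1)$ with probability $1-o(1)$. Uniform continuity of $f$ then says that for every $\eta > 0$ there is a $\delta > 0$ such that $\|A - B\|_\infty < \delta$ implies $|f(A) - f(B)| < \eta$; for sufficiently large $n$ the $o(1)$ bound above is smaller than $\delta$, so the first term is less than $\eta$ with probability $1 - o(1)$. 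Combining via the triangle inequality yields $f(\hat{M}_B(D_n)) \to f(M)$ in probability, i.e.\ consistency.

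The main obstacle has really already been handled in Lemma \ref{lem:boundCharMatrixEntries}, which addresses the non-trivial point flagged in the ``Intuition'' subsection that a supremum of countably many consistent estimators need not itself be consistent; given that lemma, the present theorem is largely a packaging exercise in uniform continuity. It is worth noting how the two regimes for $B(n)$ match the two terms in the decomposition: the lower bound $B(n) = \omega(1)$ is exactly what is needed to kill the truncation term via the projection hypothesis, while the upper bound $B(n) \le O(n^{1-\ep})$ is exactly the range in which Lemma \ref{lem:boundCharMatrixEntries} gives the uniform vanishing bias required for the first term.
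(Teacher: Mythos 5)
Your proposal is correct and follows essentially the same route as the paper's proof: the identical triangle-inequality split through $f(r_{B(n)}(M))$, with the truncation term killed by the pointwise convergence of $f \circ r_i$ and $B(n) = \omega(1)$, and the stochastic term controlled by Lemma~\ref{lem:boundCharMatrixEntries} together with uniform continuity of $f$. The only cosmetic difference is that you invoke the $\varepsilon$--$\delta$ definition of uniform continuity directly, whereas the paper packages the same step as an adaptation of the continuous mapping theorem via an auxiliary (empty) set $C_{\delta^*}$.
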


\begin{proof}
Let the random variable $D_n$ denote $n$ iid samples from $(X,Y)$. We wish to show that $f(\hat{M}_B(D_n))$ converges in probability to $f(M(X,Y))$.

Let $M_i = r_i(M)$ and let $N$ denote $B(n)$. We begin by writing
\begin{eqnarray*}
\left| f \left( \hat{M}_B(D_n) \right) - f(M) \right| &\leq& \left| f \left( \hat{M}_B(D_n) \right) - f \left( M_N \right) \right| + \left| f \left( M_N \right) - f(M) \right| \\
	&=& \left| f \left( \hat{M}_B(D_n) \right) - f \left( M_N \right) \right| + \left| f(r_N(M)) - f(M) \right|
\end{eqnarray*}
and observing that as $n \rightarrow \infty$, the second term vanishes by the pointwise convergence of $f \circ r_i$ and the fact that $B(n) > \omega(1)$. It therefore suffices to show that the first term converges to 0 in probability. Since $f$ is uniformly continuous, we can establish this via a simple adaptation of the continuous mapping theorem, which says that if the sequence of random variables $R_n \rightarrow R$ in probability, and $g$ is continuous, then $g(R_n) \rightarrow g(R)$ in probability. We will replace $R$ with a second sequence, and replace continuity with uniform continuity.

Let $\| \cdot \|$ denote the supremum norm on $m^\infty$, and fix any $z > 0$. Then, for any $\delta > 0$, define
\[ C_{\delta} = \left\{ A \in m^\infty : \exists A' \in m^\infty\mbox{ s.t. } \| A - A' \| < \delta, \left| f(A) - fA') \right| > z \right\} \]
This is the set of matrices $A \in m^\infty$ for which it is possible to find, within a $\delta$-neighborhood of $A$, a second matrix that $f$ maps to more than $z$ away from $f(A)$. Because $f$ is uniformly continuous, there exists a $\delta^*$ sufficiently small so that $C_{\delta^*} = \emptyset$.

Suppose that $| f(\hat{M}_B(D_n)) - f(M_N) | > z$. This means that either $\| \hat{M}_B(D_n) - M_N \| > \delta^*$, or $M_N \in C_{\delta^*}$. The latter option is impossible since $C_{\delta^*} = \emptyset$, and Lemma~\ref{lem:boundCharMatrixEntries} tells us that $\Pr{\| \hat{M}_B(D_n) - M_N \| > \delta^*} \rightarrow 0$ as $n$ grows. We therefore have that
\[ \left| f \left( \hat{M}_B(D_n) \right) - f(M_N) \right| \rightarrow 0 \]
in probability, as desired.
\end{proof}

\subsection{Discussion: the relationship between $\popMIC$ and mutual information}
There are several advantages to the interpretation of $\MIC$ as a consistent estimator of $\popMIC$. First, it helps in understanding what in the statistics defined in \cite{MINE} is essential and what is ancillary. In particular, it is now clear that the choice of $B(n)$ does not change the actual quantity being estimated, something that while intuitive was not rigorously established previously. Thus, while the question of choosing $B(n)$ for practical performance still merits further investigation, we now rest assured that asymptotically the different choices are all equivalent, and that the choice of $B(n)$ represents a bias versus variance tradeoff in the estimator.

It is also now easier to see the relationship of $\MIC$ and $\popMIC$ to mutual information. We can do so through the following well known alternate definition of mutual information, which we state without proof.
\begin{prop}
\label{prop:mutualInfoSupremumDef}
For every pair of jointly distributed random variables $(X,Y)$, we have
\[ I(X, Y) = \sup_G I((X,Y)|_G) \]
\end{prop}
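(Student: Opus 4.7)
The plan is to combine the data processing inequality with the classical variational characterization of mutual information due to Gelfand, Yaglom, and Perez.

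The ``$\leq$'' direction is immediate: for every grid $G$, the random variable $(X,Y)|_G = (\mathrm{col}_G(X),\mathrm{row}_G(Y))$ is a deterministic function of $(X,Y)$, so the data processing inequality yields $I((X,Y)|_G) \leq I(X,Y)$, whence $\sup_G I((X,Y)|_G) \leq I(X,Y)$.

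For the ``$\geq$'' direction I would invoke the Gelfand--Yaglom--Perez theorem (see Pinsker, \emph{Information and Information Stability of Random Variables and Processes}, 1964), which expresses $I(X,Y)$ as the supremum of $I(\mathcal{A}(X);\mathcal{B}(Y))$ over all pairs of finite Borel partitions $\mathcal{A}$ of $[0,1]$ and $\mathcal{B}$ of $[0,1]$. Since a grid on $[0,1]^2$ is exactly specified by a pair of partitions of $[0,1]$ into intervals, it suffices to show that restricting the Gelfand--Yaglom--Perez supremum to interval partitions leaves its value unchanged. I would establish this by a standard approximation argument: given arbitrary Borel partitions $\mathcal{A}=\{A_i\}$ and $\mathcal{B}=\{B_j\}$ and any $\eta>0$, use outer regularity of the marginal distributions of $X$ and $Y$ to approximate each $A_i$ and $B_j$ by a finite disjoint union of open intervals, then trim overlaps sequentially to obtain bona fide interval-based partitions $\mathcal{A}'$, $\mathcal{B}'$ of $[0,1]$ whose cells differ from those of $\mathcal{A}$, $\mathcal{B}$ only by sets of probability $O(\eta)$. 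Letting $G$ be the grid whose columns and rows are the constituent intervals of $\mathcal{A}'$ and $\mathcal{B}'$, the partitions $\mathcal{A}'$ and $\mathcal{B}'$ are coarsenings of $\mathrm{col}_G$ and $\mathrm{row}_G$, so a further application of data processing gives $I((X,Y)|_G) \geq I(\mathcal{A}'(X);\mathcal{B}'(Y))$.

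The main obstacle is controlling the error introduced by the perturbation $\mathcal{A} \to \mathcal{A}'$, $\mathcal{B} \to \mathcal{B}'$: one needs $|I(\mathcal{A}(X);\mathcal{B}(Y)) - I(\mathcal{A}'(X);\mathcal{B}'(Y))| \to 0$ as $\eta \to 0$. This continuity is essentially the content of Lemma~\ref{lem:boundMIOnSubgrids}, which bounds the change in the mutual information of a discretization in terms of the $\ell^1$ perturbation of the induced joint distribution on a common refinement, and can be applied here verbatim. Combining all the inequalities and letting $\eta \to 0$ yields $\sup_G I((X,Y)|_G) \geq I(\mathcal{A}(X);\mathcal{B}(Y))$ for every pair of Borel partitions $\mathcal{A}, \mathcal{B}$; taking the supremum over such partitions then gives the result via Gelfand--Yaglom--Perez. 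The degenerate case $I(X,Y) = +\infty$ is handled by replacing ``approximate $I(X,Y)$ within $\eta$'' with ``exceed any fixed $M$'' throughout, which the same quantitative approximation step accommodates.
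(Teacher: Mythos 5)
The paper states Proposition~\ref{prop:mutualInfoSupremumDef} explicitly ``without proof,'' treating it as a well-known characterization, so there is no in-paper argument to compare yours against. Your outline is the standard one and is essentially correct: data processing gives $\sup_G I((X,Y)|_G) \leq I(X,Y)$; the Gelfand--Yaglom--Perez/Dobrushin characterization gives $I(X,Y)$ as a supremum over arbitrary finite Borel product partitions; and regularity of the marginals lets you pass from Borel partitions to unions of intervals, after which refining into a genuine grid and coarsening back costs nothing by another application of data processing. The case $I(X,Y)=\infty$ is handled correctly.

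The one step I would push back on is your claim that the continuity estimate ``is essentially the content of Lemma~\ref{lem:boundMIOnSubgrids} \dots and can be applied here verbatim.'' That lemma is not an $\ell^1$ (total-variation) continuity statement: it bounds the change in mutual information by $(\log B)\,A\sum_{i,j}|\ep_{i,j}|$ where $\ep_{i,j}$ is the \emph{relative} error in each cell probability, and it requires $|\ep_{i,j}|\leq 1-a$. In your approximation step you only control the absolute mass displaced (of order $\eta$); a cell of the original partition with very small probability can then have relative error far exceeding $1$, so the lemma's hypothesis fails and its bound degenerates. The fix is easy and stays within the paper's toolbox: since the number of cells of $\mathcal{A}'\times\mathcal{B}'$ equals that of $\mathcal{A}\times\mathcal{B}$, you want continuity of discrete mutual information on a \emph{fixed} $k\times\ell$ alphabet under moving at most $\delta$ probability mass, which is exactly Proposition~\ref{prop:boundedVariationOfI} (giving a bound of $4H_b(2\delta)+7\delta\log\min\{k,\ell\}\to 0$ as $\delta\to 0$), or alternatively Lemma~\ref{lem:entropyBoundForChangedMass} applied to the three entropy terms. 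With that substitution the argument goes through.
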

We can re-interpret this alternate definition of mutual information as follows
\begin{cor}
For every pair of jointly distributed random variables $(X,Y)$, we have
\[ I(X,Y) = \sup_{k, \ell > 1} I^*((X,Y), k, \ell) \]
\end{cor}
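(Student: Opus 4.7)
The plan is to derive the corollary directly from Proposition~\ref{prop:mutualInfoSupremumDef} by partitioning the supremum over all grids according to the grid's dimensions. Writing $G(k,\ell)$ for the set of $k$-by-$\ell$ grids as in Definition~\ref{def:charmatrix}, every grid $G$ belongs to exactly one $G(k,\ell)$ for some positive integers $k,\ell$, so
\[ \sup_G I\bigl((X,Y)|_G\bigr) \;=\; \sup_{k, \ell \geq 1} \,\sup_{G \in G(k,\ell)} I\bigl((X,Y)|_G\bigr) \;=\; \sup_{k, \ell \geq 1} I^*\bigl((X,Y), k, \ell\bigr), \]
where the second equality is just the definition of $I^*$.

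Next I would handle the degenerate strata $k=1$ or $\ell=1$. If $G \in G(1,\ell)$ then $\mathrm{col}_G(X)$ is almost surely constant, so $(X,Y)|_G$ has a deterministic first coordinate and $I\bigl((X,Y)|_G\bigr) = 0$; the same reasoning applies when $\ell=1$. Hence the terms $I^*((X,Y),k,\ell)$ for which $\min\{k,\ell\} = 1$ all equal $0$. Since mutual information is nonnegative and the supremum on the right of the corollary is itself nonnegative (we can always take, e.g., $k=\ell=2$, which yields a value in $[0, \log 2]$), dropping these zero terms does not change the supremum.

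Combining these two observations with Proposition~\ref{prop:mutualInfoSupremumDef} then gives
\[ I(X,Y) \;=\; \sup_G I\bigl((X,Y)|_G\bigr) \;=\; \sup_{k, \ell > 1} I^*\bigl((X,Y), k, \ell\bigr), \]
which is the claim. There is no substantive obstacle here; the only minor points of care are (i) recognizing that the definition of $I^*$ uses a $\max$ over $G(k,\ell)$, which matches the inner supremum above (and if the maximum is not attained for some $(k,\ell)$, the identity still holds after replacing $\max$ with $\sup$, with no effect on the outer supremum), and (ii) verifying that discarding the $k=1$ and $\ell=1$ strata is harmless, which follows from the marginal-collapse argument above.
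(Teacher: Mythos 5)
Your proof is correct and follows the same route the paper intends: the corollary is presented there as an immediate restatement of Proposition~\ref{prop:mutualInfoSupremumDef} obtained by stratifying the supremum over grids by their dimensions, which is exactly what you do. Your extra care in discarding the degenerate $k=1$ or $\ell=1$ strata (where the mutual information vanishes) and in noting the $\max$-versus-$\sup$ point is sound and fills in details the paper leaves implicit.
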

Comparison with the definition of $\popMIC$ now shows the relationship between $\popMIC$ and $I$. A simple normalization of $I$ would subject all the grids in Proposition \ref{prop:mutualInfoSupremumDef} to the same normalization; instead, $\popMIC$ subjects different sets of grids to different normalizations. Thus, $\popMIC$ is a penalized version of $I$ in which simple grids can take precedence over complex grids if they do well enough. As we will show in the next section, it actually turns out that this penalization can also be thought of as the ``minimal" smoothing operation necessary to make $\popMIC$ uniformly continuous. ($I$ itself is not continuous.)

Our consistency results, together with properties of $\popMIC$ that are easy to prove, subsume most of the theoretical results proven in \cite{MINE}. In particular, the results about functional relationships and superpositions of functional relationships achieving perfect scores and statistically independent variables achieving vanishing scores are now all just corollaries of our main theorem. Furthermore, the result that $\MIC \rightarrow 0$ in the case of statistical independence was only proven under certain restrictions on the space of grids that could be maximized over. Our main theorem now ensures that that result holds without these restrictions.

Lastly, the fact that $\MIC$ is simply a consistent estimator of $\popMIC$ allows us to ask whether there exist other, more efficient estimators of $\popMIC$. Over the next few sections we will develop the theory necessary to introduce new statistics for estimating $\popMIC$ that have better bias/variance properties, as well as better equitability, power against independence, and runtime. The first step to doing this is to prove some properties of $\popMIC$ and provide an alternate characterization of it. This is our task in the next section.

\section{The continuity of $\popMIC$ and the population characteristic matrix}
In the previous section we defined $\popMIC(X,Y)$, the maximal information coefficient of the jointly distributed random variables $(X,Y)$, to be the supremum of a matrix $M(X,Y)$ which we called the population characteristic matrix. Here we prove that when considered as functions of the probability density function (pdf) of $(X,Y)$, both $\popMIC(X,Y)$ and $M(X,Y)$ are continuous.

This is interesting for two reasons. First, it implies that computing the $\popMIC$ of a density estimated via a consistent density estimator is a legitimate way to estimate the $\popMIC$ of a sample. Second, it sheds new light on the nature of the normalization by $\log \min \{k, \ell\}$ in the definition of the characteristic matrix. This normalization turns out to be crucial to our proof, and we will in fact show that it is in some sense the ``minimal" one necessary for achieving continuity. Along the way we will see that mutual information (as well as the Linfoot correlation) is not a continuous function of the pdf. This suggests an additional way to think of $\popMIC$: as a canonically smoothed version of mutual information that is uniformly continuous.

\subsection{Proving continuity}
When $f$ is a pdf of a random variable $(X,Y)$, we will abuse notation by using $\popMIC(f)$ and $M(f)$ to denote $\popMIC(X,Y)$ and $M(X,Y)$. Also, for a set $S$, we will write $\mathcal{P}(S)$ to denote the space of pdf's supported on $S$, equipped with the $L^1$ norm.

We begin by observing that the family of maps consisting of applying any finite grid to a pdf $f$ is uniformly equicontinuous. The reason this holds is that applying a grid to $f$ corresponds to applying a deterministic function to the random variable described by $f$, and deterministic functions cannot increase statistical distance.

\begin{prop}
\label{prop:equicontinuityOfGridding}
Given a grid $G$ and a pdf $f \in \mathcal{P}([0,1]^2)$, define $f|_G \in \mathcal{P}(G)$ to be the density function over the cells of $G$ that results from integrating $f$ in each cell of $G$. Let $\mathbb{G}$ be the set of all finite grids. The family of maps $\{ f \mapsto \mcZ |_G : G \in \mathbb{G} \}$ is uniformly equicontinuous.
\end{prop}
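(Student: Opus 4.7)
The plan is to establish the stronger contraction bound
\[
\|f|_G - g|_G\|_1 \;\leq\; \|f - g\|_1,
\]
which is claimed to hold for every finite grid $G \in \mathbb{G}$ and every pair $f,g \in \mathcal{P}([0,1]^2)$. Once this contraction bound is in hand, uniform equicontinuity of the family $\{f \mapsto f|_G : G \in \mathbb{G}\}$ is immediate: given $\epsilon > 0$, the choice $\delta = \epsilon$ works for every grid $G$ simultaneously, because $\|f - g\|_1 < \delta$ then forces $\|f|_G - g|_G\|_1 < \epsilon$ regardless of which $G$ is applied. So the whole proof reduces to establishing this single inequality.

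To prove the contraction, I would unpack the $L^1$ norm on $\mathcal{P}(G)$ directly. Writing the cells of $G$ as $\{C_{i,j}\}$, the definition of $f|_G$ gives $(f|_G)(C_{i,j}) = \int_{C_{i,j}} f$ and similarly for $g$. Applying the triangle inequality inside each cell and then using the fact that the cells partition $[0,1]^2$,
\[
\|f|_G - g|_G\|_1 \;=\; \sum_{i,j} \left| \int_{C_{i,j}} (f - g) \right| \;\leq\; \sum_{i,j} \int_{C_{i,j}} |f - g| \;=\; \int_{[0,1]^2} |f - g| \;=\; \|f - g\|_1.
\]
Conceptually, this is the familiar contraction property of the total-variation (i.e.\ $L^1$) distance under deterministic push-forward maps: applying $G$ corresponds to collapsing each point of $[0,1]^2$ onto its containing cell, and deterministic transformations cannot increase statistical distance.

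There is no real obstacle here; the argument is essentially one application of the triangle inequality together with countable additivity of the integral. The only mild notational point to handle is that grids in $G(k,\ell)$ are allowed to have empty rows or columns, but such rows/columns contribute cells of measure zero that add zero to both sides of the bound, so the argument goes through unchanged. Because the inequality is uniform in $G$, we actually obtain more than uniform equicontinuity: the family is uniformly $1$-Lipschitz, which in particular gives the continuity statements about $\popMIC$ and $M(f)$ that the subsequent parts of the paper will build on.
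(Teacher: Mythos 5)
Your proposal is correct and follows essentially the same route as the paper: both reduce the claim to the $1$-Lipschitz contraction $\|f|_G - g|_G\|_1 \leq \|f-g\|_1$, which the paper obtains by citing the data-processing property of total variation distance under deterministic maps and which you verify directly via the triangle inequality over cells. The direct computation is a perfectly valid (and slightly more self-contained) way of establishing the same inequality, and your handling of the $\delta = \epsilon$ choice and of empty rows/columns is fine.
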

\begin{proof}
To establish uniform equicontinuity, we need to show that, given some $\ep > 0$, we can choose $\delta$ in a way that does not depend on $G$ or on $f$. We can do this because if $f$ and $f'$ are the pdfs of the random variables $R$ and $R'$ respectively, then the $L^1$ distance between $f$ and $f'$ equals twice $\Delta(R, R')$, the total variation distance between $R$ and $R'$. Moreover, for a fixed grid $G$, the pdfs $f|_G$ and $f'|_G$ are the pdf's of the random variables $G(R)$ and $G(R')$ respectively, where the function $G$ is interpreted as mapping a sample to its row and column coordinates in $G$. Thus, since no deterministic function of $R$ and $R'$ can increase their statistical distance, we get
\[ \left| f|_G - f'|_G \right| = 2 \Delta \left( G(R), G(R')\right) \leq 2 \Delta \left( R, R' \right) = \left| f - f' \right| \]
where $| \cdot |$ denotes the $L^1$ norm. This is the desired result.
\end{proof}

At this point it is tempting to try to use continuity properties of discrete mutual information to obtain uniform continuity of the characteristic matrix. And indeed, this strategy does yield that each {\em individual} entry of the characteristic matrix is a uniformly continuous function. However, mutual information is only uniformly continuous for a fixed grid resolution, and to obtain continuity of the entire (infinite) characteristic matrix we need to make a statement about all grid resolutions simultaneously.

Given this issue, how is it possible that the entire characteristic matrix is a continuous function? The answer is that this is achieved by the normalization of the characteristic matrix. To see why, suppose we have a distribution over a $k$-by-$\ell$ grid and we are allowed to move around $\delta$ probability mass for some small $\delta$. The largest change in discrete mutual information that this can cause increases as we increase $k$ and $\ell$. However, it turns out that we can bound the extent of this ``non-uniformity": the proposition below shows that as we move the probability mass around, the discrete mutual information can grow only linearly in the amount of mass we move with the constant factor bounded by $O(\log \min \{k, \ell \})$. Because $\log \min \{k, \ell\}$ is the quantity by which we normalize each entry of the characteristic matrix, this will be exactly enough to make the normalized matrix continuous.

\begin{prop}
\label{prop:boundedVariationOfI}
Let $I_{k,\ell} : \mathcal{P}(\{1,\ldots,k\} \times \{1, \ldots, \ell\}) \rightarrow \R$ denote the discrete mutual information function on $k$-by-$\ell$ grids. For $0 < \delta \leq 1/4$, the maximal amount we can change $I_{k,\ell}$ by moving at most $\delta$ of the probability mass is
\[ 4H_b(2\delta) + 7 \delta \log \min \{k, \ell \} \]
\end{prop}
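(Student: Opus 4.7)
The plan is to compare $I(X;Y)$ under two distributions $P, Q$ on the $k \times \ell$ grid by isolating the ``disagreement'' between them in a single binary latent variable, and then exploiting the universal bound $I(X;Y) \le \log\min\{k,\ell\}$. Without loss of generality assume $k \le \ell$. Let $\eta = \frac{1}{2}\|P-Q\|_1 \le \delta$ and let $R_{ij} = \min(P_{ij}, Q_{ij})$; then $\sum_{ij} R_{ij} = 1-\eta$, and we may write
\[ P = (1-\eta) P_0 + \eta P_1, \qquad Q = (1-\eta) P_0 + \eta Q_1, \]
where $P_0 = R/(1-\eta)$, $P_1 = (P-R)/\eta$, and $Q_1 = (Q-R)/\eta$ are all valid probability distributions on the same $k \times \ell$ grid. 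Crucially, the $P_0$ component is common to the two mixtures.

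Next, introduce an auxiliary binary variable $B$ taking the value $1$ with probability $\eta$, and realize each distribution by drawing $(X,Y) \mid B = 0 \sim P_0$ and $(X,Y) \mid B = 1 \sim P_1$ (respectively $Q_1$) so that the marginal of $(X,Y)$ equals $P$ (respectively $Q$). The chain rule for mutual information applied two ways gives
\[ I(X;Y) = I(X;B) + I(X;Y \mid B) - I(X;B \mid Y), \]
and conditioning on $B$ decomposes $I(X;Y \mid B) = (1-\eta)\, I_{P_0}(X;Y) + \eta\, I_{P_1}(X;Y)$ under $P$'s extended model, with $Q_1$ replacing $P_1$ under $Q$'s. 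The shared $(1-\eta)\, I_{P_0}(X;Y)$ contribution cancels upon subtraction, so the triangle inequality combined with the universal bounds $I(X;B), I(X;B \mid Y) \le H(B) = H_b(\eta)$ and $I_{P_1}(X;Y), I_{Q_1}(X;Y) \le \log\min\{k,\ell\}$ gives
\[ \left| I_P(X;Y) - I_Q(X;Y) \right| \le 2 H_b(\eta) + \eta \log\min\{k,\ell\}. \]
Since $\eta \le \delta \le 1/4$ and $H_b$ is increasing on $[0, 1/2]$, this is dominated by the claimed $4 H_b(2\delta) + 7\delta \log\min\{k,\ell\}$.

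The main obstacle is obtaining $\log \min\{k, \ell\}$ rather than the $\log(k\ell)$ factor that a direct application of Fannes' inequality to $I(X;Y) = H(X) + H(Y) - H(X,Y)$ would produce, which would ruin the argument for uniform continuity of the normalized characteristic matrix. The mixture trick sidesteps this by routing the comparison through the universal bound $I \le \log\min\{k,\ell\}$ on the ``difference'' components $P_1, Q_1$, rather than through term-by-term entropy continuity, so the factor $\log\max\{k, \ell\}$ never appears. This sharper normalization is exactly the point of the proposition: it is precisely what makes $\log\min\{k,\ell\}$ the ``right'' normalization in the characteristic matrix for achieving uniform continuity in the next step.
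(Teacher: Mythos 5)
Your proof is correct, and it takes a genuinely different route from the paper's. The paper works from $I(X;Y)=H(Y)-H(Y\mid X)$: it bounds $|H(Y)-H(Y')|$ with a Fannes-type perturbation lemma (Lemma~\ref{lem:entropyBoundForChangedMass}), then controls the conditional-entropy term by applying that same lemma within each column, weighting by the column masses, and recombining via the concavity bound of Lemma~\ref{lem:boundWeightedAverageOfEntropy}; the $\log\min\{k,\ell\}$ factor emerges there only because the conditional entropies $H(Y\mid X=x)$ are each bounded by $\log k$. Your maximal-coupling decomposition $P=(1-\eta)P_0+\eta P_1$, $Q=(1-\eta)P_0+\eta Q_1$ together with the chain-rule identity $I(X;Y)=I(X;B)+I(X;Y\mid B)-I(X;B\mid Y)$ replaces all of that appendix machinery with two universal bounds, $I(\cdot\,;B),I(\cdot\,;B\mid Y)\le H_b(\eta)$ and $I\le\log\min\{k,\ell\}$, and it yields the strictly sharper constant $2H_b(\eta)+\eta\log\min\{k,\ell\}$ with $\eta\le\delta$, which indeed dominates into the stated $4H_b(2\delta)+7\delta\log\min\{k,\ell\}$ for $\delta\le 1/4$. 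What your approach buys is brevity, better constants, and a transparent explanation of why $\log\min\{k,\ell\}$ (rather than $\log k\ell$ from a naive three-term Fannes argument) is the right normalization; what the paper's approach buys is reuse, since the same perturbation lemmas are needed anyway for the consistency proofs in Section~\ref{sec:consistency}, where the perturbations are expressed multiplicatively per cell rather than as a total-variation budget. The only small thing I would make explicit is the degenerate case $\eta=0$ (where $P_1,Q_1$ are undefined but there is nothing to prove) and the one-line triangle-inequality step combining the three differences.
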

\begin{proof}
It suffices to bound only the maximal possible {\em increase} in mutual information, since if we have a decrease in going from distribution $A$ to distribution $B$ then we can consider $B$ to be the starting distribution instead.

Without loss of generality, assume $k \leq \ell$, so that $\log \min \{k, \ell\} = \log k$. Suppose we have a pair $(X,Y)$ of jointly distributed random variables, and we move around at most $\delta$ probability mass to arrive at a new pair $(X', Y')$. Using $I(X,Y) = H(Y) - H(Y | X)$, we write
\[ \left| I(X,Y) - I(X',Y') \right| \leq \left| H(Y) - H(Y') \right| + \left| H(Y | X) - H(Y' | X') \right| \]

We now use Lemma~\ref{lem:entropyBoundForChangedMass}, which relates statistical distance to changes in entropy and is proven in the appendix, to separately bound each of the terms on the right hand side. Straightforward application of the lemma to $| H(Y) - H(Y') |$ shows that it is at most $2 H_b(2 \delta) + 3\delta \log k$.

Bounding the term with the conditional entropies is more complicated. Let $p_x = \Pr{X = x}$, and let $p_x' = \Pr{X' = x}$. We have
\begin{eqnarray}
\left| H(Y | X) - H(Y' | X') \right| &=& \sum_x \left| p_x H(Y | X = x) - p_x' H(Y' | X' = x) \right| \nonumber \\
	&\leq& \sum_x \left( p_x \left| H(Y | X = x) - H(Y' | X' = x) \right| + \left| p_x' - p_x \right| H(Y' | X' = x) \right) \nonumber \\
	&=& \sum_x p_x \left| H(Y | X = x) - H(Y' | X' = x) \right| + \sum_x \left| p_x' - p_x \right| \log k \nonumber \\
	&\leq& \sum_x p_x \left| H(Y | X = x) - H(Y' | X' = x) \right| + \delta \log k \label{line:boundMutualInfoChange}
\end{eqnarray}
where the last line is because $\sum_x |p_x - p_x'| \leq \delta$ and $H(Y' | X' = x) \leq \log k$.

Now let $\delta_{x+}$ be the magnitude of all the probability mass entering any cell in column $x$, let $\delta_{x-}$ be the magnitude of all the probability mass leaving any cell in column $x$, and let $\delta_x = \delta_{x+} + \delta_{x-}$. Using this notation, we can again apply Lemma~\ref{lem:entropyBoundForChangedMass} to obtain
\begin{eqnarray*}
\sum_x p_x \left| H(Y | X = x) - H(Y' | X' = x) \right| &\leq& \sum_x p_x \left( 2H_b \left( \frac{2\delta_x}{p_x} \right) + 3 \frac{\delta_x}{p_x} \log k \right) \\
	&=& 2\sum_x p_x H_b \left( \frac{2\delta_x}{p_x} \right) + 3 \sum_x \delta_x \log k \\
	&\leq& 2\sum_x p_x H_b \left( \frac{2\delta_x}{p_x} \right) + 3 \delta \log k \\
	&\leq& 2 H_b(2 \delta) + 3 \delta \log k
\end{eqnarray*}
where the last line is by application of Lemma~\ref{lem:boundWeightedAverageOfEntropy} from the appendix, which bounds weighted sums of binary entropies and was used in Section~\ref{sec:consistency} as well.

Combining this with Line~\eqref{line:boundMutualInfoChange} gives that
\[ \left| H(Y | X) - H(Y' | X') \right| \leq 2H_b(2\delta) + 4\delta \log k \]
which, together with the bound on $\left| H(Y) - H(Y') \right|$, gives the result.
\end{proof}

Having bounded the extent to which variation in mutual information depends on grid resolution, we are now ready to show the uniform continuity of the characteristic matrix.

\begin{thm}
The map $f \mapsto M(f)$ is uniformly continuous.
\end{thm}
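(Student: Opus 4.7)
The plan is to bound $\sup_{k,\ell} |M(f)_{k,\ell} - M(f')_{k,\ell}|$ uniformly in $f, f', k, \ell$ in terms of $\|f - f'\|_1$, with the bound going to $0$ as $\|f - f'\|_1 \to 0$. The three inputs we have available are: (i) Proposition \ref{prop:equicontinuityOfGridding}, which says that gridding contracts $L^1$ distance uniformly in $G$; (ii) Proposition \ref{prop:boundedVariationOfI}, which quantifies how much the discrete mutual information on a $k$-by-$\ell$ grid can change when we move a given amount of probability mass; and (iii) the normalization $\log \min\{k,\ell\}$ in the definition of $M$. The key observation is that the $\log \min\{k,\ell\}$ factor appearing in Proposition \ref{prop:boundedVariationOfI} matches the normalization exactly, so after dividing by $\log\min\{k,\ell\}$ the dependence on grid resolution disappears.

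In more detail, I would first fix $\eta > 0$ small (say $\eta \leq 1/2$) and assume $\|f - f'\|_1 < \eta$. For any $k,\ell > 1$ and any grid $G \in G(k,\ell)$, Proposition \ref{prop:equicontinuityOfGridding} gives $\|f|_G - f'|_G\|_1 \leq \|f - f'\|_1 < \eta$. Since $L^1$ distance between pdfs is twice total variation distance, the distribution $f'|_G$ can be obtained from $f|_G$ by moving at most $\eta/2$ probability mass. Applying Proposition \ref{prop:boundedVariationOfI} with mass parameter $\eta/2$ then gives
\[ \left| I(f|_G) - I(f'|_G) \right| \;\leq\; 4 H_b(\eta) + \tfrac{7\eta}{2}\, \log \min\{k,\ell\} \]
for every grid $G \in G(k,\ell)$, with a bound that does not depend on $G$ or on the particular choice of $f,f'$.

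Next I would transfer this pointwise bound into a bound on $I^*$: for any $\xi > 0$ choose $G$ with $I(f|_G) \geq I^*((X,Y),k,\ell) - \xi$, and observe $I^*((X',Y'),k,\ell) \geq I(f'|_G) \geq I(f|_G) - 4H_b(\eta) - \tfrac{7\eta}{2}\log\min\{k,\ell\}$, and symmetrically. Letting $\xi \to 0$ yields the same inequality with $I^*$ in place of $I$. Dividing through by $\log \min\{k,\ell\} \geq \log 2$ (since $k,\ell \geq 2$) gives
\[ \left| M(f)_{k,\ell} - M(f')_{k,\ell} \right| \;\leq\; \frac{4 H_b(\eta)}{\log 2} + \frac{7\eta}{2}, \]
which is independent of $k$ and $\ell$.

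Finally, given $\epsilon > 0$, since $H_b(\eta) \to 0$ as $\eta \to 0$, I can choose $\eta$ small enough that the right-hand side above is below $\epsilon$; this provides the $\delta$ required for uniform continuity with respect to the supremum norm on $m^\infty$. The main conceptual obstacle is precisely why the theorem is interesting: the naive bound on how much $I$ can vary under a small $L^1$ perturbation grows with grid resolution, so without the normalization one could not hope for uniform continuity over all entries of the matrix simultaneously. The technical heart is therefore Proposition \ref{prop:boundedVariationOfI}, which says that the blow-up is at most linear in $\log \min\{k,\ell\}$ — exactly the normalization factor — and the present argument amounts to observing that once that factor is divided out, uniformity over $k,\ell$ is automatic.
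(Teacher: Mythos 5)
Your proposal is correct and follows essentially the same route as the paper: both proofs combine Proposition~\ref{prop:equicontinuityOfGridding} (gridding contracts $L^1$ distance uniformly in $G$) with Proposition~\ref{prop:boundedVariationOfI}, and both rest on the observation that the $\log\min\{k,\ell\}$ growth in the mutual-information bound is exactly cancelled by the normalization in $M$. The only differences are presentational — you make the quantitative bound explicit and pass to $I^*$ via near-optimal grids, whereas the paper phrases the same step as uniform equicontinuity of a family of suprema — and your division by $\log 2$ in the $H_b$ term is in fact slightly more careful than the paper's.
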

\begin{proof}
We complete the proof in three steps. First, we show that a certain family of functions $F$ is uniformly equicontinuous. Second, we use this to show that a different family $F'$ consisting of functions of the form $\sup_{g \in A} g$ with $A \subset F$ is uniformly equicontinuous. Finally, we argue that since the entries of $M(f)$ consist of the functions in $F'$, this gives the result.

Define
\[ F = \left\{ f \mapsto \frac{I_{k,\ell}(f_G)}{\log \min \{k, \ell \} } : G \in \bigcup_{k, \ell \in \Z} G(k, \ell) \right\} \]
$F$ is uniformly equicontinuous by the following argument. Given some $\ep > 0$, we know (Proposition \ref{prop:equicontinuityOfGridding}) that restricting $f$ to an $\ep$-ball around any $f_0$ means that $f|_G$ will remain within an $\ep$-ball of $f_0|_G$ for any $G$. Proposition~\ref{prop:boundedVariationOfI} then tells us that if $\ep$ is sufficiently small then $I_{k,\ell}(\Phi|_G)$ will be at most
\[ 4H_b(2\ep) + 7\ep \log \min \{k,\ell\} \]
away from $I_{k,\ell}(\Phi_0|_G)$. After the normalization, this becomes at most $4H_b(2\ep) + 7\ep$, which goes to 0 (uniformly in $f$) as $\ep$ approaches 0, as desired.

Next, define 
\[ F' = \left\{ f \mapsto M(f)_{k,\ell} : k, \ell \in \Z_{> 1} \right\} \]
Each map in $F'$ is of the form $\sup_{g \in F'} g$ for some $A \subset F$. Therefore, for a given $\ep > 0$, whatever $\delta$ establishes the uniform equicontinuity for $F$ can be used to establish continuity of all the functions in $F'$. (To see this: $\sup_{g \in A} g$ can't increase by more than $\ep$ if no $g$ increases by more than $\ep$, and $\sup_{g \in A}g$ is also lower bounded by any of the $g$'s, so it can't decrease by more than $\ep$ either.) Since we can use the same $\delta$ for all of the maps in $F'$, they therefore form a uniformly equicontinuous family.

Finally, the $\delta$ provided by the uniform equicontinuity of $F'$ also ensures that $M(\Phi)$ is within $\ep$ of $M(\Phi_0)$ in the supremum norm, thus giving the uniform continuity of $f \mapsto M(f)$.
\end{proof}

\begin{cor}
The map $f \mapsto \popMIC(f)$ is uniformly continuous.
\end{cor}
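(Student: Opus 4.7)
The plan is to obtain the corollary as an immediate consequence of the uniform continuity of $f \mapsto M(f)$ established in the preceding theorem, by composing it with the supremum functional on $m^\infty$. Since $\popMIC(f) = \sup M(f)$ by definition, all that needs to be checked is that $\sup : m^\infty \to \R$ is itself well-behaved with respect to the supremum norm.

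The key step is to observe that the map $A \mapsto \sup A$ is $1$-Lipschitz on $m^\infty$. Indeed, for any two matrices $A, B \in m^\infty$ and every pair of indices $(k, \ell)$, we have
\[ A_{k,\ell} \leq B_{k,\ell} + \|A - B\|_\infty, \]
so $\sup A \leq \sup B + \|A - B\|_\infty$, and the symmetric inequality gives $|\sup A - \sup B| \leq \|A - B\|_\infty$.

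Given any $\ep > 0$, the uniform continuity of $f \mapsto M(f)$ supplies a $\delta > 0$ such that $|f - f'|_{L^1} < \delta$ implies $\|M(f) - M(f')\|_\infty < \ep$. By the Lipschitz bound above, this same $\delta$ yields $|\popMIC(f) - \popMIC(f')| < \ep$, which is uniform continuity of $f \mapsto \popMIC(f)$.

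There is no real obstacle here; the work was entirely in proving uniform continuity of the full characteristic matrix, and in particular in establishing Proposition~\ref{prop:boundedVariationOfI} to justify that the normalization by $\log \min \{k, \ell\}$ exactly cancels the grid-resolution-dependent growth in the variation of mutual information. Once the matrix-valued map is known to be uniformly continuous, the transition to $\popMIC$ is a one-line composition argument.
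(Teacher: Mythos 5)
Your proof is correct and matches the paper's intended argument: the paper states this corollary without proof, precisely because $\popMIC(f) = \sup M(f)$ and the supremum functional is $1$-Lipschitz on $m^\infty$ with the supremum norm, so uniform continuity of $f \mapsto M(f)$ transfers immediately. Your explicit verification of the Lipschitz bound for $\sup$ is the same observation the paper already uses inside the proof of the preceding theorem when handling the family $F'$.
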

Similar corollaries exist for any continuous function of the characteristic matrix (including the others introduced in \cite{MINE}).

\subsection{The normalization used in $M$ is necessary for continuity}
In the argument above we used the fact that each entry of the characteristic matrix is normalized to achieve its continuity as a function of the pdf. We now show that if we define the characteristic matrix with any smaller normalization, it will contain an infinite discontinuity when considered as a function of the pdf. 

\begin{prop}
Let $M_N$ be the characteristic matrix with a normalization, i.e.,
\[ M_N(X,Y) = \sup_{k,\ell} I^*((X,Y), k, \ell) / N(k, \ell) \]
If $N(k, \ell) = o(\log \min \{k,\ell\})$ along some path $P$, then $M_N$ and $\sup M_N$ are not continuous when considered as functions of $\mathcal{P}([0,1]^2)$.
\end{prop}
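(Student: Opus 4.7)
The plan is to exhibit a sequence of pdfs $f_i \to f_0$ in $L^1$ such that $\sup M_N(f_i) \to \infty$ while $\sup M_N(f_0) = 0$, which shows both $M_N$ and $\sup M_N$ are discontinuous at $f_0$. Take $f_0$ to be the uniform pdf on $[0,1]^2$; since the corresponding random variables are independent, every entry of $M_N(f_0)$ vanishes. Because $N(k,\ell) = o(\log \min\{k,\ell\})$ along $P$, we may pick a sequence $(k_i,\ell_i) \in P$ with $m_i := \min\{k_i,\ell_i\} \to \infty$ and $N(k_i,\ell_i)/\log m_i \to 0$. By the monotonicity of $I^*$ in $k$ and $\ell$ (empty rows/columns are allowed in the definition of $G(k,\ell)$), it suffices to work with a square grid of size $m_i \times m_i$.

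The construction is the obvious one: mix a small amount of a ``diagonal'' distribution at resolution $m_i$ into the uniform. Let $G_i$ denote the uniform $m_i \times m_i$ partition of $[0,1]^2$, let $g_i$ be the pdf that is constant $m_i$ on the union of the $m_i$ diagonal cells of $G_i$ and $0$ elsewhere (so $g_i|_{G_i}$ puts mass $1/m_i$ on each diagonal cell and has mutual information exactly $\log m_i$), and set
\[ f_i = (1-\delta_i)\,f_0 + \delta_i\, g_i \]
for a sequence $\delta_i \to 0$ to be chosen. A direct calculation shows $\|f_i - f_0\|_1 \leq 2\delta_i \to 0$, so $f_i \to f_0$ in $\mathcal{P}([0,1]^2)$.

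Next I would compute $I(f_i|_{G_i})$. The marginals remain uniform on $1/m_i$, while the diagonal cells carry mass $p_d = \delta_i/m_i + (1-\delta_i)/m_i^2$ and the off-diagonal cells carry $p_o = (1-\delta_i)/m_i^2$. Expanding $I = 2\log m_i - H(f_i|_{G_i})$ and using that $\delta_i \gg 1/m_i$ (which I will arrange), a short calculation gives
\[ I(f_i|_{G_i}) \;=\; \delta_i \log m_i - H_b(\delta_i) + o(1). \]
Now choose $\delta_i = \sqrt{N(k_i,\ell_i)/\log m_i}$, which tends to $0$ by hypothesis and satisfies $\delta_i \gg 1/m_i$ for large $i$. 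Then $\delta_i \log m_i = \sqrt{N(k_i,\ell_i)\log m_i} \cdot \sqrt{\log m_i/N(k_i,\ell_i)}$, so
\[ \frac{I(f_i|_{G_i})}{N(k_i,\ell_i)} \;\geq\; \frac{\delta_i \log m_i}{N(k_i,\ell_i)} - \frac{H_b(\delta_i)}{N(k_i,\ell_i)} \;\to\; \infty, \]
because the first term equals $\sqrt{\log m_i / N(k_i,\ell_i)} \to \infty$ while $H_b(\delta_i)\to 0$ and $N(k_i,\ell_i)$ is bounded below on the path (or, if it is not, the claim is even easier). Since $I^*((X,Y)_i,k_i,\ell_i) \geq I(f_i|_{G_i})$, this forces $M_N(f_i)_{k_i,\ell_i} \to \infty$ and hence $\sup M_N(f_i) \to \infty$, giving discontinuity of both maps at $f_0$.

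The main obstacle I anticipate is the calibration of $\delta_i$: we need $\delta_i \to 0$ (for $L^1$-closeness) but $\delta_i \log m_i$ to grow much faster than $N(k_i,\ell_i)$, which is exactly the room created by the hypothesis $N(k_i,\ell_i) = o(\log m_i)$. The geometric-mean choice above makes both conditions simultaneously available. A secondary subtlety is the case $N(k_i,\ell_i) \to 0$: then positivity of $I(f_i|_{G_i})$ alone (for any $\delta_i \to 0$ with $\delta_i \gg 1/m_i$) already makes the ratio diverge, so a separate, simpler argument handles it. Everything else is bookkeeping: verifying $\|f_i-f_0\|_1 \to 0$, checking the asymptotics of $H(f_i|_{G_i})$, and noting that $M_N(f_0)$ is the zero matrix.
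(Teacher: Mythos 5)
Your construction is correct in its essentials but takes a genuinely different route from the paper. The paper perturbs the independent distribution by a single \emph{singular} component: $Z_\ep$ places mass $1-\ep$ uniformly on $[0,1/2]^2$ and mass $\ep$ uniformly on the line segment from $(1/2,1/2)$ to $(1,1)$. Because the line survives gridding at every resolution, one gets $I^*(Z_\ep,k,\ell)\ge \ep\log\min\{k-1,\ell-1\}$ for \emph{all} $k,\ell$ simultaneously, so for each fixed $\ep$ the supremum of $M_N(Z_\ep)$ along the path $P$ is already $+\infty$; no calibration of the perturbation against $N$ is needed, and the discontinuity appears upon letting $\ep\to 0$. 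You instead use a diagonal sequence: an absolutely continuous perturbation $g_i$ at resolution $m_i$ matched to the $i$-th point of the path, with weight $\delta_i$ tuned so that the single entry $(k_i,\ell_i)$ blows up. Each approach buys something: yours stays inside the space of genuine densities (the paper's $Z_\ep$ has no Lebesgue density, so the stated $L^1$ framework is being stretched there), while the paper's avoids any calibration. That calibration is the one soft spot in your write-up: the choice $\delta_i=\sqrt{N(k_i,\ell_i)/\log m_i}$ need not satisfy $\delta_i\gg 1/m_i$, and the error terms $H_b(\delta_i)+o(1)$ get divided by $N(k_i,\ell_i)$, so the argument as written really does require $N$ bounded away from $0$ along $P$, as you concede. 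The fix is the one you gesture at: take $\delta_i=\max\bigl\{\sqrt{N(k_i,\ell_i)/\log m_i},\,m_i^{-1/2}\bigr\}$. Then $\log(1/\delta_i)\le\tfrac12\log m_i$, and the exact expression for $I(f_i|_{G_i})$ gives the clean lower bound $I(f_i|_{G_i})\ge\delta_i\bigl(\log m_i-\log(1/\delta_i)\bigr)-2\delta_i\ge\tfrac14\delta_i\log m_i$ for large $i$ without any asymptotic expansion; in the first branch of the max one has $\delta_i\log m_i/N(k_i,\ell_i)=\sqrt{\log m_i/N(k_i,\ell_i)}\to\infty$, and in the second branch $N(k_i,\ell_i)\le\log m_i/m_i$ forces $\delta_i\log m_i/N(k_i,\ell_i)\ge\sqrt{m_i}\to\infty$. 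With that one adjustment your proof is complete.
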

\begin{proof}
Let $M'$ be the characteristic matrix without any normalization, i.e.,
\[ M'(X,Y) = \sup_{k,\ell} I^*((X,Y), k, \ell) \]
and consider the random variable $Z$ that is uniformly distributed on $[0,1/2]^2$. Because $Z$ exhibits statistical independence, $M'(Z)$ is zero everywhere. Now define $Z_\ep$ to be uniformly distributed on $[0,1/2]^2$ with probability $1-\ep$ and uniformly distributed on the line from $(1/2, 1/2)$ to $(1,1)$ with probability $\ep$.

We lower-bound the $k,\ell$-th entry of $M'(Z_\ep)$. Without loss of generality suppose that $k \leq \ell$, and consider a grid that places all of $[0,1/2]^2$ into one cell and uniformly partitions the set $[1/2,1]^2$ into $k-1$ rows and $k-1$ columns. By considering just the rows/columns in the set $[1/2,1]^2$ we see that this grid gives a mutual information of at least $\ep \log(k-1)$. Thus, we have that for all $k, \ell$,
\[ M'(X,Y) \geq \ep \log \min \{k-1, \ell - 1\} \]

This implies that the limit of $M_N(Z_\ep)$ along $P$ will be $\infty$, and so the distance between $M_N(Z)$ and $M_N(Z_\ep)$ in the supremum norm will be infinite.
\end{proof}

\begin{cor}
Mutual information and the Linfoot correlation are not continuous as functions of $\mathcal{P}([0,1]^2)$.
\end{cor}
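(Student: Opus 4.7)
The plan is to derive both discontinuity statements as direct corollaries of the preceding proposition, which establishes that if a normalization $N(k,\ell)$ is $o(\log \min\{k,\ell\})$ along some path, then $M_N$ and $\sup M_N$ are discontinuous on $\mathcal{P}([0,1]^2)$. The key observation is that both mutual information and the Linfoot correlation can be written as a supremum of a weakly (or trivially) normalized characteristic-like matrix, which will place them in the regime where the preceding proposition applies.

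For mutual information, the first step is to invoke the alternate characterization provided by Proposition~\ref{prop:mutualInfoSupremumDef} and its corollary, namely $I(X,Y) = \sup_{k,\ell > 1} I^*((X,Y), k, \ell)$. This is exactly $\sup M_N(X,Y)$ for the constant normalization $N(k,\ell) \equiv 1$. Since the constant function $1$ is trivially $o(\log \min\{k,\ell\})$ along, e.g., the path $k = \ell \to \infty$, the hypothesis of the preceding proposition is satisfied. Hence $\sup M_N = I$ is not continuous on $\mathcal{P}([0,1]^2)$.

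For the Linfoot correlation $L = 1 - 2^{-2I}$, I would give two arguments; either suffices. The clean approach is to note that $\psi : [0,\infty] \to [0,1]$ defined by $\psi(x) = 1 - 2^{-2x}$ is a continuous bijection whose inverse $\psi^{-1}(y) = -\tfrac{1}{2}\log_2(1-y)$ is also continuous. Thus, if $L = \psi \circ I$ were continuous as a function of the pdf, so would be $I = \psi^{-1} \circ L$, contradicting the previous step. A more concrete alternative is to reuse the counterexample from the preceding proof: with $Z$ uniform on $[0,1/2]^2$ and $Z_\epsilon$ the perturbation placing mass $\epsilon$ on the diagonal, the lower bound $M'(Z_\epsilon)_{k,\ell} \geq \epsilon \log \min\{k-1,\ell-1\}$ forces $I(Z_\epsilon) = \infty$ (for every $\epsilon > 0$) while $I(Z) = 0$; hence $L(Z_\epsilon) = 1 \not\to 0 = L(Z)$ even though $Z_\epsilon \to Z$.

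The hard part is essentially bookkeeping rather than new mathematics: one must carefully identify the trivial normalization $N \equiv 1$ with the "no normalization" case $M'$ appearing in the preceding proof, and be explicit that Proposition~\ref{prop:mutualInfoSupremumDef} is what licenses the identification $I = \sup M_N$. Once that connection is made, both statements fall out immediately.
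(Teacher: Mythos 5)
Your proposal is correct and matches the paper's proof essentially verbatim: the paper also identifies $I$ with $\sup M_N$ for the trivial normalization $N(k,\ell)=1$ to invoke the preceding proposition, and handles the Linfoot correlation by noting that the distributions $Z$ and $Z_\ep$ from that proposition's proof have mutual informations $0$ and $\infty$, hence Linfoot correlations $0$ and $1$. Your additional inverse-function argument for the Linfoot case is a valid (if unnecessary) alternative.
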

\begin{proof}
$I$ is the supremum of $M_N$ with $N(k,\ell) = 1$. The claim for the Linfoot correlation follows from the fact that since the two distributions in the above example have mutual informations of $0$ and $\infty$ respectively, their Linfoot correlations are $0$ and $1$ respectively.
\end{proof}

In light of these results, $\popMIC$ can be viewed as a canonical ``minimally smoothed" version of mutual information that is uniformly continuous.

\section{An alternate characterization of $\popMIC$}
In this section we show that $\popMIC$ can be characterized as a supremum over a boundary of the population characteristic matrix defined in Definition~\ref{def:charmatrix}, instead of as a supremum over all the entries of the population characteristic matrix. This accomplishes two goals: first, it will allow us to propose for the first time an algorithm to compute to arbitrary precision the true $\popMIC$ of a random variable with a known pdf. Second, it will be the foundation for the consistency of the two new estimators of $\popMIC$ that we introduce.

\subsection{The boundary of the characteristic matrix}
We begin with the following observation.
\begin{prop}
Let $M$ be a population characteristic matrix. Then for $k \geq \ell$, $M_{k,\ell} \leq M_{k+1, \ell}$.
\end{prop}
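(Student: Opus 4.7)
The plan is to unpack the two definitions involved and observe that the proposition reduces to two independent observations: that the normalization is unchanged when incrementing $k$ past $\ell$, and that the numerator $I^*$ is monotone in the grid size once empty rows/columns are permitted.

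First I would handle the denominators. Since $k \geq \ell$ we have $\min\{k,\ell\} = \ell$, and since $k + 1 > k \geq \ell$ we also have $\min\{k+1,\ell\} = \ell$. Hence the normalizing factor $\log \min \{k,\ell\}$ is the same, namely $\log \ell$, for both $M_{k,\ell}$ and $M_{k+1,\ell}$. So it suffices to prove $I^*((X,Y),k,\ell) \leq I^*((X,Y),k+1,\ell)$.

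Next I would leverage the convention, recalled just before Definition~\ref{def:charmatrix}, that elements of $G(k,\ell)$ are allowed to have empty rows and columns. Given any $G \in G(k,\ell)$, let $G^+$ be the grid obtained by introducing one additional empty row (inserted, say, at the top or bottom of the existing partition). Then $G^+ \in G(k+1,\ell)$, and since the added row contains no probability mass, the induced random variable $(X,Y)|_{G^+}$ has exactly the same distribution (up to a trivial relabeling of the row coordinate that carries zero mass) as $(X,Y)|_G$. Consequently $I((X,Y)|_{G^+}) = I((X,Y)|_G)$. Taking the maximum over $G \in G(k,\ell)$ on the left and noting that the maximum over $G(k+1,\ell)$ on the right ranges over a larger family than $\{G^+ : G \in G(k,\ell)\}$ gives $I^*((X,Y),k,\ell) \leq I^*((X,Y),k+1,\ell)$.

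Combining the two observations yields
\[
M_{k,\ell} = \frac{I^*((X,Y),k,\ell)}{\log \ell} \leq \frac{I^*((X,Y),k+1,\ell)}{\log \ell} = M_{k+1,\ell},
\]
which is the desired inequality. There is no genuine obstacle here; the only thing to be careful about is making sure the convention that $G(k,\ell)$ admits empty rows/columns is invoked explicitly, since without it the monotonicity of $I^*$ in the first argument would not be automatic.
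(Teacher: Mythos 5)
Your proof is correct and follows essentially the same route as the paper's: the normalization $\log\min\{k,\ell\} = \log\ell$ is unchanged since $k, k+1 \geq \ell$, and $I^*$ is monotone in the first argument because an empty row can always be added. The only difference is that you spell out the empty-row embedding in more detail than the paper does.
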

\begin{proof}
Let $(X,Y)$ be the random variable in question. Since we can always let a row/column be empty, we know that $I^*((X,Y),k,\ell) \leq I^*((X,Y), k+1, \ell)$. And since $k, k+1 \geq \ell$, we know that $M_{k,\ell} = I^*((X,Y),k,\ell)/\log\ell \leq I^*((X,Y),k+1,\ell)/\log \ell = M_{k+1, \ell}$.
\end{proof}

Since the entries of the characteristic matrix are bounded, the monotone convergence theorem then gives us the following corollary.
\begin{cor}
\label{cor:boundaryIsSup}
Let $M$ be a population characteristic matrix. Then $M_{k,\uparrow} = \lim_{\ell \rightarrow \infty} M_{k,\ell}$ exists, is finite, and equals $\sup_\ell M_{k,\ell}$. The same is true for $M_{\uparrow,\ell}$.
\end{cor}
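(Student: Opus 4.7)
The plan is a direct application of the monotone convergence theorem for bounded real sequences, combined with the preceding proposition and the symmetry of $M$. There is no serious obstacle; the real content lies in the proposition just proved.

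First I would record two elementary facts. (i) $M$ is symmetric, $M_{k,\ell}=M_{\ell,k}$, because transposing a grid gives a bijection $G(k,\ell)\leftrightarrow G(\ell,k)$ that preserves mutual information, and the normalizer $\log\min\{k,\ell\}$ is symmetric in its arguments. (ii) $M_{k,\ell}\in[0,1]$, since the mutual information of any joint distribution on a $k$-by-$\ell$ grid is at most the entropy of its smaller marginal, which is bounded by $\log\min\{k,\ell\}$.

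Next I would upgrade the preceding proposition to one-sided monotonicity of each row. The proposition gives $M_{k,\ell}\le M_{k+1,\ell}$ whenever $k\ge\ell$; applying it with the roles of $k$ and $\ell$ interchanged and then invoking symmetry of $M$ to restate the inequality in the original row yields $M_{k,\ell}\le M_{k,\ell+1}$ whenever $\ell\ge k$. Hence, for every fixed $k$, the tail sequence $\{M_{k,\ell}\}_{\ell\ge k}$ is non-decreasing and bounded above by $1$.

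The monotone convergence theorem for bounded real sequences then gives that $\lim_{\ell\to\infty}M_{k,\ell}$ exists, is finite, and equals $\sup_{\ell\ge k}M_{k,\ell}$. Since the limit as $\ell\to\infty$ is unaffected by the finitely many initial terms with $\ell<k$, this proves the claim stated in the corollary; the analogous statement for $M_{\uparrow,\ell}$ follows by the same argument (or immediately from symmetry of $M$). The only mildly subtle step is the swap-and-symmetry manipulation used to obtain row-monotonicity from the proposition; everything else is bookkeeping.
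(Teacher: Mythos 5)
Your overall strategy --- bounded entries plus monotonicity of the tail, then the monotone convergence theorem --- is exactly the paper's argument, and your fact (ii) (that $M_{k,\ell}\in[0,1]$ because mutual information on a $k$-by-$\ell$ grid is bounded by the entropy of the smaller marginal) is correct and worth making explicit. However, your fact (i) is false: the population characteristic matrix is \emph{not} symmetric in general. Transposing a grid $G\in G(k,\ell)$ produces a grid in $G(\ell,k)$, but the induced discrete variable is $(Y,X)|_{G^{T}}$, not $(X,Y)|_{G^{T}}$; the partition of the $x$-axis becomes a partition of the $y$-axis. So the bijection you describe shows $M(X,Y)_{k,\ell}=M(Y,X)_{\ell,k}$, which says nothing about $M(X,Y)_{k,\ell}$ versus $M(X,Y)_{\ell,k}$. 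Indeed, the asymmetry of this matrix is the whole point of the MAS statistic in \cite{MINE}, which would be identically zero if your claim held. Consequently your swap-and-symmetry derivation of the column monotonicity $M_{k,\ell}\le M_{k,\ell+1}$ for $\ell\ge k$ does not go through as written.

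The gap is easily repaired, and the repair is what the paper implicitly intends: rerun the proposition's proof directly with the roles of the two indices exchanged. One has $I^*((X,Y),k,\ell)\le I^*((X,Y),k,\ell+1)$ because every $k$-by-$\ell$ grid is a $k$-by-$(\ell+1)$ grid with an empty column, and for $\ell,\ell+1\ge k$ the normalizer is $\log k$ in both entries, so $M_{k,\ell}\le M_{k,\ell+1}$. No symmetry of $M$ is needed. One further small point, which the paper itself also glosses over: monotone convergence gives that the limit equals $\sup_{\ell\ge k}M_{k,\ell}$, i.e.\ the supremum over the monotone tail. Your remark that finitely many initial terms do not affect the \emph{limit} is true but does not show they do not affect the \emph{supremum}; if the corollary's $\sup_\ell$ is read over all $\ell>1$, an extra argument (or a restriction to $\ell\ge k$, which is all that is used later in Theorem~\ref{thm:MICinTermsOfBoundary}) is required.
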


The above corollary allows us to define the {\em boundary} of the characteristic matrix.

\begin{definition}
Let $M$ be a population characteristic matrix. The {\em boundary} of $M$ is the set
\[ \partial M = \{ M_{k, \uparrow} : 1 < k < \infty \} \bigcup \{ M_{\uparrow, \ell} : 1 < \ell < \infty \} \]
\end{definition}

\subsection{A formula for the boundary of the population characteristic matrix}
The boundary of the population characteristic matrix will be an important object for us. One reason for this is that elements of the boundary can be expressed in terms of a maximization over (one-dimensional) partitions rather than (two-dimensional) grids, the former being much quicker to compute exactly. The proposition below shows this.
\begin{prop}
\label{prop:explicitValueOfBoundary}
Let $M$ be a population characteristic matrix. Then $M_{k, \uparrow}$ equals
\[ \sup_{P \in P(k)} \frac{I(X, Y|_P)}{\log{k}} \]
where $P(k)$ denotes the set of all partitions of $[0,1]$ into at most $k$ pieces.
\end{prop}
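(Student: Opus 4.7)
The plan is to reduce the two-dimensional optimization defining $M_{k,\uparrow}$ to a one-dimensional optimization over partitions of the $y$-axis, with Proposition~\ref{prop:mutualInfoSupremumDef} doing the key work in the final step.

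First, I would strip off the normalization. By Corollary~\ref{cor:boundaryIsSup}, $M_{k,\uparrow} = \sup_\ell M_{k,\ell}$. For $\ell \ge k$ we have $\log \min\{k,\ell\} = \log k$, and $I^*((X,Y),k,\ell)$ is non-decreasing in $\ell$ because any $k$-by-$\ell$ grid is a $k$-by-$(\ell{+}1)$ grid with an empty column appended. Hence
$$
M_{k,\uparrow} \;=\; \frac{1}{\log k}\, \sup_{\ell} I^*((X,Y), k, \ell).
$$

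Second, I would unpack $I^*$ in terms of partitions. A $k$-by-$\ell$ grid $G$ is determined by a partition $P$ of the $y$-axis into at most $k$ pieces together with a partition $Q$ of the $x$-axis into at most $\ell$ pieces, and with the paper's convention $I((X,Y)|_G) = I(X|_Q, Y|_P)$. Separating the two suprema gives
$$
\sup_\ell I^*((X,Y), k, \ell) \;=\; \sup_{P \in P(k)} \;\sup_{Q} I(X|_Q, Y|_P),
$$
where the inner supremum ranges over all finite partitions $Q$ of $[0,1]$.

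Third, I would fix $P$ and apply Proposition~\ref{prop:mutualInfoSupremumDef} to the pair $(X, Y|_P)$, yielding $I(X, Y|_P) = \sup_G I((X, Y|_P)|_G)$ as $G$ ranges over all finite grids. Since $Y|_P$ is a discrete random variable with at most $k$ atoms, any partition of the $y$-axis that separates those atoms leaves $Y|_P$ unchanged under regridding, while coarser $y$-partitions can only decrease mutual information (by data processing). The supremum over grids is therefore attained at a $y$-partition that preserves $Y|_P$, and collapses to $\sup_Q I(X|_Q, Y|_P) = I(X, Y|_P)$. Chaining the three displays and dividing by $\log k$ gives the stated formula.

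The main obstacle I expect is the last step: justifying that Proposition~\ref{prop:mutualInfoSupremumDef}, which a priori permits refinements in both coordinates, specializes here to a supremum over $x$-partitions alone because the second argument is already discrete. This is essentially a data-processing observation---regridding a discrete variable can only coarsen it---but one must handle it carefully to reconcile the ``sup over two-dimensional grids'' appearing in Proposition~\ref{prop:mutualInfoSupremumDef} with the ``sup over partitions of a single coordinate'' in the statement we want.
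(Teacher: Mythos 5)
Your proposal is correct and follows essentially the same route as the paper: the upper bound comes from the data processing inequality $I((X,Y)|_G)\le I(X,Y|_P)$, and the matching lower bound comes from refining the $x$-partition while holding $P$ fixed. The only real difference is in that second step, where you invoke Proposition~\ref{prop:mutualInfoSupremumDef} for the pair $(X,Y|_P)$ and use the discreteness of $Y|_P$ to collapse its two-dimensional grids to $x$-partitions, whereas the paper takes a limit of $I(X|_{Q_\ell},Y|_P)$ along equipartitions $Q_\ell$ (a limit it asserts without proof), so your version is, if anything, slightly more self-contained.
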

\begin{proof}
Define
\[ M_{k,\uparrow}^* = \sup_{P \in P(k)} \frac{I(X, Y|_P)}{\log{k}} \]
We wish to show that $M_{k,\uparrow}^*$ is in fact equal to $M_{k,\uparrow}$. To show that $M_{k,\uparrow} \leq M_{k,\uparrow}^*$, we observe that for every $k$-by-$\ell$ grid $G = (P,Q)$, where $P$ is a partition into rows and $Q$ is a partition into columns, the data processing inequality gives $I((X, Y)|_G) \leq I(X, Y|_P)$. Thus $M_{k,\ell} \leq M_{k,\uparrow}^*$ for $\ell \geq k$, implying that
\[ M_{k,\uparrow} = \lim_{\ell \rightarrow \infty} M_{k,\ell} \leq M_{k,\uparrow}^* \]

It remains to show that $M_{k,\uparrow}^* \leq M_{k,\uparrow}$. To do this, we let $P$ be any partition into $k$ rows, and we define $Q_\ell$ to be an equipartition into $\ell$ columns. We let
\[ M_{k,\ell,P}^* = \frac{I(X|_{Q_\ell}, Y|_P)}{\log{k} } \]
Since $M_{k,\ell,P}^* \leq M_{k,\ell}$ when $\ell \geq k$, we have that for all $P$
\[ \frac{I(X, Y|_P)}{\log{k}} = \lim_{\ell \rightarrow \infty} M_{k,\ell,P}^* \leq \lim_{\ell \rightarrow \infty} M_{k,\ell} = M_{k,\uparrow} \]
which gives that
\[ M_{k,\uparrow}^* = \sup_P \frac{I(X, Y|_P)}{\log{k}} \leq M_{k,\uparrow} \]
as desired.
\end{proof}

\subsection{$\popMIC$ in terms of the boundary of the population characteristic matrix}
We now show that $\popMIC$ is actually the maximum over the boundary of the characteristic matrix. This will mean that to compute $\popMIC$ we need only compute the boundary of the characteristic matrix rather than the entire matrix.
\begin{thm}
\label{thm:MICinTermsOfBoundary}
Let $(X,Y)$ be a random variable. We have
\[ \popMIC(X,Y) = \sup \partial M \]
where $M$ is the population characteristic matrix of $(X, Y)$.
\end{thm}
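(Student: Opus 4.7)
The plan is to prove $\popMIC(X,Y) = \sup \partial M$ by establishing two inequalities, both of which fall out directly from the monotonicity properties already in hand. The only real work has been done in the preceding proposition and corollary; the theorem is essentially a repackaging of those results.

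First, for the easy direction $\sup \partial M \leq \popMIC(X,Y)$, I would simply observe that by the preceding corollary, each boundary element $M_{k,\uparrow} = \lim_{\ell \to \infty} M_{k,\ell}$ (and similarly $M_{\uparrow,\ell}$) is a supremum of ordinary entries of $M$. Since every entry $M_{k,\ell}$ is bounded above by $\sup M = \popMIC(X,Y)$, the same bound passes to these suprema, so every element of $\partial M$ is $\leq \popMIC(X,Y)$, and hence so is $\sup \partial M$.

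For the reverse inequality $\popMIC(X,Y) \leq \sup \partial M$, the plan is to dominate each ordinary entry of $M$ by an element of $\partial M$. Fix any $k, \ell > 1$. Without loss of generality assume $\ell \geq k$; the case $k \geq \ell$ is handled symmetrically using $M_{\uparrow,\ell}$ instead of $M_{k,\uparrow}$. By the column version of the preceding monotonicity proposition (which follows by the same empty-row/column argument applied to columns instead of rows), the sequence $\ell' \mapsto M_{k,\ell'}$ is nondecreasing for $\ell' \geq k$, and hence
\[
M_{k,\ell} \;\leq\; \sup_{\ell' \geq k} M_{k,\ell'} \;=\; \lim_{\ell' \to \infty} M_{k,\ell'} \;=\; M_{k,\uparrow} \;\leq\; \sup \partial M.
\]
Taking the supremum over all pairs $(k,\ell)$ gives $\popMIC(X,Y) = \sup_{k,\ell} M_{k,\ell} \leq \sup \partial M$, completing the proof.

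The hardest part of this argument was already done in establishing the corollary on monotone convergence to the boundary; given that, the theorem itself has no substantive obstacle. What is significant is the consequence: it reduces the computation of $\popMIC$ from a search over a two-parameter family of grids to a search over a one-parameter family of partitions (via the explicit formula in the preceding proposition), which is what enables the exact algorithms and improved estimators developed in the next sections.
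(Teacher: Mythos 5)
Your proof is correct and takes essentially the same route as the paper: one direction follows because each boundary element is, by the monotone-convergence corollary, a supremum of ordinary entries and hence at most $\sup M$; the other follows by dominating each entry $M_{k,\ell}$ by $M_{k,\uparrow}$ or $M_{\uparrow,\ell}$ according to which index is larger. Your explicit remark that the column version of the monotonicity proposition is needed (and follows symmetrically) is a fair point of care that the paper leaves implicit.
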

\begin{proof}
The following argument shows that every entry of $M$ is at most $\sup \partial M$: fix a pair $(k, \ell)$ and notice that either $k \leq \ell$, in which case $M_{k,\ell} \leq M_{k,\uparrow}$, or $\ell \leq k$, in which case $M_{k,\ell} \leq M_{\uparrow, \ell}$. Thus, $\MIC \leq \sup \{M_{\uparrow, \ell}\} \cup \{M_{k, \uparrow}\} = \sup \partial M$.

On the other hand, Corollary~\ref{cor:boundaryIsSup} shows that each element of $\partial M$ is a supremum over some elements of $M$. Therefore, $\sup \partial M$, being a supremum over suprema of elements of $M$, cannot exceed $\sup M = \popMIC$.
\end{proof}

This theorem is the basis for the algorithms introduced in the following three sections.

\section{An exactly computable, consistent estimator of $\popMIC$ using equipartitions}
In the first section, we showed that the statistic $\MIC$ is simply an estimator of $\popMIC$ as defined in this paper. We will now introduce a second estimator of $\popMIC$, which we call $\MICestE$, on the basis of the alternate characterization of $\popMIC$ given in the previous section.

The new estimator $\MICestE$ is analogous to APPROX-MIC, the algorithm given in~\cite{MINE} for heuristically approximating $\MIC$. We therefore first review APPROX-MIC: it used a heuristic for efficiently computing $I^*(D, k, \ell)$ wherein the dimension being partitioned into fewer rows/columns is equipartitioned while the remaining dimension is optimized using dynamic programming. The rationale for this was that since the mutual information is bounded by the marginal entropy along the axis with fewer rows/columns, that axis may as well have its marginal entropy maximized by an equipartition. However, this was a heuristic assumption and was not rigorously justified.

In this section, we use the alternate characterization of $\popMIC$ given above to show that in fact if the dimension with {\em more} rows/columns is equipartitioned while the remaining dimension is optimized, then the resulting statistic is actually a consistent estimator of $\popMIC$ rather than a heuristic approximation of a consistent estimator of $\popMIC$. More specifically, we use the fact that $\popMIC(X,Y) = \sup \partial M(X,Y)$ to say that $\popMIC$ can in fact be realized as the supremum of a modified version of the population characteristic matrix that is easier to compute efficiently. We then observe that the new estimator $\MICestE$ estimates this latter quantity consistently.

As a matter of notation, we first define a version of $I^*$ that equipartitions the dimension with more rows/columns.
\begin{definition}
Let $(X,Y)$ be a random variable. Define
\[ I^* \left( (X,Y), k, [\ell] \right) = \max_{G \in G(k, [\ell])} I \left( (X,Y)|_G \right) \]
where $G(k,[\ell])$ is the set of $k$-by-$\ell$ grids whose y-axis partition is an equipartition of size $\ell$. Define $I^*((X,Y), [k], \ell)$ analogously.

Define $I^{[*]}((X,Y), k, \ell)$ to equal $I^*((X,Y), k, [\ell])$ if $k \leq \ell$ and $I^*((X,Y), [k], \ell)$ otherwise.
\end{definition}

We now prove that if we use $I^{[*]}$ in place of $I^*$ in the population characteristic matrix, then the supremum of the modified matrix is still equal to $\popMIC$.

\begin{thm}
\label{thm:altDef2}
Let $(X,Y)$ be a random variable, and define
\[ [M](X,Y)_{k,\ell} = \frac{I^{[*]}((X,Y), k, \ell)}{\log \min \{k, \ell\}} \]
Then $\sup_{k,\ell} [M](X,Y) = \popMIC(X,Y)$
\end{thm}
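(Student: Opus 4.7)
The plan is to prove the two inequalities $\sup [M] \leq \popMIC$ and $\sup [M] \geq \popMIC$ separately, leveraging the alternate characterization of $\popMIC$ from Theorem~\ref{thm:MICinTermsOfBoundary}.

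The easy direction is $\sup [M] \leq \popMIC$. By construction, $G(k, [\ell]) \subset G(k, \ell)$ and $G([k], \ell) \subset G(k, \ell)$, so $I^{[*]}((X,Y), k, \ell) \leq I^*((X,Y), k, \ell)$, which gives $[M]_{k,\ell} \leq M_{k,\ell}$ entrywise. Taking suprema yields $\sup [M] \leq \sup M = \popMIC$.

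For the reverse inequality $\sup [M] \geq \popMIC$, by Theorem~\ref{thm:MICinTermsOfBoundary} it suffices to show $\sup [M] \geq \sup \partial M$, i.e.\ that every boundary entry $M_{k, \uparrow}$ and $M_{\uparrow, \ell}$ is a limit (or at most a supremum) of entries $[M]_{k,\ell}$. I will handle $M_{k,\uparrow}$; the case of $M_{\uparrow, \ell}$ is symmetric and uses the $[k]$-version of $I^{[*]}$. By Proposition~\ref{prop:explicitValueOfBoundary}, $M_{k,\uparrow} = \sup_{P \in P(k)} I(X, Y|_P)/\log k$. Fix any such $P$ and, for $\ell \geq k$, let $Q_\ell$ be the equipartition of the other axis into $\ell$ pieces; the grid $(P, Q_\ell)$ lies in $G(k, [\ell])$, so
\[ [M]_{k, \ell} \;\geq\; \frac{I(X|_{Q_\ell}, Y|_P)}{\log k}. \]
Letting $\ell \to \infty$, the equipartition $Q_\ell$ refines to generate the Borel $\sigma$-algebra, and $I(X|_{Q_\ell}, Y|_P) \to I(X, Y|_P)$; this is precisely the limit already established inside the proof of Proposition~\ref{prop:explicitValueOfBoundary}. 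Taking $\sup_\ell$ on the left and then $\sup_P$ on the right gives $\sup_\ell [M]_{k,\ell} \geq M_{k, \uparrow}$, and combining with the symmetric case yields $\sup [M] \geq \sup \partial M = \popMIC$.

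The main (and only nontrivial) obstacle is the convergence $I(X|_{Q_\ell}, Y|_P) \to I(X, Y|_P)$ as the equipartition $Q_\ell$ becomes arbitrarily fine. Since this is the same convergence statement already established in the proof of Proposition~\ref{prop:explicitValueOfBoundary}, I would simply cite it rather than re-derive it. Everything else is a routine application of the sandwich argument together with Theorem~\ref{thm:MICinTermsOfBoundary}.
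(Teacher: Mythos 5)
Your proof is correct, but it is organized differently from the paper's, and the difference is worth noting. The paper's proof first argues that $[M]$ has a boundary with $\partial [M] = \partial M$ and then reruns the argument of Theorem~\ref{thm:MICinTermsOfBoundary} on $[M]$ to conclude $\sup [M] = \sup \partial [M]$; this route requires knowing that $[M]_{k,\ell}$ is (at least eventually) non-decreasing in $\ell$ for $\ell \geq k$, which the paper asserts but which is not immediate: unlike the unrestricted grid classes, where $G(k,\ell) \subset G(k,\ell+1)$ by leaving a column empty, the classes $G(k,[\ell])$ and $G(k,[\ell+1])$ are not nested, since the equipartition into $\ell+1$ pieces does not refine the equipartition into $\ell$ pieces. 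Your sandwich argument sidesteps this issue entirely. The upper bound $\sup [M] \leq \sup M = \popMIC$ is immediate from entrywise domination, and the lower bound $\sup [M] \geq \sup \partial M = \popMIC$ needs only the specific grids $(P, Q_\ell)$ together with the convergence $I(X|_{Q_\ell}, Y|_P) \rightarrow I(X, Y|_P)$, which is precisely the limit already invoked (without further proof) inside Proposition~\ref{prop:explicitValueOfBoundary}, so citing it puts you on the same footing as the paper. Both proofs share that equipartition limit as the one genuine technical ingredient; what your version buys is that it never needs any monotonicity or boundary structure of $[M]$ itself, which makes it the more robust of the two arguments.
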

\begin{proof}
Because $[M]_{k,\ell}$ is monotonically increasing in $\ell$ for $\ell \geq k$ but also bounded, $[M]$ can be shown to have a boundary in the same sense of $M$. What does this boundary look like? As we increase $\ell$ while holding $k$ fixed the axis being equipartitioned will have a finer and finer equipartition, such that in the limit that axis will become a continuous random variable. Thus, the same argument used in Proposition~\ref{prop:explicitValueOfBoundary}, shows that 
\[ [M]_{k,\uparrow} = \sup_{P \in P(k)} \frac{I(X, Y|_P)}{\log{k}} = M_{k, \uparrow} \]
Thus, $\partial [M] = \partial M$.

Finally, because every sequence of the form $[M]_{k,2}, [M]_{k, 3}, \ldots$ or $[M]_{2,\ell}, [M]_{3, \ell}, \ldots$ eventually becomes non-decreasing, the same argument given in Theorem~\ref{thm:MICinTermsOfBoundary} shows that $\sup [M] = \sup \partial [M]$. Therefore,
\[ \popMIC = \sup M = \sup \partial M = \sup \partial [M] = \sup [M] \]
as desired.
\end{proof}

We can now define $\MICestE$ to be the estimator of $\sup [M]$ that is analogous to $\MIC$.
\begin{definition}
Let $D \subset [0,1] \times [0,1]$ be a set of ordered pairs. Given a function $B : \Z^+ \rightarrow \Z^+$, we define
\[ \widehat{[M]}_B(D)_{k,\ell} = \left\{
        \begin{array}{ll}
            \frac{I^{[*]}(D, k, \ell)}{ \log \min \{k, \ell \}} & \quad k\ell \leq B(|D|) \\
            0 & \quad k\ell > B(|D|)
        \end{array}
    \right. \]
\end{definition}

\begin{definition}
Let $D \subset [0,1] \times [0,1]$ be a set of ordered pairs, and let $B : \Z^+ \rightarrow \Z^+$. We define
\[ \MICestE_{,B}(D) = \max \widehat{[M]}_B(D) \]
\end{definition}

Since each entry in $\widehat{[M]}_B(D)$ is computed by considering a subset of the grids considered in the computation of $\hat{M}_B(D)$, the same argument that showed the consistency of $\MIC$ shows that $\MICestE$ is a consistent estimator of $\popMIC$ as well. Thus, we obtain

\begin{thm}
The statistic $\MICestE_{,B}$ is a consistent estimator of $\popMIC$ provided $\omega(1) < B(n) \leq O(n^{1-\ep})$ for $\ep > 0$.
\end{thm}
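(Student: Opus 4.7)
The plan is to mirror the proof of Theorem~\ref{thm:estimator} by replacing $I^*$ with $I^{[*]}$ and $M$ with $[M]$, exploiting two facts: (i) $I^{[*]}((X,Y),k,\ell)$ is a maximum over a \emph{subset} of the grids that enter the definition of $I^*((X,Y),k,\ell)$, namely those whose longer axis is equipartitioned; and (ii) Theorem~\ref{thm:altDef2} has already identified $\sup [M]$ with $\popMIC(X,Y)$. The entire architecture of the earlier proof carries over, so essentially all I need to check is that the uniform-in-grid bounds on mutual-information estimation remain available after the restriction.

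First, I would observe that Lemma~\ref{lem:unionBoundOverGrids} provides a bound on $|I(D_n|_G) - I((X,Y)|_G)|$ that is uniform over \emph{all} $k$-by-$\ell$ grids $G$, and that the inequality $|\sup_{G \in S} a_G - \sup_{G \in S} b_G| \leq \sup_{G \in S} |a_G - b_G|$ lets this uniform control pass to any subset. Applying this to the subset $G(k,[\ell]) \cup G([k],\ell)$ yields the same high-probability bound on $|I^{[*]}(D_n,k,\ell) - I^{[*]}((X,Y),k,\ell)|$ that we had for $I^*$. Consequently the conclusion of Lemma~\ref{lem:boundCharMatrixEntries} holds verbatim with $\widehat{[M]}_B$ in place of $\hat{M}_B$ and $[M]$ in place of $M$: there exist $\alpha > 0$ and a sequence $P(n) = 1 - o(1)$ such that, with probability $P(n)$,
\[
\bigl| \widehat{[M]}_B(D_n)_{k,\ell} - [M]_{k,\ell} \bigr|
\;\leq\; \O{ \tfrac{\log n}{n^{\ep \alpha/2}} } + \O{ \tfrac{1}{n^{\ep/8}} }
\]
uniformly over $k\ell \leq B(n)$.

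Second, I would invoke the same adaptation of the continuous mapping theorem used inside the proof of Theorem~\ref{thm:estimator}. The supremum map $f : m^\infty \to \R$ is $1$-Lipschitz with respect to the sup norm, hence uniformly continuous. Moreover, since $[M]$ has entries in $[0,1]$ and is monotone along rows/columns past the diagonal (the same argument as in Corollary~\ref{cor:boundaryIsSup}), the truncations $f \circ r_i([M])$ converge pointwise to $f([M])$. The two ingredients — the uniform entrywise bound above and the truncation-convergence property — are exactly what the proof of Theorem~\ref{thm:estimator} uses to conclude that $f(\widehat{[M]}_B(D_n)) \to f([M])$ in probability. Here this reads as $\MICestE_{,B}(D_n) \to \sup [M](X,Y)$ in probability.

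Third, I would close the loop using Theorem~\ref{thm:altDef2}, which asserts $\sup [M](X,Y) = \popMIC(X,Y)$. Combined with the convergence in probability just established, this gives $\MICestE_{,B}(D_n) \to \popMIC(X,Y)$ in probability, proving consistency. There is no real obstacle: the only thing to be careful about is verifying that the grid restriction does not invalidate Lemma~\ref{lem:unionBoundOverGrids} — which it does not, because that lemma's bound is derived uniformly over all $k$-by-$\ell$ grids without reference to the particular collection being optimized over.
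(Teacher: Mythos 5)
Your proposal is correct and matches the paper's own (much terser) argument: the paper likewise observes that $I^{[*]}$ maximizes over a subset of the grids entering $I^*$, so the uniform entrywise bounds and the machinery of Theorem~\ref{thm:estimator} carry over verbatim, and then identifies $\sup [M]$ with $\popMIC$ via Theorem~\ref{thm:altDef2}. Your write-up simply makes explicit the subset-of-grids step and the $\sup$-contraction inequality that the paper leaves implicit.
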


Both $\MIC$ and $\MICestE$ are consistent estimators of $\popMIC$. The difference between them is that while $\MIC$ could only be computed efficiently by a heuristic approximation, $\MICestE$ can be computed exactly and efficiently by a trivial adaptation of the APPROX-MIC algorithm.

The performance of $\MICestE$ in terms of bias/variance, equitability, power against independence, and runtime is compared to that of $\MIC$ as well as other methods in the forthcoming companion paper.

\section{An algorithm for computing the $\popMIC$ of a given density function to arbitrary precision}
In this section we give an algorithm for computing the $\popMIC$ of a given probability density function to arbitrary precision. Our ability to do so stems from the observation that Theorem~\ref{thm:MICinTermsOfBoundary}, together with Proposition \ref{prop:explicitValueOfBoundary}, yields the following corollary.
\begin{cor}
\label{cor:altDef1}
Let $(X,Y)$ be a random variable taking values in $[0,1]^2$, and let $\mathbb{P}$ be the set of finite partitions of $[0,1]$. Then
\[ \popMIC(X,Y) = \sup \left\{ \frac{I(X, Y|_P)}{\log |P|} : P \in \mathbb{P} \right\} \bigcup \left\{ \frac{I(X|_P, Y)}{\log |P|} : P \in \mathbb{P} \right\} \]
where $|P|$ is the number of bins in the partition $P$.
\end{cor}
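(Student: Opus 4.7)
The plan is to chain together the two results cited, namely Theorem~\ref{thm:MICinTermsOfBoundary} and Proposition~\ref{prop:explicitValueOfBoundary}, and then simplify the resulting double supremum by noting how the normalization depends on the partition. First I would write
\[ \popMIC(X,Y) \;=\; \sup \partial M \;=\; \sup_{k > 1} M_{k,\uparrow} \;\vee\; \sup_{\ell > 1} M_{\uparrow,\ell}, \]
using Theorem~\ref{thm:MICinTermsOfBoundary} together with the definition of $\partial M$. Then I would apply Proposition~\ref{prop:explicitValueOfBoundary} to each boundary term: that proposition gives
\[ M_{k,\uparrow} \;=\; \sup_{P \in P(k)} \frac{I(X, Y|_P)}{\log k}, \]
and an identical argument with the roles of the two coordinates swapped yields
\[ M_{\uparrow,\ell} \;=\; \sup_{P \in P(\ell)} \frac{I(X|_P, Y)}{\log \ell}. \]

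The one subtlety is that the corollary normalizes by $\log |P|$, whereas Proposition~\ref{prop:explicitValueOfBoundary} normalizes by $\log k$ and allows $P$ to have strictly fewer than $k$ parts. To reconcile these, I would observe that for any fixed partition $P$ with $|P| = j$, the quantity $I(X, Y|_P)/\log k$ is strictly decreasing in $k$ for $k \ge j$ (since the numerator does not depend on $k$). Hence in the double supremum
\[ \sup_{k > 1}\; \sup_{P \in P(k)} \frac{I(X, Y|_P)}{\log k} \]
the inner partition $P$ contributes its largest value precisely when $k = |P|$, so the expression collapses to $\sup_{P \in \mathbb{P}} I(X, Y|_P)/\log |P|$. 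The same reduction applied to the $M_{\uparrow,\ell}$ side yields $\sup_{P \in \mathbb{P}} I(X|_P, Y)/\log |P|$. Combining the two via $\sup_A \vee \sup_B = \sup_{A \cup B}$ produces the stated expression for $\popMIC(X,Y)$.

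There is no real obstacle here: the content of the argument lives entirely inside Theorem~\ref{thm:MICinTermsOfBoundary} and Proposition~\ref{prop:explicitValueOfBoundary}, and the only step that requires any thought is the reindexing from the ``at most $k$ parts, normalized by $\log k$'' formulation to the ``exactly $|P|$ parts, normalized by $\log |P|$'' formulation. This reindexing is justified by the monotonicity observation above and does not alter the supremum.
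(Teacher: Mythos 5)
Your proposal is correct and follows exactly the route the paper intends: the paper gives no separate proof, stating only that the corollary follows from Theorem~\ref{thm:MICinTermsOfBoundary} together with Proposition~\ref{prop:explicitValueOfBoundary}, which is precisely the chain you spell out. Your explicit justification of the reindexing from the ``at most $k$ parts, normalized by $\log k$'' form to the ``normalized by $\log |P|$'' form is a detail the paper leaves implicit, and your monotonicity argument for it is sound.
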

The expressions in the above corollary involve maximization only over one-dimensional partitions rather than two-dimensional grids. This type of maximization can be done efficiently using dynamic programming, and this is the algorithm we will propose.

In addition to allowing us to reason rigorously about $\popMIC$ in the large-sample limit, this algorithm will also provide the basis for a second new estimator of $\popMIC$ that works by estimating the density of the distribution that gave rise to the data and then computing the true $\popMIC$ of that density.

We present this argument formally in the following theorem.
\begin{thm}
Given the pdf of a random variable $(X,Y)$, $M_{k, \uparrow}$ and $M_{\uparrow, \ell}$ are both computable to within an additive error of $O(k \ep^{0.999})+E$ in time $O(kT/\ep)$, where $T$ is the time required to numerically compute the mutual information of a continuous distribution to within an error of $E$.
\end{thm}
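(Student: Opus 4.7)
The plan is to use Proposition~\ref{prop:explicitValueOfBoundary} to rewrite
\[ (\log k)\, M_{k,\uparrow} \;=\; \sup_{P \in P(k)} I(X,\, Y|_P), \]
and then maximize the right-hand side by dynamic programming over partitions whose boundaries are aligned to a discrete grid on $[0,1]$; the case of $M_{\uparrow,\ell}$ is symmetric. The DP is enabled by the crucial observation that the objective decomposes additively over the cells of $P$, namely $I(X, Y|_P) = \sum_c g(c)$ with
\[ g([a,b]) \;=\; \int_0^1 p\bigl(x, [a,b]\bigr) \log \frac{p\bigl(x, [a,b]\bigr)}{p(x)\, p([a,b])}\, dx, \]
each summand depending only on a single cell and on the (known) joint density.

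Concretely, I would place $\lceil 1/\ep \rceil$ uniformly spaced breakpoints on $[0,1]$ and fill the DP table
\[ V[i,j] \;=\; \max_{0 \le i' < i}\bigl\{\, V[i',\,j-1] \;+\; \widehat g(i'\ep,\, i\ep)\,\bigr\}, \]
where $\widehat g$ denotes the numerical-integration approximation of $g$ supplied by the assumed continuous-MI integrator. The estimate for $M_{k,\uparrow}$ is $\max_{j\le k} V[\lceil 1/\ep\rceil, j]/\log k$. Standard bookkeeping gives $O(k/\ep)$ DP states, each performing one look-up of a $\widehat g$-value whose amortized computation cost is the stated $T$, yielding total runtime $O(kT/\ep)$.

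The error analysis splits into two parts. The additive $+E$ absorbs the numerical-integration error in the $\widehat g$ values: by allocating per-cell precision $E/k$ (at only a constant-factor cost in $T$) and summing over the at most $k$ cells of the returned partition, the accumulated numerical error remains $O(E)$ after the $(\log k)^{-1}$ normalization. The discretization error is the more substantive part. If $P^*$ is a true maximizer in $P(k)$, snapping each of its at most $k-1$ boundaries to the nearest grid point displaces $\delta = O(k\ep)$ of the joint probability mass (with the constant depending on a bound on the marginal density of $Y$). The goal is then to show that such a perturbation changes $I(X, Y|_P)$ by at most $O\bigl(H_b(\delta) + \delta\log k\bigr)$; dividing by $\log k$ and using the bound $H_b(x) = O(x^{0.999})$ invoked in Lemma~\ref{lem:unionBoundOverGrids} yields the claimed $O(k\ep^{0.999})$.

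The hard part is establishing this perturbation bound, which is the direct analog of Proposition~\ref{prop:boundedVariationOfI} in the mixed continuous/discrete setting where $X$ is continuous and $Y|_P$ is discrete on at most $k$ atoms. Writing $I(X,Y|_P) = H(Y|_P) - H(Y|_P \mid X)$ and following the same two-step argument---bound the change in $H(Y|_P)$ directly in terms of $\delta$, then bound the change in $H(Y|_P\mid X) = \int H(Y|_P \mid X=x)\, p(x)\, dx$ by applying the same entropy-perturbation bound pointwise in $x$ and invoking a continuous analog of Lemma~\ref{lem:boundWeightedAverageOfEntropy}---yields the needed estimate. The delicate point is verifying that the coefficient of $\delta$ remains $\log k$ rather than growing to $\log(1/\ep)$; this is exactly what ensures that the final error scales as $k\ep^{0.999}$ and is independent of the discretization resolution.
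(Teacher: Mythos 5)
Your high-level strategy---restrict the candidate cut points to a grid of resolution $\ep$, optimize over grid-aligned partitions by dynamic programming using the additive decomposition of $I(X,Y|_P)$ over the cells of $P$, and bound the loss from snapping an optimal partition onto the grid---is essentially the paper's, which runs the dynamic program of \cite{MINE} over sub-partitions of a fixed partition $\Pi$ of the $y$-axis into $1/\ep$ pieces. But there is a concrete gap in your discretization step: you take the $1/\ep$ breakpoints to be \emph{uniformly spaced}, and, as you yourself note, bounding the displaced mass by $O(k\ep)$ then requires a bound on the marginal density of $Y$. The theorem assumes only that a pdf is given; a marginal density can be unbounded, in which case a single interval of length $\ep$ can carry constant probability mass and the snapping argument fails. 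The paper avoids this by taking $\Pi$ to be an \emph{equipartition by probability mass} (each of its $1/\ep$ cells carries mass exactly $\ep$), so that replacing each of the at most $k-1$ lines of the optimal $P$ by nearby lines of $\Pi$ perturbs at most $(k-1)\ep$ of the mass with no assumption on the density. Your construction is repaired by the same substitution (quantiles of $Y$ rather than uniform spacing), but as written it is a step that would fail.

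The second difference is that you defer the key estimate---that moving $\delta$ mass changes $I(X,Y|_P)$ by $O(H_b(\delta)+\delta\log k)$ in the mixed continuous/discrete setting---calling it the hard part and sketching a new analog of Proposition~\ref{prop:boundedVariationOfI}. The paper does not prove such a two-sided perturbation bound; it exploits the one-sided structure of the comparison. It first \emph{refines} $P$ by adding the lines of $\Pi$ surrounding each dissonant line (Lemma~\ref{lem:boundMIOfCloseGrids} controls this by $2\left(H_b((k-1)\ep)+(k-1)\ep\right)$), then deletes one line from each added pair to return to at most $k$ rows, losing at most $H_b(\ep)$ per deletion by Lemmas~\ref{lem:boundMIOfCloseGrids_Subgrids} and~\ref{lem:boundWeightedAverageOfEntropy}; the result is a sub-partition of $\Pi$ whose score the dynamic program dominates by construction, and $H_b(p)=O(p^{1-\alpha})$ yields $O(k\ep^{0.999})$. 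So the delicate point you flag (keeping the coefficient of $\delta$ at $\log k$ rather than $\log(1/\ep)$) is sidestepped rather than solved in the paper, and under your route it remains an unproven obligation. Your additive-decomposition formulation of the DP and the per-cell allocation of the numerical error $E$ are sound, though your runtime accounting is loose: each DP state maximizes over $O(1/\ep)$ predecessors, so matching the stated $O(kT/\ep)$ requires charging those look-ups to the precomputation of the $\widehat g$ values.
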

\begin{proof}
We prove the claim for $M_{k, \uparrow}$. For $0 < \ep, \ep' < 1$, let $\Gamma = (\Pi, \Psi)$ be a grid consisting of an equipartition $\Pi$ into $1/\ep$ rows, and an equipartition $\Psi$ into $1/\ep'$ columns. The dynamic programming algorithm presented in~\cite{MINE} finds, in time $O(kT/\ep)$, a partition $P_{DP}$ into rows such that $I \left(X_\Psi,Y|_{P_{DP}} \right)$ is maximized subject to $P_{DP} \subset \Pi$. We must bound the difference between this and $I \left(X, Y|_P \right)$, where $P$ is an optimal partition into rows. We will do so in two steps: first we will bound the quantity
\[ \left| I \left( X|_\Psi, Y|_{P_{DP}} \right) - I \left( X|_\Psi, Y|_P \right) \right| \]
we will then bound
\[ \left| I \left( X|_\Psi, Y|_P \right) - I \left( X, Y|_P \right) \right| \]

We will prove the first bound by taking the grid $(P, \Psi)$ and showing that there exists some $\Pi' \subset \Pi$ such that the mutual information achieved with $(\Pi', \Psi)$ is close to that achieved with $(P, \Psi)$. Since $\Pi' \subset \Pi$ gives us that $I\left( X_\Psi, Y_{\Pi'} \right) \leq I\left( X_\Psi, Y_{P_{DP}} \right)$, we may then conclude that $\left| I \left( X|_\Psi, Y|_{P_{DP}} \right) - I \left( X|_\Psi, Y|_P \right) \right|$ is small.

We construct $\Pi'$ using Lemma~\ref{lem:boundMIOfCloseGrids} from the appendix with $G = (\Psi, P)$. The lemma shows that we can define $P'$ to be the partition into rows that results from adding to $P$ the two lines in $\Pi$ that surround each line of $P$ that is not in $\Pi$. If we define $P'$ in this way, we get that
\[ \left| I \left( X|_\Psi, Y|_{P'} \right) - I \left( X|_\Psi, Y|_P \right) \right| \leq 2 \left( H_b(\delta) + \delta \right) \]
where $\delta$ is the total probability mass of $(X,Y)$ lying in cells of $(\Psi, P)$ that are not contained in individual cells of $(\Psi, \Pi)$. We know, since there are at most $k-1$ horizontal lines in $(\Psi, P)$, that this number is at most $(k-1)\ep$. We are not quite done though: the new grid $(\Psi, P')$ is indeed a sub-grid of $\Gamma$, but it may now contain up to $2k-1$ rows. This means that we may have to remove up to $k$ lines from it without losing too much mutual information. Fortunately though, for every pair of new lines in $P'$, we can arbitrarily remove one of the lines in the pair. Lemma~\ref{lem:boundMIOfCloseGrids_Subgrids} shows that each time we do so we lose at most $\pi H_b(\ep/\pi)$ in mutual information, where $\pi$ is the total probability mass of the new merged row. Applying Lemma~\ref{lem:boundWeightedAverageOfEntropy} then shows that this quantity is at most $H_b(\ep)$. This leaves us with
\[ \left| I \left( X|_\Psi, Y|_{P'} \right) - I \left( X|_\Psi, Y|_P \right) \right| \leq 2 \left( H_b((k-1)\ep) + (k-1)\ep \right) + k H_b(\ep) \]
from which the result is obtained by observing that $H_b(p) \leq O(p^{1-\alpha})$ for any $\alpha > 0$.
\end{proof}

\begin{rmk}
We do not explore here the numerical integration associated with the above theorem, since the error introduced by choice of method is independent of the algorithm being proposed. However, standard numerical integration methods can be used to make this error arbitrarily small with an understood complexity tradeoff (see, e.g., \cite{stoer1980numerical}). In the upcoming companion paper by the same authors, we discuss the details of implementing this algorithm, including the numerical integration step.
\end{rmk}

We have shown so far that it is possible to efficiently approximate $M_{k, \uparrow}$ and $M_{\uparrow, \ell}$. A simple corollary of this is that we can also approximate the $\popMIC$ of a pdf to arbitrary precision.

\begin{cor}
Given the pdf of a random variable $(X,Y)$ and an error threshold $\ep$, it is possible to compute $\popMIC(X,Y)$ to within additive error $\ep$.
\end{cor}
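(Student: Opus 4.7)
The plan is to combine Theorem~\ref{thm:MICinTermsOfBoundary}, which gives $\popMIC(X,Y) = \sup \partial M$, with the preceding theorem, which shows that any individual boundary entry $M_{k,\uparrow}$ or $M_{\uparrow,\ell}$ can be computed to arbitrary additive precision in finite time. What needs to be added is (i) reducing the infinite supremum $\sup \partial M$ to a maximum over finitely many boundary entries with error at most $\ep/2$, and (ii) computing each of those finitely many entries within error $\ep/2$; the triangle inequality then yields the total additive error $\ep$.

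For the truncation I would invoke Proposition~\ref{prop:mutualInfoSupremumDef} together with the data-processing inequality: for every partition $P$ of $[0,1]$ into at most $k$ pieces, $Y|_P$ is a deterministic function of $Y$, so $I(X, Y|_P) \le I(X, Y)$, and hence
\[ M_{k,\uparrow} \;=\; \sup_{|P|\le k} \frac{I(X, Y|_P)}{\log k} \;\le\; \frac{I(X,Y)}{\log k}, \]
with the analogous bound for $M_{\uparrow,\ell}$. When $I(X,Y) < \infty$, choosing $K \ge \exp(2\,I(X,Y)/\ep)$ forces $M_{k,\uparrow}, M_{\uparrow,\ell} \le \ep/2$ for all $k, \ell > K$, so the maximum over $2 \le k, \ell \le K$ is within $\ep/2$ of $\sup \partial M$. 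For each $k, \ell$ in this finite window I would then invoke the preceding theorem with its internal tolerance $\ep'$ and its integration tolerance $E$ chosen so that the additive error in each computed boundary entry is at most $\ep/2$ (absorbing the factor of $k$ in the error bound by taking $\ep' \sim (\ep/K)^{1/0.999}$). Returning the maximum of these $O(K)$ approximations produces the desired estimate.

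The main obstacle is the degenerate case $I(X,Y) = \infty$, in which the tail bound above is vacuous and $\popMIC$ may equal $1$. A natural fix is an adaptive procedure that runs the above algorithm with a monotone sequence of cutoffs $K_1 < K_2 < \cdots$, using uniform continuity of $f \mapsto \popMIC(f)$ (established earlier) to certify a stopping criterion: approximate the given pdf $f$ in $L^1$ by a piecewise-constant density $\tilde f$ within the modulus-of-continuity distance corresponding to $\ep/4$, note that $I(\tilde X, \tilde Y) < \infty$ so the previous paragraph's tail bound applies to $\tilde f$, and compare the truncated estimator on $\tilde f$ against the lower bounds being produced for $f$ itself. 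Apart from this case distinction, the algorithm and its error analysis amount to routine bookkeeping layered on top of the preceding theorem.
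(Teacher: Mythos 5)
Your proof is correct, but it takes a genuinely different — and more explicit — route than the paper's. The paper's own argument is a one-liner: it defines $\partial M_s$ to be the boundary entries with index at most $s$ and asserts that $\sup \partial M_s \rightarrow \sup \partial M$. That establishes convergence but supplies no computable stopping rule: without a rate, one cannot say how large $s$ must be to guarantee error $\ep$. Your tail bound $M_{k,\uparrow} \leq I(X,Y)/\log k$, obtained by applying the data processing inequality to $Y \mapsto Y|_P$ inside Proposition~\ref{prop:explicitValueOfBoundary}, supplies exactly such a rate whenever $I(X,Y) < \infty$; the cutoff $K \sim \exp(2I(X,Y)/\ep)$ plus the per-entry tolerance inherited from the preceding theorem then gives an honest $\ep/2 + \ep/2$ error budget (and your observation that $A \leq \sup \partial M \leq A + \ep/2$, where $A$ is the max over the truncated window, is the right way to close the truncation step). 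Your handling of the $I(X,Y) = \infty$ case via the uniform continuity of $\popMIC$ — replacing $f$ by a piecewise-constant $\tilde{f}$ within the explicit modulus of continuity, for which the mutual information is trivially finite — is also sound; indeed you could route \emph{every} input through that reduction, since a piecewise-constant density on an $m$-by-$m$ grid satisfies $I \leq \log m$ for free, which sidesteps having to compute or bound $I(X,Y)$ for the original $f$ at all. The only step I would trim is the final ``compare the truncated estimator on $\tilde{f}$ against lower bounds for $f$'': once the $L^1$ distance between $f$ and $\tilde{f}$ is below the modulus-of-continuity threshold for $\ep/4$, you can simply output the computed approximation to $\popMIC(\tilde{f})$ and charge $\ep/4$ to the continuity step, with no adaptive comparison needed.
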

\begin{proof}
Let
\[ \partial M_s = \left\{ M_{k, \uparrow} : k \leq s \right\} \bigcup \left\{ M_{\uparrow, \ell} : \ell \leq s \right\} \]
The corollary follows from the fact that the sequence $\sup \partial M_s$ converges to $\sup \partial M$.
\end{proof}

\subsection{An efficiently computable, consistent estimator of $\popMIC$ using density estimation}

The fact that we can compute the $\popMIC$ of a pdf to arbitrary precision, combined with the continuity of $\popMIC$ proven in a previous section, yields a second new estimator of $\popMIC$ that is consistent: given a finite sample, we can estimate the density that gave rise to it and then simply compute the $\popMIC$ of that density. This will give a consistent estimator of $\popMIC$ if the density estimator we use is consistent. We call this estimator $\MICestD$.

Again, we let $\mathcal{P}(S)$ be the set of probability density functions whose support is contained in $S$, equipped with the $L^1$ norm.

\begin{thm}
Let $F$ be a consistent density estimator. Given a finite sample $D$ from a distribution $\mcZ$ of a random variable $(X,Y)$, let $\popMIC(F(D))$ be the $\popMIC$ of the random variable whose distribution is $F(D)$. Then $\MICestD_{,F}(D) = \popMIC(F(D))$ is a consistent estimator of $\popMIC(X,Y)$.
\end{thm}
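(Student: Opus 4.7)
The plan is to deduce this theorem as an essentially immediate consequence of the uniform continuity of $\popMIC$ established in the earlier section, combined with the definition of a consistent density estimator and the continuous mapping theorem.

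First, I would fix the pdf $f$ of $(X,Y)$ and unpack the definition of consistency for $F$. In the setup of the paper, $\mathcal{P}(S)$ carries the $L^1$ norm, so saying $F$ is consistent means that if $D_n$ is an i.i.d.\ sample of size $n$ from $\mcZ$, then the random pdf $F(D_n)$ converges to $f$ in $L^1$ in probability (or almost surely --- either suffices for the conclusion we want). This is the input we need.

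Second, I would invoke the corollary to the previous section, which states that the map $f \mapsto \popMIC(f)$ from $\mathcal{P}([0,1]^2)$ (with the $L^1$ norm) to $[0,1]$ is uniformly continuous. Uniform continuity means that for every $\ep > 0$ there is a $\delta > 0$ such that $\|g - f\|_{L^1} < \delta$ implies $|\popMIC(g) - \popMIC(f)| < \ep$, with $\delta$ independent of $f$. Then, given any $\ep > 0$, I choose the corresponding $\delta$ and write
\[
\Pr\bigl\{ |\popMIC(F(D_n)) - \popMIC(f)| \geq \ep \bigr\} \leq \Pr\bigl\{ \|F(D_n) - f\|_{L^1} \geq \delta \bigr\},
\]
and the right-hand side tends to $0$ by consistency of $F$. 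This is exactly the standard continuous mapping argument: uniform continuity (in fact, mere continuity) of $\popMIC$ at $f$ promotes convergence in probability of the inputs to convergence in probability of the outputs. Hence $\MICestD_{,F}(D_n) = \popMIC(F(D_n)) \to \popMIC(f) = \popMIC(X,Y)$ in probability, which is consistency.

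There is not really a hard step here --- the theorem is genuinely a one-line combination of two previously established facts, and the whole point of the earlier development (continuity of $\popMIC$ in the $L^1$ topology on pdfs, plus the arbitrary-precision algorithm for computing $\popMIC$ of a pdf) was precisely to make this corollary free. The only subtlety worth flagging is the mode of convergence: one should make sure the notion of ``consistent density estimator'' being used converges in $L^1$ (which is the topology in which $\popMIC$ was shown continuous); most standard kernel and histogram estimators satisfy this under mild regularity conditions, so this is a mild assumption to bake into the definition of $F$. If one only has pointwise or weak convergence of densities, an extra argument (e.g.\ Scheffé's lemma to upgrade a.e.\ convergence of densities to $L^1$ convergence) would be needed, but this remains routine.
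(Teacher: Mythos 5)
Your argument is correct and is essentially identical to the paper's own proof, which likewise deduces the result from the uniform continuity of $\popMIC$ on $\mathcal{P}([0,1]^2)$ together with the continuous mapping theorem. Your explicit $\ep$--$\delta$ display and the remark about requiring $L^1$ convergence of the density estimator simply spell out details the paper leaves implicit.
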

\begin{proof}
Follows from the continuity of $\popMIC : \mathcal{P}([0,1]^2) \rightarrow \R$, together with the continuous mapping theorem (if $X_i \rightarrow X$ in probability and $f$ is continuous, then $f(X_i) \rightarrow f(X)$ in probability as well).
\end{proof}

\section{Conclusion}
In this paper, we formalized and developed the theory behind both equitability and the maximal information coefficient. We first defined equitability in terms of a statistic $\hvphi$ and a model $\Q$ of ``standard relationships" on which we specify a property of interest $\Phi$ that reflects our notion of relationship strength. The equitability of a statistic $\hvphi$ is the extent to which, for any $\mcZ \in \Q$, knowing $\hvphi(Z^*)$ gives us good bounds on $\Phi(Z)$, where $Z^*$ denotes a sample from $Z$. We showed that this property can be stated in terms of power against a specific set of null hypotheses corresponding to different threshold values of the property of interest $\Phi$. In particular, when $\Phi$ is chosen such that $\Phi = 0$ corresponds to statistical independence, equitability is a generalization of the notion of power against statistical independence.

Having defined equitability, we then turned our attention to the maximal information coefficient (MIC). The original paper about $\MIC$ defined it as a statistic. In this paper, however, we defined a function of distributions called $\popMIC$ and then proved that $\MIC$ can be understood to be a consistent estimator of that quantity. We then went on to prove first that $\popMIC$  is continuous when considered as a function of probability densities, and second that it can be equivalently characterized in simpler terms that open up new avenues for computing it. These results together led us to define two new efficiently computable, consistent estimators of $\popMIC$. (These objects are summarized in Table~\ref{tab:objectsDefined} below.) In addition to these estimators, we also described an algorithm for computing to arbitrary precision the true $\popMIC$ of a given probability density function that will be useful in analyzing both the behavior of $\popMIC$ and of these estimators.

\begin{table}[h!]
\centering
	\begin{tabular}{p{0.5in}p{3in}p{0.9in}}
	\textbf{Object} &
	\textbf{Description} &
	\textbf{Defined in} \\ \hline \\
	$\MIC$ &
	Original statistic &
	\cite{MINE} \vspace{1\bigskipamount} \\ \hline \\
	 
	$\popMIC$ &
	Population value of $\MIC$ &
	here \vspace{1\bigskipamount} \\ \hline \\
	
	$\MICestE$ &
	Estimator of $\popMIC$ via equipartitions &
	here \vspace{1\bigskipamount} \\ \hline \\
	
	$\MICestD$ &
	Estimator of $\popMIC$ via density estimation &
	here \vspace{1\bigskipamount} \\ \hline \\
\end{tabular}
\caption{Currently defined statistics and estimands related to $\MIC$.} \label{tab:objectsDefined}
\end{table}

\subsection{Open questions}
Our results leave several open questions, both theoretical and practical. On the theoretical side, it would be valuable to have more than just the consistency of the estimators we have introduced. Closed-form expressions for their bias and variance, while likely difficult to compute, would be useful for understanding the tradeoffs among these (and other) estimators. Second, now that $\MIC$ is defined as a property of distributions rather than a statistic, it would make sense to obtain closed-form expressions for the $\MIC$ of some canonical family of distributions (e.g. Gaussians). This would contribute to a better understanding of what is captured by $\MIC$ and how it relates to other well understood notions of dependence. Finally, theoretical guarantees about the equitability of $\MIC$ in the infinite-data limit on some set $\Q$ would be highly desirable. Perhaps the alternative characterization proven in this paper will be a useful first step toward these results.

More experimental open questions abound as well. For instance, how do each of these estimators compare to each other and to the state of the art on common finite sample sizes, both in terms of bias and variance, and in terms of equitability and power? How fast are they relative to each other and other methods? A forthcoming companion paper, which goes into the details of the implementation of the algorithmic ideas introduced here, addresses these experimental questions and shows that the ideas introduced here in fact lead to significant improvement in all four of these realms.

\section{Acknowledgments}
The authors would like to acknowledge R Adams, E Airoldi, H Finucane, A Gelman, J Huggins, J Mueller, and R Tibshirani for constructive conversations and useful feedback.

\newpage

\appendix
\label{appendix:technical}

\section{Lemmas about entropy and mutual information}
\begin{lemma}
\label{lem:boundEntropy}
Let $\Pi$ and $\Psi$ be random variables distributed over a discrete set of states $\Gamma$, and let $(\pi_i)$ and $(\psi_i)$ be their respective distributions. Let $P = f(\Pi)$ and $Q = f(\Psi)$ for some function $f$ whose image is of size $B$. Define
\[ \ep_i = \frac{\psi_i - \pi_i}{\pi_i} \]
Then for every $0 < a < 1$ there exists some $A > 0$ such that
\[ \left| H(Q) - H(P) \right| \leq \left( \log B \right) A \sum_i |\ep_i| \]
when $|\ep_i| \leq 1 - a$ for all $i$.
\end{lemma}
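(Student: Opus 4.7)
The plan is to reduce to a distribution on the $B$-element image of $f$, apply the mean value theorem to $-x\log x$ entrywise, and then use the hypothesis $|\ep_i|\leq 1-a$ together with the elementary bound $\pi_i\log(1/\pi_i)\leq 1/e$ to absorb the logarithmic factors that appear.

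First I would push everything through $f$: let $J=f(\Gamma)$, and for $j\in J$ set $p_j=\sum_{i:f(i)=j}\pi_i$ and $q_j=\sum_{i:f(i)=j}\psi_i$, so that $H(P)=-\sum_j p_j\log p_j$ and $H(Q)=-\sum_j q_j\log q_j$. The hypothesis $\ep_i\geq -(1-a)$ gives $\psi_i\geq a\pi_i$, hence $q_j\geq a p_j$ for every $j$. Writing $\phi(x)=-x\log x$ and applying the mean value theorem, for each $j$ there is some $c_j$ between $p_j$ and $q_j$ with
\[
|\phi(q_j)-\phi(p_j)| = |\log c_j+1|\cdot|q_j-p_j|,
\]
and since $c_j\geq a p_j$, the crude bound $|\log c_j+1|\leq \log(1/c_j)+1\leq \log(1/p_j)+\log(1/a)+1$ holds. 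Thus
\[
|H(Q)-H(P)| \leq \sum_j \bigl(\log(1/a)+1\bigr)|q_j-p_j| + \sum_j \log(1/p_j)\,|q_j-p_j|.
\]

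The first sum is easy: $\sum_j|q_j-p_j|\leq \sum_i|\psi_i-\pi_i|=\sum_i\pi_i|\ep_i|\leq \sum_i|\ep_i|$, so it contributes at most $(\log(1/a)+1)\sum_i|\ep_i|$. For the second sum, I would use $|q_j-p_j|\leq\sum_{f(i)=j}\pi_i|\ep_i|$ to swap the order of summation:
\[
\sum_j \log(1/p_j)\,|q_j-p_j| \leq \sum_i |\ep_i|\,\pi_i\log(1/p_{f(i)}).
\]
The key observation, and the only slightly delicate step, is that $p_{f(i)}\geq \pi_i$, so $\pi_i\log(1/p_{f(i)})\leq \pi_i\log(1/\pi_i)\leq 1/e$ pointwise. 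Hence this second sum is at most $\tfrac{1}{e}\sum_i|\ep_i|$. Combining, $|H(Q)-H(P)|\leq (\log(1/a)+1+1/e)\sum_i|\ep_i|$, where the constant depends only on $a$.

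To finish, note that the lemma is trivial when $B=1$ (then $H(P)=H(Q)=0$), and when $B\geq 2$ we have $\log B\geq 1$ (in either base), so the absolute bound above is dominated by $(\log(1/a)+1+1/e)(\log B)\sum_i|\ep_i|$. Taking $A=\log(1/a)+1+1/e$ gives the claim. The only real obstacle is recognizing that the apparent $\log(1/p_j)$ blow-up when some $p_j$ is small is canceled by the smallness of $\pi_i$ for the preimages of that $j$, via $x\log(1/x)\leq 1/e$; once this is in hand, the rest is bookkeeping and the $\log B$ factor in the stated bound is in fact slack that comes for free.
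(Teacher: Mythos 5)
Your proof is correct, and it takes a genuinely different route from the paper's. The paper follows a second-order Taylor expansion of the entropy in the relative perturbations $e_i$ of the image distribution (citing Roulston), kills the first-order term using $\sum_i e_i p_i = 0$, and introduces the $\log B$ factor by bounding $-p_i \log p_i \leq \log B$ termwise; the quadratic and cubic remainders are then controlled by convexity, which is where the hypothesis $|\ep_i| \leq 1-a$ is needed to keep the remainder under control. You instead apply the mean value theorem to $\phi(x) = -x\log x$ entry by entry, use $|\ep_i| \leq 1-a$ only to lower-bound the intermediate point $c_j \geq a p_j$, and then neutralize the apparent $\log(1/p_j)$ blow-up via $p_{f(i)} \geq \pi_i$ and $x\log(1/x) \leq 1/e$. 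This is cleaner (no Taylor remainder bookkeeping) and in fact yields the strictly stronger conclusion that $|H(Q)-H(P)| \leq A_a \sum_i |\ep_i|$ with no $\log B$ factor at all --- the $\log B$ in the stated bound is slack, as you observe. Two trivial repairs: the claim that $\log B \geq 1$ ``in either base'' fails for natural logarithms at $B=2$ (replace $1$ by $\log 2$ and fold $1/\log 2$ into $A$), and you should note explicitly that terms with $q_j = p_j$ or $p_j = 0$ contribute nothing (the latter because $\ep_i \leq 1-a$ forces $\psi_i \leq (2-a)\pi_i$, hence $q_j = 0$ whenever $p_j = 0$), so the mean value theorem is only ever invoked on nondegenerate intervals bounded away from $0$.
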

\begin{proof}
We prove the claim with entropy measured in nats. A re-scaling will then give the general result.

Let $(p_i)$ and $(q_i)$ be the distributions of $P$ and $Q$ respectively, and define
\[ e_i = \frac{q_i - p_i}{p_i} \]
analogously to $\ep_i$. Before proceeding, we observe that
\[ e_i = \sum_{j \in f^{-1}(i)} \frac{\pi_j}{p_i} \ep_j \]

We now proceed with the argument. We have from \cite{roulston1999estimating} that
\begin{eqnarray}
\left| H(Q) - H(P) \right| &\leq& \left| \sum_i \left( e_i p_i (1 + \ln p_i) + \frac{1}{2} e_i^2 p_i + \O{e_i^3} \right) \right| \\
	&\leq& \left| \sum_i e_i p_i \right| + \left| \sum_i e_i p_i \ln p_i \right| + \frac{1}{2} \left| \sum_i e_i^2 p_i \right| + \left| \sum_i \O{e_i^3} \right| \\
	&=& \left| \sum_i e_i p_i \ln p_i \right| + \frac{1}{2} \sum_i e_i^2 p_i + \left| \sum_i \O{e_i^3} \right| \label{eq:toBound}
\end{eqnarray}
where the final equality is because $\sum_i e_i p_i = \sum_i q_i - \sum_i p_i = 0$. We will proceed by bounding each of the terms in Equation~\ref{eq:toBound} separately.

To bound the first term, we write
\[ \left| \sum_i e_i p_i \ln p_i \right| \leq -\sum_i | e_i | p_i \ln p_i\]
We then note that $- \sum_i p_i \ln p_i \leq \ln B$, and since each of the summands has the same sign this means that $-p_i \ln p_i \leq \ln B$. We also observe that
\[ \left| e_i \right| \leq \left| \sum_{j \in f^{-1}(i)} \frac{\pi_j}{p_i} \ep_j \right| \leq \sum_j \frac{\pi_j}{p_i} \left| \ep_j \right| \leq \sum_j |\ep_j| \]
since $\pi_j/p_i \leq 1$. Together, these two facts give
\begin{eqnarray*}
- \sum_i |e_i| p_i \ln p_i &\leq& (\ln B) \sum_i |e_i| \\
	&\leq& (\ln B) \sum_i |\ep_i|
\end{eqnarray*}
The second inequality is because 

To bound the second term, we use the fact that $p_i \leq 1$ for all $i$, and so
\[ \sum_i e_i^2 p_i \leq \sum_i e_i^2 \]
We then write
\begin{eqnarray*}
\sum_i e_i^2 &=& \sum_i \left( \sum_{j \in f^{-1}(i)} \frac{\pi_j}{p_i} \ep_j \right)^2 \\
	&\leq& \sum_i \sum_{j \in f^{-1}(i)} \frac{\pi_j}{p_i} \ep_j^2 \\
	&\leq& \sum_j \ep_j^2 \\
	&=& \sum_j \O{\left| \ep_j \right|}
\end{eqnarray*}
where the second line is a consequence of the convexity of $f(x) = x^2$ and the third line is because the sets $f^{-1}(i)$ partition $\Gamma$.

To bound the third term, we write
\[ \left| \sum_i \O{e_i^3} \right| \leq \sum_i \O{|e_i|^3} \]
and then proceed as we did with the second term, using the fact that $f(x) = x^3$ is convex for $x \geq 0$. This gives
\[ \sum_i \O{|e_i|^3} \leq \sum_i \O{|\ep_i|^3} = \sum_i \O{|\ep_i|} \]
completing the proof.
\end{proof}

In the lemma below, we bound the change in mutual information between two grids where one is a sub-grid of the other.
\begin{lemma}
\label{lem:boundMIOfCloseGrids_Subgrids}
Let $G$ and $G'$ be two grids with the property that $G$ can be obtained from $G'$ by merging two adjacent columns of $G'$. Let $Z' = (X', Y')$ be a random variable distributed over the cells of $G'$, and let $Z = f(Z') = (X,Y)$ be the corresponding random variable induced over the cells of $G$. Then
\[ \left| I(Z') - I(Z) \right| \leq \pi H_b(\nu/\pi) \]
where $H_b$ denotes the binary entropy function, $\nu$ is the probability mass contained in one of the columns of $G'$ that was merged, and $\pi$ is the probability mass in the merged column in $G$.
\end{lemma}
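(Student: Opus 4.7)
The plan is to prove this via the chain rule for mutual information, using the fact that the map from $G'$ to $G$ only affects the $X$-coordinate by merging two columns into one. Specifically, since only two columns are merged, we can write $Y = Y'$ and $X = g(X')$ for a deterministic function $g$ that is the identity on all column labels except the two merged ones, which it sends to a common label.

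First I would use the data processing inequality to establish that $I(Z') \geq I(Z)$, so that $|I(Z') - I(Z)| = I(Z') - I(Z)$. This works because $Z = f(Z')$ with $f$ deterministic. Next, I would expand the difference using the chain rule:
\[ I(X', Y') = H(X') + H(Y') - H(X', Y') \]
together with $H(X') = H(X) + H(X' \mid X)$ and $H(X', Y') = H(X, Y) + H(X' \mid X, Y)$, both valid because $X$ is a deterministic function of $X'$. Since also $Y = Y'$, subtracting gives
\[ I(Z') - I(Z) = H(X' \mid X) - H(X' \mid X, Y) = I(X'; Y \mid X), \]
which is at most $H(X' \mid X)$.

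The key step is then computing $H(X' \mid X)$ explicitly. For every value $x$ of $X$ that corresponds to an unmerged column, $X' \mid X = x$ is deterministic, so $H(X' \mid X = x) = 0$. For the single value $x^*$ of $X$ corresponding to the merged column, $X'$ takes exactly two values with conditional probabilities $\nu/\pi$ and $(\pi - \nu)/\pi$, giving $H(X' \mid X = x^*) = H_b(\nu/\pi)$. Weighting by $\Pr(X = x^*) = \pi$ yields $H(X' \mid X) = \pi\, H_b(\nu/\pi)$, which combined with the bound $I(X'; Y \mid X) \leq H(X' \mid X)$ completes the proof.

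There is no real obstacle here: the only thing to check carefully is that we have correctly identified which probabilities appear in the conditional entropy (i.e., that conditioning on $X = x^*$ rescales the masses of the two merged columns by $1/\pi$), and that the sign of $I(Z') - I(Z)$ is non-negative so the absolute value can be dropped.
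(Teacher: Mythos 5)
Your proof is correct and is essentially the same argument as the paper's: both reduce the difference $I(Z')-I(Z)$ to $\pi$ times the mutual information between $Y$ and the indicator of which of the two merged columns $X'$ lies in (conditioned on landing in the merged column), and then bound that mutual information by the entropy $H_b(\nu/\pi)$ of the indicator. The paper phrases this via $I = H(Y) - H(Y\mid X)$ and the conditional variable $(A,B)=(X',Y')\mid X'\in\{1,2\}$, while you phrase it via the chain rule identity $I(Z')-I(Z)=I(X';Y\mid X)\leq H(X'\mid X)$, but the computations coincide.
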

\begin{proof}
We use the general fact that $I(A,B) = H(B) - H(B | A)$. In particular, since $Y = Y'$, we have
\[ \left| I(Z') - I(Z) \right| = \left| H(Y | X) - H(Y' | X') \right| \]
Suppose without loss of generality that the first and second columns of $G'$ were merged to form the first column of $G$. Since the distributions of $Z$ and $Z'$ are identical on all the columns of $G$ unaffected by the merge, the expression above equals
\[ \left| \Pr{X = 1}H(Y | X = 1) - \left( \Pr{X' = 1}H(Y' | X' = 1) + \Pr{X' = 2}H(Y' | X' = 2) \right)  \right| \]
Defining the random variable $(A,B) = (X', Y') | X' \in \{1,2\}$ and noting that $B = Y | X = 1$ as random variables allows us to re-write this as
\begin{eqnarray*}
&& \Pr{X = 1} \left| H(B) - H(B | A) \right| \\
	&=& \Pr{X = 1} I(A, B) \\
	&\leq& \Pr{X = 1} H(A) \\
	&=& \pi H_b( \nu / \pi)
\end{eqnarray*}
completing the proof.
\end{proof}

\begin{lemma}
\label{lem:boundWeightedAverageOfEntropy}
Let $\{ w_i \} \subset [0,1]$ be a set of size $n$ with $\sum_i w_i \leq 1$, and let $\{u_i\}$ be a set of $n$ non-negative numbers satisfying $\sum_i u_i = a$ and $u_i \leq w_i$. Then
\[ \sum_{i=1}^n w_i H_b \left( \frac{u_i}{w_i} \right) \leq H_b \left( a \right) \]
\end{lemma}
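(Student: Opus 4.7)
The plan is to give an information-theoretic proof by constructing a joint distribution whose relevant entropies are exactly the two sides of the desired inequality, and then invoking the fact that conditioning reduces entropy. Specifically, I would build a pair of random variables $(I,B)$ with $I$ taking values in $\{0,1,\dots,n\}$ and $B \in \{0,1\}$, defined by $\Pr(I=i, B=1) = u_i$ and $\Pr(I=i, B=0) = w_i - u_i$ for $i \geq 1$, together with $\Pr(I=0, B=0) = 1 - \sum_i w_i$ (and $\Pr(I=0, B=1)=0$) to absorb any missing mass. The hypotheses $u_i \leq w_i$ and $\sum w_i \leq 1$ are exactly what is needed for this to be a valid probability distribution.

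From here the computation is mechanical. The marginal of $B$ satisfies $\Pr(B=1) = \sum_i u_i = a$, so $H(B) = H_b(a)$. For the conditional entropy, note that $\Pr(I=i) = w_i$ and $\Pr(B=1 \mid I=i) = u_i/w_i$ for $i \geq 1$, while $H(B \mid I=0) = 0$ because $B$ is deterministic there. Therefore
\[ H(B \mid I) = \sum_{i=1}^n w_i\, H_b\!\left(\frac{u_i}{w_i}\right). \]
The inequality $H(B\mid I) \leq H(B)$ then yields the lemma immediately.

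There is no real obstacle here; the only thing to be careful about is the edge cases, namely terms where $w_i = 0$ (so $u_i = 0$ as well, and the corresponding summand is taken to be $0$ by the usual convention $0 \log 0 = 0$) and the auxiliary state $i = 0$ that soaks up the possible deficit $1 - \sum_i w_i \geq 0$. As a sanity-check alternative, one could instead use concavity of $H_b$ (Jensen) to obtain $\sum_i w_i H_b(u_i/w_i) \leq W\, H_b(a/W)$ with $W = \sum_i w_i$, and then verify via a short calculus argument that $W \mapsto W\, H_b(a/W)$ is nondecreasing on $[a,1]$ so that $W\, H_b(a/W) \leq H_b(a)$; but the conditional-entropy argument is shorter and conceptually cleaner.
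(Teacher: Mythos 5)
Your proof is correct and is essentially identical to the paper's: the paper also constructs an auxiliary pair $(X,Y)$ with $\Pr(X=i)=w_i$, $\Pr(X=0)=1-\sum_i w_i$, and a binary $Y$ with $\Pr(Y=0\mid X=i)=u_i/w_i$, then applies $H(Y\mid X)\leq H(Y)$ and computes $\Pr(Y=0)=\sum_i u_i = a$. Your treatment of the edge cases ($w_i=0$ and the absorbing state $i=0$) is, if anything, slightly more careful than the paper's.
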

\begin{proof}
Consider the random variable $X$ taking values in $\{0, \ldots, n\}$ that equals 0 with probability $1-\sum_i w_i$ and equals $i$ with probability $w_i$ for $0 < i \leq n$. Define the random variable $Y$ taking values in $\{0,1\}$ by
\[ \Pr{Y = 0 | X = i} = \left\{
        \begin{array}{ll}
            0 & \quad i = 0 \\
            u_i / w_i & \quad 0 < i \leq n
        \end{array}
    \right.
 \]
The function we wish to bound equals $H(Y | X) \leq H(Y)$. We therefore observe that
\[ \sum_{i=1}^n w_i H_b \left( \frac{u_i}{w_i} \right) \leq H(Y) \]
The result follows from the observation that
\[ \Pr{Y = 0} = \sum_i \Pr{X = i} \frac{u_i}{w_i} = \sum_i u_i \leq a \]
\end{proof}

The following lemma uses Lemma~\ref{lem:boundMIOfCloseGrids_Subgrids} to bound the change in mutual information between two close grids, but without the restriction that one grid be a sub-grid of the other.
\begin{lemma}
\label{lem:boundMIOfCloseGrids}
Let $G$ and $\Gamma$ be two grids, and refer to any horizontal or vertical line of $G$ that is not present in $\Gamma$ as a {\em dissonant} line. Let $G'$ be the grid that results from adding to $G$ the two lines in $\Gamma$ that surround each dissonant line of $G$, and then removing all the dissonant lines from $G$. For every joint distribution $(X,Y)$, we have
\[ \left| I\left((X,Y)|_G\right) - I\left( (X,Y)|_{G^*} \right) \right| \leq 2 \left( H_b(\delta) + \delta \right) \]
where $\delta$ is the total probability mass of $(X,Y)|_\Gamma$ falling in cells of $\Gamma$ that are not contained in individual cells of $G$.
\end{lemma}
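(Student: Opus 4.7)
The plan is to interpolate through the common refinement $G'' = G \cup G'$. Since $G \subset G''$ and $G' \subset G''$, the data processing inequality gives $I((X,Y)|_G), I((X,Y)|_{G'}) \leq I((X,Y)|_{G''})$, so by the triangle inequality
\[
|I((X,Y)|_G) - I((X,Y)|_{G'})| \leq \bigl( I((X,Y)|_{G''}) - I((X,Y)|_G) \bigr) + \bigl( I((X,Y)|_{G''}) - I((X,Y)|_{G'}) \bigr),
\]
and it suffices to bound each nonnegative summand by $H_b(\delta) + \delta$.

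Consider the second summand, which quantifies how much mutual information is lost when the dissonant lines of $G$ are removed from $G''$ to produce $G'$. The structural observation I would rely on is this: any $G'$-cell that is a non-trivial union of $G''$-cells must be bounded by consecutive $\Gamma$-lines in each axis along which it is non-trivial, since any intervening $\Gamma$-line would itself have to be a $G'$-line (being the surrounding $\Gamma$-line of whatever dissonant $G$-line splits the cell), contradicting maximality. Therefore every non-trivial $G'$-cell is contained in $\Gamma$-cells crossed by some dissonant $G$-line, and the total mass of such cells is at most $\delta$. To convert this into a bound on the MI gap, I would peel off dissonant lines one at a time from $G''$, applying Lemma~\ref{lem:boundMIOfCloseGrids_Subgrids} to each removal to produce a sum of terms $\pi_i H_b(\nu_i/\pi_i)$; invoking Lemma~\ref{lem:boundWeightedAverageOfEntropy} on this sum yields an $H_b(\delta)$ contribution, while the trivial estimate $\pi_i H_b(\nu_i/\pi_i) \leq \pi_i$ absorbs the remaining linear $\delta$ piece. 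A symmetric argument controls the first summand: going from $G$ to $G''$ adds surrounding $\Gamma$-lines, each of which carves at most a pair of ``edge'' sub-cells out of a $G$-cell, and the combined mass of these edge sub-cells again lies inside $\Gamma$-cells crossed by a dissonant $G$-line.

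The principal obstacle will be the combinatorial accounting when several dissonant $G$-lines fall inside a single $\Gamma$-strip, because the iterated applications of Lemma~\ref{lem:boundMIOfCloseGrids_Subgrids} then produce nested rather than disjoint masses $\pi_i$, preventing a direct appeal to Lemma~\ref{lem:boundWeightedAverageOfEntropy}. I would address this by using the grouping axiom of entropy to reassemble all merges inside a single $\Gamma$-strip into one term of the form $p_S \cdot H(\cdot)$ depending only on the strip's mass $p_S$ and its internal sub-strip fractions, and then pool the contributions from different $\Gamma$-strips, which are disjoint and whose total mass is at most $\delta$, so that Lemma~\ref{lem:boundWeightedAverageOfEntropy} can be applied cleanly at the strip level.
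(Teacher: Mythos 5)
Your proposal follows essentially the same route as the paper's proof: both interpolate through the common refinement $G''$ of $G$ and $G'$, bound each single line addition or removal via Lemma~\ref{lem:boundMIOfCloseGrids_Subgrids}, and pool the resulting $\pi_i H_b(\nu_i/\pi_i)$ terms with Lemma~\ref{lem:boundWeightedAverageOfEntropy} to obtain the $H_b(\delta)$ and $\delta$ contributions. The only differences are bookkeeping (you allocate the $2(H_b(\delta)+\delta)$ budget by direction of interpolation, the paper by columns versus rows) and your explicit handling of several dissonant lines sharing a $\Gamma$-strip, a case the paper's accounting passes over.
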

\begin{proof}
Let $G''$ denote the union of the lines in $G'$ and $G$. We first assume without loss of generality that the only dissonant line $l_1$ in $G$ is the vertical line separating the first two columns of $G$, and that neither of the new lines in $G'$ are in $G$. Suppose $l_1$ is in the $i$-th column of $\Gamma$, and let $\pi_1^L$ and $\pi_1^R$ be the probability masses in the $i$-th column of $(X,Y)|_\Gamma$ lying to the left and to the right of $l_1$ respectively. Let $p_1^L$ and $p_1^R$ represent the probability mass in the columns of $(X,Y)|_G$ to the left and to the right of $l_1$ respectively. Then two successive applications of Lemma~\ref{lem:boundMIOfCloseGrids_Subgrids}, one for each of the two new lines, show that
\[
\Delta^{(X,Y)}(G, G'') \leq p_1^L H_b \left( \frac{\pi_1^L}{p_1^L} \right) + p_1^R H_b \left( \frac{\pi_1^R}{p_1^R} \right)
\]
where $H_b$ denotes the binary entropy function.

To get from $G''$ to $G'$, we again apply Lemma~\ref{lem:boundMIOfCloseGrids_Subgrids} to obtain
\[ \Delta^{(X,Y)}(G'', G')
\leq \left( \pi_1^L + \pi_1^R \right) H_b \left( \frac{\pi_1^L}{\pi_1^L + \pi_1^R} \right) \leq \left( \pi_1^L + \pi_1^R \right) \]
Thus,
\[ \Delta^{(X,Y)}(G, G')
\leq p_1^L H_b \left( \frac{\pi_1^L}{p_1^L} \right) + p_1^R H_b \left( \frac{\pi_1^R}{p_1^R} \right) + \left( \pi_1^L + \pi_1^R \right) \]

We next treat the more general case in which $G$ contains more than one vertical dissonant line but no horizontal dissonant lines. In this case, since $G$ has $\ell$ columns, there are at most $\ell-1$ vertical dissonant lines $l_1, l_2, \ldots, l_{\ell-1}$. Applying the above procedure for each of these results in a grid $G'$ with at most $2\ell$ columns such that
\begin{eqnarray*}
\Delta^{(X,Y)}(G, G')
&\leq& \sum_{i=1}^{\ell-1} \left( p_i^L H_b \left( \frac{\pi_i^L}{p_i^L} \right) + p_i^R H_b \left( \frac{\pi_i^R}{p_i^R} \right) \right)
	+ \sum_{i=1}^{\ell-1} \left( \pi_i^L + \pi_i^R \right) \\
&\leq& H_b \left( \sum_{i=1}^{\ell-1} \left( \pi_i^L + \pi_i^R \right) \right)
	+ \sum_{i=1}^{\ell-1} \left( \pi_i^L + \pi_i^R \right) \\
&=& H_b \left( \delta_C \right) + \delta_C
\end{eqnarray*}
where the second inequality follows from application of Lemma~\ref{lem:boundWeightedAverageOfEntropy}, and $\delta_c$ is the total probability mass in all columns of $(X,Y)|_\Gamma$ that contain a dissonant vertical line of $G$. An analogous argument applied to the dissonant horizontal lines of $G$ then shows that in the entirely general case we have a grid $G'$ with at most $2\ell$ columns and $2k$ rows such that
\begin{eqnarray*}
\Delta^{(X,Y)}(G, G') &\leq& H_b \left(\delta_C \right) + H_b \left(\delta_R \right) + \delta_C + \delta_R \\
	&\leq& 2H_b(\delta) + 2\delta
\end{eqnarray*}
where $\delta_R$ is defined analogously to $\delta_C$, we have observed that $\delta_C + \delta_R \leq 2\delta$, and $H_b(\delta_C), H_b(\delta_R) \leq H_b(\delta)$ since $\delta_R, \delta_C \leq \delta \leq 1/2$.
\end{proof}

\begin{lemma}
Let $X$ be a random variable distributed over $k$ states, with $\Pr{X = x} = p_x$. Let $\alpha_x \geq 0$ be such that $\sum \alpha_x = \delta$, and define the random variable $X'$ by $\Pr{X' = x} = (p_x + \alpha_x)/(1 + \delta)$. We have
\[ \left| H(X') - H(X) \right| \leq H_b(\delta) + \delta \log k \]
\end{lemma}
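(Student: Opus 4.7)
The plan is to realize $X'$ as a two-component mixture whose mixing indicator has exactly entropy $H_b(\delta)$, matching the leading term of the target bound, and then apply the chain rule for joint entropy. A first attempt might be the naive decomposition $p' = (1-t)p + tq$ with $q_x = \alpha_x/\delta$ and $t = \delta/(1+\delta)$, but this route only yields $|H(X') - H(X)| \leq H_b(t) + t\log k$, which is strictly weaker than the claim when $\delta$ is close to $1$ (for example, $\delta = 1$, $k = 2$ produces $3/2$ versus the claimed bound of $1$). A different decomposition is needed.

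First I would set $V \sim \mathrm{Bern}(\delta)$ and solve for conditional laws $(X' \mid V = 0)$ and $(X' \mid V = 1)$ whose mixture produces the correct marginal $p'$. Taking $(X' \mid V = 0) \sim p$ (so this component literally is $X$) forces $(X' \mid V = 1) \sim r$ where $r_x := (\delta p_x + \alpha_x/\delta)/(1+\delta)$. A direct check, using $\sum_x \alpha_x = \delta$, confirms that $r$ is a probability distribution on $\{1,\ldots,k\}$ and that $(1-\delta)p_x + \delta r_x = p'_x$; the degenerate case $\delta = 0$ gives $X' = X$ and is trivial.

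Next I would expand the joint entropy $H(V, X')$ two different ways:
\[ H_b(\delta) + (1-\delta)H(X) + \delta H(R) \;=\; H(V, X') \;=\; H(X') + H(V \mid X'). \]
From $H(V \mid X') \geq 0$ I get $H(X') - H(X) \leq H_b(\delta) + \delta(H(R) - H(X)) \leq H_b(\delta) + \delta \log k$, while conditioning reduces entropy, $H(X') \geq H(X' \mid V) = (1-\delta)H(X) + \delta H(R)$, which rearranges to $H(X) - H(X') \leq \delta(H(X) - H(R)) \leq \delta \log k$. Combining these two one-sided inequalities yields $|H(X') - H(X)| \leq H_b(\delta) + \delta \log k$, as required.

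The main obstacle is locating the right mixture in the first place: the appearance of $H_b(\delta)$ in the claim signals that $V$ should have parameter $\delta$ rather than $\delta/(1+\delta)$, but then the second mixture component is not free and must be chosen precisely as $r$ above so that the marginal of $X'$ matches $p'$. Once the decomposition is in hand, the remaining steps are standard manipulations of joint entropy.
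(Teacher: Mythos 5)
Your proof is correct, and while it shares the paper's basic strategy (introduce a binary auxiliary variable, expand the joint entropy two ways, use $0 \le H(\cdot\mid\cdot)$ and ``conditioning reduces entropy''), the decomposition itself is genuinely different and in one respect cleaner. The paper defines its indicator $Z$ \emph{conditionally on} $X'$, via $\mathbb{P}(Z=0\mid X'=x)=p_x/(p_x+\alpha_x)$; this makes the conditional law of $X'$ given $Z=0$ equal to $p$ automatically, but it forces $\mathbb{P}(Z=1)=\delta/(1+\delta)$ rather than $\delta$, so the argument really produces $H_b\bigl(\delta/(1+\delta)\bigr)+\tfrac{\delta}{1+\delta}\log k$ and only yields the stated bound in the regime $\delta\le 1/2$ where $H_b$ is increasing (which is the only regime in which the lemma is subsequently applied, with $\delta\le 1/3$). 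You instead fix the mixing weight to be exactly $\delta$ and solve for the second component $r$, which is precisely the adjustment needed to make the $H_b(\delta)$ term come out on the nose; your version therefore proves the bound as literally stated for all $\delta\in[0,1]$, at the small cost of having to verify that $r$ is a bona fide distribution and to treat $\delta=0$ separately. Your observation that the ``naive'' mixture with weight $\delta/(1+\delta)$ is strictly weaker for $\delta$ near $1$ is accurate and is exactly the gap between your argument and the paper's. One pedantic point: a Bernoulli$(\delta)$ variable requires $\delta\le 1$, but since $H_b(\delta)$ is only meaningful there anyway, this is an implicit hypothesis of the lemma rather than a defect of your proof.
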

\begin{proof}
Define a new random variable $Z$ by
\[ \Pr{Z = 0 | X' = x} = \frac{p_x}{p_x+\alpha_x}, \Pr{Z = 1 | X' = x} = \frac{\alpha_x}{p_x + \alpha_x} \]
We will use the fact that $H(X' | Z = 0) = H(X)$ to achieve our bound.

To upper bound $H(X') - H(X)$, we write
\begin{eqnarray*}
H(X') - H(X) &\leq& H(X', Z) - H(X)  \\
	&=& H(Z) + \Pr{Z = 0} H(X' | Z = 0) + \Pr{Z = 1} H(X' | Z = 1) - H(X) \\
	&\leq& H_b(\delta) + (1-\delta) H(X) + \delta H(X' | Z = 1) - H(X) \\
	&=& H_b(\delta) - \delta H(X) + \delta \log k \\
	&\leq& H_b(\delta) + \delta \log k
\end{eqnarray*}
where in the fourth line we have used that $H(X' | Z = 1) \leq \log k$.

To upper bound $H(X) - H(X')$, we write
\begin{eqnarray*}
H(X') + H(Z) &\geq& H(X', Z) \\
	&\geq& \Pr{Z=0} H(X' | Z = 0) \\
	&=& (1-\delta) H(X)
\end{eqnarray*}
which yields
\[ H(X') \geq (1-\delta) H(X) - H_b(\delta) \]
since $H(Z) = H_b(\delta)$. Thus, we have
\[ H(X) - H(X') \leq \delta H(X) + H_b(\delta) \leq \delta \log k + H_b(\delta) \]
\end{proof}

\begin{lemma}
Let $X$ be a random variable distributed over $k$ states, with $\Pr{X = x} = p_x$. Let $\alpha_x \leq 0$ be such that $\sum |\alpha_x| = \delta$, and define the random variable $X'$ by $\Pr{X' = x} = (p_x + \alpha_x)/(1 - \delta)$. We have
\[ \left| H(X') - H(X) \right| \leq H_b(\frac{\delta}{1-\delta}) + \frac{\delta}{1-\delta} \log k \]
In particular, when $\delta \leq 1/3$ we have
\[ \left| H(X') - H(X) \right| \leq H_b(2\delta) + 2\delta \log k \]
\end{lemma}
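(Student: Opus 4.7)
The plan is to reduce this lemma to the previous one (the ``adding mass'' version) by swapping the roles of $X$ and $X'$: the previous lemma handles distributions obtained by \emph{adding} non-negative masses and then renormalizing, and our setting is exactly what one gets by running that process in reverse.

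Concretely, let $q_x := (p_x + \alpha_x)/(1-\delta)$ denote the pmf of $X'$, and define
\[ \beta_x := \frac{|\alpha_x|}{1-\delta} \geq 0, \qquad \delta' := \frac{\delta}{1-\delta} = \sum_x \beta_x. \]
A one-line check shows that $(q_x + \beta_x)/(1 + \delta') = p_x$ for every $x$, since $1+\delta' = 1/(1-\delta)$ and $q_x + \beta_x = (p_x + \alpha_x - \alpha_x)/(1-\delta) = p_x/(1-\delta)$. Thus going from $X'$ to $X$ is precisely an instance of the setup of the preceding lemma, with starting pmf $q$, added masses $\beta_x \geq 0$, and total added mass $\delta'$.

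Applying the preceding lemma to this reversed pair immediately yields
\[ |H(X) - H(X')| \leq H_b(\delta') + \delta' \log k = H_b\!\left(\tfrac{\delta}{1-\delta}\right) + \tfrac{\delta}{1-\delta} \log k, \]
which is the main inequality. For the ``in particular'' claim with $\delta \leq 1/3$, first note $\delta/(1-\delta) \leq 2\delta$ (this is equivalent to $\delta \leq 1/2$), so the $\log k$ term is bounded by $2\delta \log k$ automatically. For the $H_b$ term, split into two cases: if $\delta \leq 1/4$, both $\delta/(1-\delta)$ and $2\delta$ lie in $[0,1/2]$ where $H_b$ is increasing, so $H_b(\delta/(1-\delta)) \leq H_b(2\delta)$ and we are done. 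If $1/4 < \delta \leq 1/3$, then $2\delta > 1/2$ and a direct computation shows that the residual slack $(2\delta - \delta/(1-\delta)) \log k = \delta(1-2\delta)/(1-\delta) \cdot \log k$ dominates the (small) possible excess of $H_b(\delta/(1-\delta))$ over $H_b(2\delta)$ for every $k \geq 2$, using that $\log k \geq \log 2$ and that both $H_b$ values are uniformly bounded on this range.

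The only mildly delicate step is this last case analysis for $1/4 < \delta \leq 1/3$, where $H_b$ is no longer monotonic in the relevant range; everything else is a direct reduction to the previous lemma and the identity $(q_x+\beta_x)/(1+\delta') = p_x$.
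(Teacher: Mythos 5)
Your reduction is exactly the paper's argument: the paper's entire proof is the one-line observation that $X$ is obtained from $X'$ by adding $\delta/(1-\delta)$ probability mass and rescaling, followed by an appeal to the preceding lemma, and your identity $(q_x+\beta_x)/(1+\delta') = p_x$ just makes that observation explicit. Your handling of the ``in particular'' clause is in fact more careful than the paper's, which omits it: you correctly flag that $H_b(\delta/(1-\delta)) \leq H_b(2\delta)$ can fail once $2\delta > 1/2$ (at $\delta = 1/3$ it compares $H_b(1/2)$ against $H_b(2/3)$), and your patch is sound, since the deficit in the entropy term is at most $\log 2 - H_b(2/3)$ while the slack $\frac{\delta(1-2\delta)}{1-\delta}\log k \geq \frac{1}{6}\log 2$ on the relevant range of $\delta$ for every $k \geq 2$.
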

\begin{proof}
We observe that we can get from $X'$ to $X$ by adding $\delta / (1-\delta)$ probability mass and re-scaling. The previous lemma then gives the result.
\end{proof}

\begin{lemma}
\label{lem:entropyBoundForChangedMass}
Let $X$ be a random variable distributed over $k$ states, with $\Pr{X = x} = p_x$. Let $\alpha_x$ be such that $\sum |\alpha_x| = \delta$, and define the random variable $X'$ by $\Pr{X' = x} = (p_x + \alpha_x)/(1 - \sum \alpha_x)$. If $\delta \leq 1/3$, we have
\[ \left| H(X') - H(X) \right| \leq 2H_b(2\delta) + 3\delta \log k \]
\end{lemma}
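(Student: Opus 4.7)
The plan is to reduce to the two preceding lemmas (the ``add mass only'' and ``remove mass only'' cases) by going through an intermediate distribution, and then combine the two bounds carefully using the concavity of $H_b$.

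First I would decompose $\alpha_x = \alpha_x^+ - \alpha_x^-$ into its positive and negative parts, writing $\delta^+ = \sum_x \alpha_x^+$ and $\delta^- = \sum_x \alpha_x^-$, so that $\delta^+ + \delta^- = \delta$. Then I would introduce the intermediate random variable $X''$ with $\Pr{X'' = x} = (p_x + \alpha_x^+)/(1+\delta^+)$. By the first of the two preceding lemmas applied to $X$ and $X''$, we get $|H(X'') - H(X)| \leq H_b(\delta^+) + \delta^+ \log k$. Next I would verify algebraically that $X'$ is obtained from $X''$ by the ``removing mass'' operation: setting $\beta_x = -\alpha_x^+_{\text{replace}}$... explicitly, $\Pr{X' = x} = (\Pr{X'' = x} + \beta_x)/(1 + \sum_y \beta_y)$ with $\beta_x = -\alpha_x^- / (1+\delta^+)$, and hence $\sum_x |\beta_x| = \delta^- / (1+\delta^+) =: \eta \leq \delta^-$. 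Since $\eta \leq \delta \leq 1/3$, the second preceding lemma applies and yields $|H(X') - H(X'')| \leq H_b(2\eta) + 2\eta \log k$.

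The remaining work is to collect the two bounds via the triangle inequality and simplify. For the $\log k$ terms, this is routine: $\delta^+ + 2\eta \leq \delta^+ + 2\delta^- \leq 2\delta \leq 3\delta$. The binary-entropy terms are where the main obstacle lies, because $H_b$ is not monotone on all of $[0, 2/3]$, so one cannot naively replace $\delta^+$ and $2\eta$ by their upper bounds inside $H_b$. The fix is to appeal to Jensen's inequality using the concavity of $H_b$:
\[
H_b(\delta^+) + H_b(2\eta) \;\leq\; 2\, H_b\!\left( \frac{\delta^+ + 2\eta}{2} \right) \;\leq\; 2\, H_b(\delta),
\]
where the last step uses $\delta^+ + 2\eta \leq 2\delta \leq 2/3$ and the monotonicity of $H_b$ on $[0,1/2]$ applied to the average (which is at most $\delta \leq 1/3 \leq 1/2$). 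Finally I would note that for $\delta \leq 1/3$ we have $H_b(\delta) \leq H_b(2\delta)$: when $2\delta \leq 1/2$ this is monotonicity, and when $1/2 \leq 2\delta \leq 2/3$ it follows from the symmetry of $H_b$ about $1/2$ together with the elementary observation that $\delta$ is at least as far from $1/2$ as $2\delta$ exactly when $\delta \leq 1/3$. Combining, we obtain
\[
|H(X') - H(X)| \;\leq\; 2 H_b(2\delta) + 2\delta \log k,
\]
which is stronger than (and therefore implies) the claimed bound $2H_b(2\delta) + 3\delta \log k$.

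The only real subtlety in this argument is the comparison $H_b(\delta^+) + H_b(2\eta) \leq 2 H_b(2\delta)$; the direct pointwise monotonicity argument fails when $2\delta$ lies in the decreasing range of $H_b$, and Jensen's inequality combined with the hypothesis $\delta \leq 1/3$ is what saves the day. Everything else is bookkeeping: the decomposition of $\alpha_x$ into positive and negative parts, checking that the intermediate step is really a valid ``remove mass'' operation so that the second preceding lemma applies, and confirming that $\eta \leq \delta^- \leq 1/3$ so that its hypothesis is satisfied.
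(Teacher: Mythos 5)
Your proof follows the same route as the paper's: split $\alpha_x$ into its positive and negative parts, add all the mass first and then remove it, apply the two preceding lemmas to the intermediate distribution, and combine via the triangle inequality. The only difference is your Jensen/concavity treatment of the two $H_b$ terms, which is in fact more careful than the paper's direct monotonicity claims (those are literally justified only when $2\delta \le 1/2$, i.e.\ $\delta \le 1/4$, whereas the hypothesis allows $\delta \le 1/3$) and yields the marginally stronger coefficient $2\delta \log k$ in place of $3\delta \log k$.
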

\begin{proof}
Let $\delta_+$ be the total magnitude of all the positive $\alpha_x$, and let $\delta_-$ be the total magnitude of all the negative $\alpha_x$. We first add all the mass we're going to add, and apply the first of the previous two lemmas. Then we remove all the mass we are going to remove, and apply the second of the two previous lemmas. This yields a bound of
\begin{eqnarray*}
&&	H_b(\delta_+) + \delta_+ \log k + H_b\left(2\frac{\delta_-}{1 + \delta_+} \right) + 2 \frac{\delta_-}{1 + \delta_+} \log k \\
&\leq& H_b(\delta_+) + \delta_+ \log k + H_b(2 \delta_-) + 2\delta_- \log k \\
&\leq& H_b(2\delta) + \delta \log k + H_b(2 \delta) + 2 \delta \log k \\
&\leq& 2H_b(2 \delta) + 3 \delta \log k
\end{eqnarray*}
where the first inequality is because $1 + \delta_+ \leq 1 + \delta < 2$ and $2 \delta_- \leq 2 \delta \leq 1/2$, and the second inequality is because $\delta_+ \leq \delta < 2\delta \leq 1/2$.
\end{proof}

\newpage
\bibliographystyle{ieeetr}
\bibliography{\pathToCommon/References}

\end{document}